\documentclass[letterpaper,11pt]{article}
\pdfoutput=1
\usepackage{microtype}
\usepackage[letterpaper,margin=1in]{geometry}
\usepackage[UKenglish]{babel}
\usepackage[numbers,sort&compress]{natbib}
\usepackage{color}
\usepackage{doi}
\usepackage{latexsym}
\usepackage{amsmath}
\usepackage{amssymb}
\usepackage{amsthm}
\usepackage[all,warning]{onlyamsmath}
\usepackage{enumitem}
\usepackage{xparse}
\usepackage{mathtools}
\usepackage{graphicx}
\usepackage[basic]{complexity}
\usepackage[small]{caption}
\usepackage{wrapfig}
\usepackage{hyperref}

\theoremstyle{plain}
\newtheorem{theorem}{Theorem}
\newtheorem{lemma}[theorem]{Lemma}
\newtheorem{proposition}[theorem]{Proposition}

\newtheorem{claim}[theorem]{Claim}

\theoremstyle{definition}
\newtheorem{definition}[theorem]{Definition}

\theoremstyle{remark}

\setlist[itemize]{label=--}
\setlist[enumerate]{label=(\arabic*),labelindent=\parindent,leftmargin=*}

\DeclarePairedDelimiter\braces{\{}{\}}
\NewDocumentCommand\set{O{}mg}{\ensuremath{\braces[#1]{#2\IfNoValueTF{#3}{}{\,:\,#3}}}}

\DeclareMathOperator{\indeg}{in-deg}
\DeclareMathOperator{\outdeg}{out-deg}
\DeclareMathOperator{\dist}{dist}

\DeclareMathOperator{\mypoly}{poly}

\DeclareMathOperator{\diag}{diag}

\newcommand{\N}{\mathbb{N}}
\newcommand{\Z}{\mathbb{Z}}

\newclass{\lcl}{LCL}
\newclass{\local}{LOCAL}

\newcommand{\namedref}[2]{\hyperref[#2]{#1~\ref*{#2}}}
\newcommand{\sectionref}[1]{\namedref{Section}{#1}}

\newcommand{\theoremref}[1]{\namedref{Theorem}{#1}}

\newcommand{\figureref}[1]{\namedref{Figure}{#1}}
\newcommand{\lemmaref}[1]{\namedref{Lemma}{#1}}

\newenvironment{myabstract}
               {\list{}{\listparindent 1.5em%
                        \itemindent    \listparindent
                        \leftmargin    1cm
                        \rightmargin   1cm
                        \parsep        0pt}%
                \item\relax}
               {\endlist}

\newenvironment{mycover}
               {\list{}{\listparindent 0pt
                        \itemindent    \listparindent
                        \leftmargin    1cm
                        \rightmargin   1cm
                        \parsep        0pt}%
                \raggedright
                \item\relax}
               {\endlist}

\newcommand{\myemail}[1]{\,$\cdot$\, {\small #1}}
\newcommand{\myaff}[1]{\,$\cdot$\, {\small #1}\par\medskip}

\hypersetup{
  colorlinks=true,
  linkcolor=black,
  citecolor=black,
  filecolor=black,
  urlcolor=[rgb]{0,0.1,0.5},
  pdftitle={LCL problems on grids},
  pdfauthor={}
}

\begin{document}

\newgeometry{margin=1in,bottom=0.2in}

\begin{mycover}
{\huge\bfseries\boldmath \lcl{} problems on grids \par}
\bigskip

\textbf{Sebastian Brandt}
\myemail{sebastian.brandt@tik.ee.ethz.ch}
\myaff{ETH Z\"urich}

\textbf{Juho Hirvonen}
\myemail{juho.hirvonen@aalto.fi}
\myaff{IRIF, CNRS and University Paris Diderot}

\textbf{Janne H.\ Korhonen}
\myemail{janne.h.korhonen@aalto.fi}
\myaff{Aalto University}

\textbf{Tuomo Lempi\"ainen}
\myemail{tuomo.lempiainen@aalto.fi}
\myaff{Aalto University}

\textbf{Patric R.\ J.\ \"Osterg\aa{}rd}
\myemail{patric.ostergard@aalto.fi}
\myaff{Aalto University}

\textbf{Christopher Purcell}
\myemail{christopher.purcell@aalto.fi}
\myaff{Aalto University}

\textbf{Joel Rybicki}
\myemail{joel.rybicki@helsinki.fi}
\myaff{University of Helsinki}

\textbf{Jukka Suomela}
\myemail{jukka.suomela@aalto.fi}
\myaff{Aalto University}

\textbf{Przemys\l{}aw Uzna\'nski}
\myemail{przemyslaw.uznanski@inf.ethz.ch}
\myaff{ETH Z\"urich}
\end{mycover}

\medskip
\begin{myabstract}
\noindent\textbf{Abstract.}
\lcl{}s or locally checkable labelling problems (e.g.\ maximal independent set, maximal matching, and vertex colouring) in the \local{} model of computation are very well-understood in cycles (toroidal 1-dimensional grids): every problem has a complexity of $O(1)$, $\Theta(\log^* n)$, or $\Theta(n)$, and the design of optimal algorithms can be fully automated.

This work develops the complexity theory of \lcl{} problems for toroidal 2-dimensional grids. The complexity classes are the same as in the 1-dimensional case: $O(1)$, $\Theta(\log^* n)$, and $\Theta(n)$. However, given an \lcl{} problem it is undecidable whether its complexity is $\Theta(\log^* n)$ or $\Theta(n)$ in 2-dimensional grids.

Nevertheless, if we correctly guess that the complexity of a problem is $\Theta(\log^* n)$, we can completely automate the design of optimal algorithms. For any problem we can find an algorithm that is of a normal form $A' \circ S_k$, where $A'$ is a finite function, $S_k$ is an algorithm for finding a maximal independent set in $k$th power of the grid, and $k$ is a constant.

Finally, partially with the help of automated design tools, we classify the complexity of several concrete \lcl{} problems related to colourings and orientations. 
\end{myabstract}

\vfill
\noindent\makebox[\textwidth]{\includegraphics[width=8.1in]{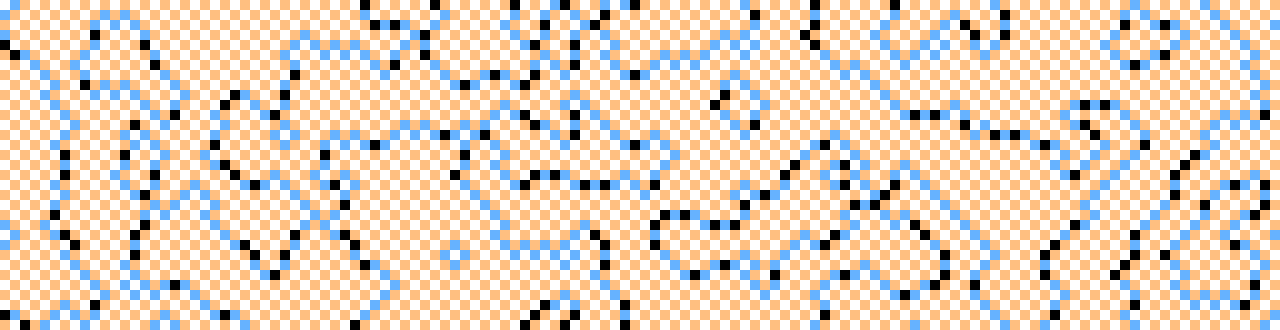}}

\thispagestyle{empty}
\setcounter{page}{0}
\newpage
\restoregeometry

\section{Introduction}\label{sec:introduction}

\subsection{Problem setting: \lcl{} problems on grids}

\begin{wrapfigure}{r}{50mm}
\vspace{-4mm}%
\hspace{1mm}%
\includegraphics[width=49mm,page=1]{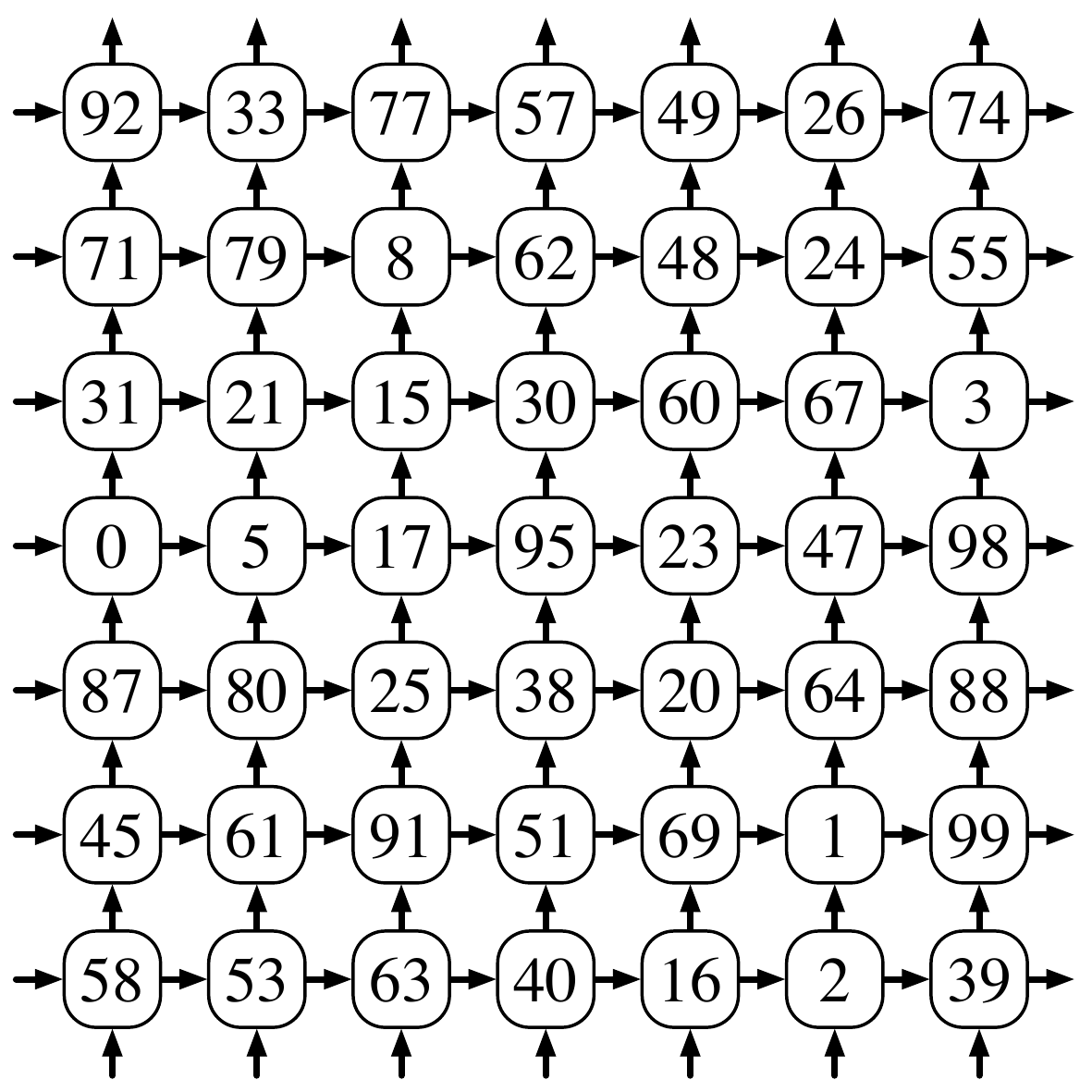}
\vspace{-10mm}%
\end{wrapfigure}

\paragraph{Grids.} In this work, we study distributed algorithms in a setting where the underlying input graph is a grid. Specifically, we consider the complexity of locally checkable labelling problems, or \lcl{} problems, in the standard \local{} model of distributed complexity, and consider graphs that are toroidal two-dimensional $n \times n$ grids with a consistent orientation; we focus on the two-dimensional case for concreteness, but most of our results generalise to $d$-dimensional grids of arbitrary dimensions.

This setting occupies a middle ground between the well-understood directed $n$-cycles~\cite{naor95what,chang16exponential}, where all solvable \lcl{} problems are known to have deterministic time complexity either $O(1)$, $\Theta(\log^* n)$ or $\Theta(n)$, and the more complicated setting of general $n$-vertex graphs, where intermediate problems with time complexities such as $\Theta(\log n)$ are known to exist, even for bounded-degree graphs. Grid-like systems with local dynamics also occur frequently in the study of real-world phenomena. However, grids have so far not been systematically studied from a distributed computing perspective.

\paragraph{\boldmath \local{} model and \lcl{} problems.} In the \local{} model of distributed computing, nodes are labelled with unique numerical identifiers with $O(\log n)$ bits. A time-$t$ algorithm in this model is simply a mapping from radius-$t$ neighbourhoods to local outputs; equivalently, it can be interpreted as a message-passing algorithm in which the nodes exchange messages for $t$ synchronous rounds and then announce their local outputs.

\lcl{} problems are graph problems for which the feasibility of a solution can be \emph{verified} by checking the solution for each $O(1)$-radius neighbourhood; if all local neighbourhoods look valid, the solution is also globally valid. Examples of such problems include vertex colouring, edge colouring, maximal independent sets, and maximal matchings. We refer to \sectionref{sec:prelim} for precise definitions.

\paragraph{Example: colouring the grid.} To illustrate the type of questions we are interested in this work, consider $k$-colouring on $n \times n$ grids. For $k = 2$, the problem is inherently global with complexity $\Theta(n)$, while colouring any graph of maximum degree $\Delta=4$ with $\Delta + 1 = 5$ colours can be done in $O(\log^* n)$ rounds. But what about $k = 3$ and $k = 4$? In particular, does either of these have an intermediate (polylogarithmic) complexity, as is known to happen with $\Delta$-colouring on general bounded-degree graphs~\cite{panconesi95local,chang16exponential}? We will see that neither $3$-colouring nor $4$-colouring is intermediate on grids: $3$-colouring requires $\Theta(n)$ rounds, while $4$-colouring can be solved in $O(\log^* n)$ rounds.

\subsection{Results: classification and synthesis}

\paragraph{Classification.}
Our first contribution is a complete \emph{complexity classification} for \lcl{} problems on grids in the case of deterministic algorithms. That is, we show that any \lcl{} problem on $n \times n$ grids has one of the following time complexities, similarly to the case of cycles:
\begin{enumerate}[label=(\alph*),noitemsep]
\item $O(1)$ (``trivial'' problem)
\item $\Theta(\log^* n)$ (``local'' problem)
\item $\Theta(n)$ (``global'' problem)
\end{enumerate}
In particular, there are no problems of an intermediate complexity, such as $\Theta(\log n)$.

The separation between $O(1)$ and $\Omega(\log^* n)$ follows from the work of \citet{naor95what} (see Appendix~A of \citet{hierarchy}), and obviously all problems can be solved in $O(n)$ rounds on $n \times n$ grids (assuming they can be solved at all). The interesting part is the separation between (b) and (c); here we extend the recent speed-up lemma of \citet{chang16exponential} to grids.

\paragraph{Undecidability of classification.}
It is known that the classification for \lcl{} problems on cycles is decidable, that is, there is an algorithm that decides to which complexity class a given \lcl{} problem belongs (see \sectionref{sec:cycles}). We show that two-dimensional grids are fundamentally different from cycles in this regard: even if we have the promise that a given \lcl{} problem has complexity of either $\Theta(\log^* n)$ or $\Theta(n)$, distinguishing between these cases is undecidable.

\paragraph{\boldmath Algorithm synthesis for $\Theta(\log^* n)$ problems.} The undecidability result would seem to suggest that automating the design of distributed algorithms on $n \times n$ grids is essentially impossible. Surprisingly, this is not the case: we develop a \emph{synthesis algorithm} that, given a specification of an \lcl{} problem $P$ with complexity $O(\log^* n)$, produces an asymptotically optimal algorithm for $P$ on grids.  The caveat is that if the input problem $P$ is a global problem with complexity $\Theta(n)$, this algorithm cannot detect it and will never stop.

From a theory perspective, this means that for each \lcl{} problem $P$ we will only need 1 bit of advice---whether $P$ is $O(\log^* n)$ or $\Theta(n)$---and then we can find an optimal algorithm for solving~$P$: for $O(\log^* n)$ problems, we apply the synthesis algorithm, and for $\Theta(n)$ problems, brute force is optimal. From a practical perspective, we can use the synthesis algorithm as a one-sided oracle for understanding the complexity of \lcl{} problems on grids: if the synthesis produces an output, we have an optimal algorithm, and if it does not, we can conjecture that the problem in question might be inherently global.

\begin{figure}[b]
\centering
\includegraphics[width=\columnwidth,page=2]{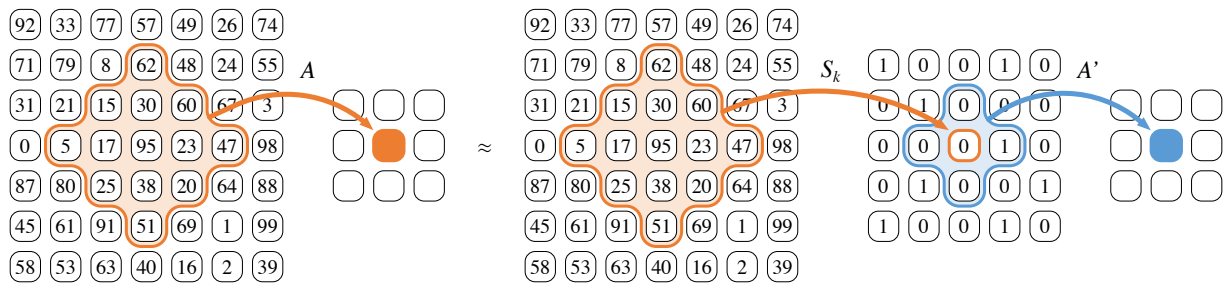}
\caption{Any sublinear-time algorithm $A$ can be normalised as $A' \circ S_k$, where $S_k$ is a problem-independent $O(\log^* n)$-time symmetry-breaking component and $A'$ is a problem-specific constant-time component.}\label{fig:normal-form}
\end{figure}

\paragraph{\boldmath Normal form for $\Theta(\log^* n)$ problems.} The algorithm synthesis is based on a result showing that every \lcl{} problem $P$ with complexity $\Theta(\log^* n)$ on $n \times n$ grids has an algorithm of a specific \emph{normal form}; see Figure~\ref{fig:normal-form}. That is, there is an algorithm $A$ for $P$ that has the form $A = A' \circ S_k$ for some constant $k$, where
\begin{itemize}
    \item $S_k$ is a problem-independent algorithm that finds a maximal independent set $I_k$ in the $k$th power of the $n \times n$ grid (we call these nodes ``anchors''), and
    \item $A'$ is a problem-dependent algorithm with running time $O(k)$ that takes as an input only the set of anchors $I_k$ and the global orientation of the grid.
\end{itemize}
Note that here only the part of finding the set of anchors takes $\Theta(\log^* n)$ time, and all of the remaining parts can be done in $O(1)$ time. In particular, the only problem-dependent part besides the constant $k$ is the finite function defining the algorithm $A'$; thus, the algorithm synthesis becomes a matter of searching through the finite-size space of possible functions.

\subsection{Results: upper and lower bounds for concrete \lcl{} problems}

Next, we turn our attention to concrete \lcl{} problems. In particular, we are interested in problem families defined for a range of parameters, so that it makes sense to ask where exactly is the border between local and global problems:
\begin{itemize}
  \item \emph{Vertex colouring.} The $k$-colouring problem is solvable in $O(\log^* n)$ rounds for $k \ge 4$ and global for $k \le 3$. 
  \item \emph{Edge colouring.} The $k$-edge colouring problem is solvable in $O(\log^* n)$ rounds for $k \ge 5$ and global for $k \le 4$.
  \item \emph{Edge orientations.} For a set $X \subseteq \{ 0,1,\dotsc,4 \}$, an $X$-orientation is an orientation of the edges such that for each node $v \in V$ we have $\indeg(v) \in X$. The problem is trivial if $2 \in X$. We show that if $\{0,1,3\} \subseteq X$ or $\{1,3,4\} \subseteq X$, the problem is solvable in $O(\log^* n)$ rounds, and otherwise it is global. 
\end{itemize}
The results on colourings can be generalised to $d$-dimensional grids. A $4$-colouring can be found in time $\Theta(\log^* n$) for any $d \ge 2$, while $3$-colouring is global. In the case of edge colouring, we show that a $(2d+1)$-colouring can be found in time $\Theta(\log^* n$) for any $d\ge 1$, while $2d$-colouring is global. Both of the upper bounds hold even without any orientation or dimensional information, while both of the lower bounds hold even with full information.

We remark that the techniques used in the vertex colouring results have been discovered before in the context of \emph{finitary colourings} of grids~\cite{finitarycol}; see \sectionref{sec:related-work} for more details.

\section{Related work}\label{sec:related-work}

\paragraph{\boldmath \lcl{} problems on cycles.} As we noted before, two-dimensional grids can be seen as a generalisation of the widely studied setting of cycles; indeed, \lcl{} problems were first studied on cycles in the distributed setting. Cole and Vishkin~\cite{cole86deterministic} showed that cycles can be 3-coloured in time $O(\log^* n)$, and Linial~\cite{linial92locality} showed that this is asymptotically optimal. This implies, via simple reductions, that many classical \lcl{} problems, such as maximal independent set and maximal matching, also have a complexity of $\Theta(\log^* n)$ on cycles.

\paragraph{\boldmath \lcl{} problems on graphs of bounded maximum degree.} Naor and Stockmeyer~\cite{naor95what} showed that there exists a non-trivial \lcl{} problem that can be solved in constant time: weak 2-colouring on graphs of odd degree. Many \lcl{} problems are known to either have complexity $\Theta(\log^* n)$~\cite{panconesi01some, barenboim14distributed, barenboim15sublinear, fraigniaud16local} or be global on graphs of bounded maximum degree. Until recently, no problems of an intermediate complexity were known. While Kuhn et al.~\cite{kuhn16local} gave a lower bound of $\min \{ \log \Delta / \log \log \Delta, \sqrt{\log n / \log \log n} \}$ for, among others, maximal independent set, this proof does not give an infinite family of graphs with a fixed maximum degree $\Delta$. Brandt et al.~\cite{brandt16lll} showed that sinkless orientation and $\Delta$-colouring have randomised complexity $\Omega(\log \log n)$, and Chang et al.~\cite{chang16exponential} proved that this implies a deterministic lower bound of $\Omega(\log n)$. These lower bounds provide the first examples of \lcl{} problems with provably intermediate time complexity. Ghaffari and Su~\cite{ghaffari17distributed} proved a matching upper bound for sinkless orientation; no tight bounds are known for $\Delta$-colouring, but there is a polylogarithmic upper bound due to Panconesi and Srinivasan~\cite{panconesi95local}.

\paragraph{\boldmath Complexity theory of \lcl{} problems.} \lcl{} problems were formally introduced by Naor and Stockmeyer~\cite{naor95what}. They showed that if there exists a constant-time algorithm for solving an \lcl{} problem $P$, then there exists an order-invariant constant-time algorithm for $P$, such that the algorithm only uses the relative order of unique identifiers given to the nodes. Their argument works for any time $t = o(\log^* n)$: a time-$t$ distributed algorithm implies a constant-time order-invariant algorithm; hence there are no \lcl{} problems with complexities strictly between $\omega(1)$ and $o(\log^* n)$.

Recently Chang et al.~\cite{chang16exponential} showed that there are further gaps in the time complexities of \lcl{} problems. They gave a speed-up lemma for simulating any deterministic $o(\log n)$-time algorithm in time $O(\log^* n)$ by computing new small and locally unique identifiers for the input graph. This implies that there are no \lcl{} problems with deterministic complexity $\omega(\log^* n)$ and $o(\log n)$. They also show that the deterministic complexity of an \lcl{} on instances of size $n$ is at most the randomised complexity on instances of size $2^{n^2}$. This implies a similar gap for randomised complexities between $\omega(\log^* n)$ and $o(\log \log n)$.

\paragraph{\boldmath \lcl{} problems in restricted graph families.} It appears that the complexity of \lcl{} problems specifically on grids has not been studied beyond the case of cycles. \lcl{} problems have been, however, studied on other restricted graph classes, such as graphs of bounded independence~\cite{kuhn05fast, gfeller07randomized, schneider10optimal, barenboim13edge-coloring}, bounded growth~\cite{schneider10symmetry} and bounded diversity~\cite{barenboim16deterministic}.

\paragraph{Existence of algorithms and algorithm synthesis.} The notion of automatic synthesis of algorithms has been around for a long time; for example, already in the 1950s Church proposed the idea of synthesising circuits~\cite{church57synthesis,vardi12church}. Since then synthesis of distributed and parallel protocols has become a well-established research area in the formal methods community~\cite{clarke82design,manna84synthesis,pnueli90distributed,attie01synthesis,kupferman01synthesizing,finkbeiner05uniform,bloem16synthesis}. However, synthesis has received considerably less attention in the distributed computing community, even though they have been used to discover e.g.\ novel synchronisation algorithms~\cite{bar-david03automatic,dolev16synthesis,bloem16synthesis} and local graph algorithms~\cite{rybicki15exact,hirvonen14local-maxcut}.

The synthesis of optimal distributed algorithms in general is often computationally hard and even undecidable. In the context of the \local{} model and \lcl{} problems, Naor and Stockmeyer~\cite{naor95what} show that simply deciding whether a given problem can be solved in constant time is undecidable; hence we cannot expect to completely automate the synthesis of asymptotically optimal distributed algorithms for \lcl{} problems in general graphs. This result holds even if we study non-toroidal two-dimensional grids, but it does not hold in toroidal grids. In essence, in toroidal grids only trivial problems are solvable in constant time, and as we will see, the interesting case is the time complexity of $O(\log^* n)$.

\paragraph{Other grid-like models.} While grids have not been studied from a distributed computing perspective, grid-like models with local dynamics have appeared in many different contexts:
\begin{itemize}
    \item \emph{Cellular automata}~\cite{neumann1966,gardner1970life,wolfram2002new} have been studied both as a primitive computational model, and as a model for various complex systems and emergent phenomena, e.g.\ in ecology, sociology and physics~\cite{kari2005theory,smith1976introduction,ganguly2003survey}.
    \item Various \emph{tiling models}~\cite{grunbaum87tilings} have connections to computability questions, such as the abstract \emph{Wang tilings}~\cite{wang61proving} and the variants of the \emph{abstract Tile Assembly Model (aTAM)}~\cite{winfree98algorithmic,doty12theory,patitz14introduction,woods15intrinsic} for DNA self-assembly.
\end{itemize}
However, the prior work of this flavour is usually interested in understanding the dynamics of a specific fixed process, or what kind of global behaviours can arise from fixed number of local states---in particular, whether the system is computationally universal. Our distributed complexity perspective to grid-like systems seems mostly novel, and we expect it to have implications in other fields. Applying an existing result of distributed computing to tiling models has been previously demonstrated by Sterling~\cite{sterling08limit}, who makes use of a weak-colouring lower bound by Naor and Stockmeyer~\cite{naor95what}.

\paragraph{Finitary colourings of grids.} Subsequently to the initial publication of this work, we have learned that the techniques in the $k$-colouring upper and lower bounds are essentially rediscoveries of prior work of \citet{finitarycol} in the context of \emph{finitary colourings of grids}. Very roughly speaking, this line of work concerns colouring the infinite $d$-dimensional grid $\mathbb{Z}^d$ using a specific type of random processes (\emph{factors}) with an independent and identically distributed random variable for each node; more generally, one can consider \emph{shifts of finite type}, which correspond to \lcl{} problems. In particular, \citet{finitarycol} study the \emph{coding radius} of factors, which is analogous to the running time of a distributed algorithm, and prove a separation between $3$-colouring and $4$-colouring. However, despite the fact that techniques seem to translate between finitary colourings and distributed complexity, it remains unclear how to directly translate results from one setting to the other in a black-box manner; for instance, can we derive the lower bound for $3$-colouring from the results of \citet{finitarycol}, and does our complexity classification imply answers to the open questions they pose?

\section{Preliminaries}\label{sec:prelim}

\paragraph{\boldmath \local{} model.}
In the \local{} model of distributed computing \cite{peleg00distributed,linial92locality}, we have a computer network that is represented as a graph $G = (V,E)$; each node is a computer and each edge is a bidirectional communication link. The computers collaborate in order to solve a graph problem related to the structure of the graph~$G$; note that here the same graph is both the topology of the computer network and the input graph.

Each node $v \in V$ is labelled with a unique identifier from the set $\{1,2,\dotsc,\mypoly(|V|)\}$. Each node has to produce its own part of the output: for example, if we are solving a graph colouring problem, each node has to output its own colour, and if our task is to find a maximal independent set $I$, each node has to output a binary label that indicates whether it is in set~$I$. This can be extended in a straightforward manner to edge labellings.

All nodes run the same deterministic algorithm. Computation proceeds in synchronous rounds. In each round, all nodes in parallel send messages (of an arbitrary size) to their neighbours, then the messages are propagated along the edges to the recipients, then all nodes in parallel receive messages from each of their neighbours, and finally all nodes update their local state. The running time of an algorithm is the number of communication rounds until all nodes have stopped and announced their local outputs.

Note that in a time-$t$ algorithm, each node can gather its radius-$t$ neighbourhood and choose its local output based on this information. In essence, a time-$t$ algorithm in the \local{} model is simply a mapping from radius-$t$ neighbourhoods to local outputs. Note that the neighbourhood contains not only the topology of the network but also the unique identifiers.

\paragraph{\boldmath \lcl{} problems.}

In distributed algorithms, the class of \lcl{} problems \cite{naor95what} plays a role somewhat analogous to the class \NP{} in centralised computing. Informally, problems in the class \lcl{} are precisely those problems that can be solved in \emph{constant time} with a \emph{nondeterministic} algorithm in the \local{} model: in \lcl{} problems all nodes can nondeterministically guess the solution and then verify it by checking that the solution looks consistent in all local neighbourhoods. The key question is which of the \lcl{} problems can be solved efficiently (e.g., in constant or near-constant time) with \emph{deterministic} algorithms. (Cf.: which problems in \NP{} are also in \P{}.)

More precisely, let $P$ be a graph problem that associates with each unlabelled input graph $G = (V,E)$ a set of feasible node labellings $P(G)$; here each $f\in P(G)$ is a mapping $f\colon V \to X$ for some set of output labels $X$. We say that $P$ is an \lcl{} problem if
\begin{enumerate}[noitemsep]
  \item the set of local outputs $X$ is a finite set of size $|X| = O(1)$,
  \item there is a constant $r = O(1)$ such that for any candidate labelling $f\colon V \to X$ we have $f\in P(G)$ if and only if each radius-$r$ neighbourhood is compatible with some $g\in P(G)$.
\end{enumerate}
Informally, for an \lcl{} problem, a solution is feasible if it looks like a feasible solution in all local neighbourhoods. Examples of such problems include $k$-vertex colouring for $k=O(1)$, maximal independent sets, and minimal dominating sets. Again, we can extend the definitions in a natural manner to edge labellings; hence also $k$-edge colourings for $k=O(1)$, maximal matchings, and minimal edge dominating sets can be interpreted as \lcl{} problems.

\paragraph{\boldmath Radius-$1$ \lcl{} problems.}
Above, parameter $r$ is called the \emph{checkability radius} or simply \emph{radius} of problem $P$. In bounded-degree graphs we can always define another \lcl{} problem $P'$ with radius $r' = 1$ that is equivalent to $P$ in the following sense: $P'$ can be solved in time $t$ in the \local{} model if and only if $P$ can be solved in time $t \pm O(1)$. In essence, the output labels in $P'$ are radius-$r$ neighbourhoods in $P$; given an algorithm for $P$ we can spend additional $r$ rounds to solve $P'$, and given an algorithm for $P'$, we can also directly solve $P$. Therefore we will often tacitly assume $r=1$, with the understanding that this will only influence additive constants in the running time.

\paragraph{Grid graphs.}
Unless otherwise stated, we will study graphs that are 2-dimensional toroidal $n \times n$ square grids. Define $G_n = (V_n, E_n)$, where $V_n = \{ (x,y) : 0 \le x,y < n \}$. We will use the shorthand $u = (x_u,y_u)$ for the coordinates of each node $u \in V_n$. The nodes do not have access to these coordinates. Two nodes $u$ and $v$ are connected by an edge if and only if $|x_u - x_v| + |y_u - y_v| = 1$, where all coordinates are modulo $n$. All edges are oriented in a consistent manner towards the larger coordinate, and labelled so that each node knows which edge points ``north'' (increasing $y$ coordinate), ``east'' (increasing $x$ coordinate), ``south'', and ``west''. By definition the grid wraps around in both dimensions, forming a torus. We will use the shorthands $V = V(G)$ and $E = E(G)$ for the node and edge sets, respectively, of $G$. We will assume that all nodes are given the value of $n$ as input.

\paragraph{On unsolvable problems.}

Many \lcl{} problems are unsolvable for some values of $n$. For example, there does not exist a $2$-colouring if $n$ is odd, and many problems are ill-defined for e.g.\ $n = 1$. Throughout this text we will usually assume that $n$ is sufficiently large so that the problem that we consider is meaningful. Problems for which there are infinitely many values of $n$ for which a solution does not exist (e.g.\ $2$-colouring) are regarded as global problems. Indeed, often the fact that solutions do not exist at all for some values of $n$ is a simple way of proving a lower bound of $\Omega(n)$, and such a bound holds even if we had a promise that $n$ is chosen so that a solution exists.

\paragraph{Notation.}

From now on, we write $G^{(k)}$ for the $k$th power of a graph~$G$. That is, $V(G^{(k)}) = V(G)$ and $E(G^{(k)}) = \{ \{u,v\} : \dist_G(u,v) \le k \}$. We denote the set $\{0,1,\ldots,k-1\}$ by $[k]$.

\section{Warm-up: directed cycles}\label{sec:cycles}

As a gentle introduction to our research questions, we will first have a look at \lcl{} problems in directed cycles (i.e., 1-dimensional grids). This case is completely understood by prior work, but we will present it from a new perspective: in the 1-dimensional case, any \lcl{} problem $P$ can be conveniently represented as a directed graph $H$. By studying elementary properties of graph $H$, we can directly deduce the computational complexity of problem $P$, and derive an asymptotically optimal algorithm for solving $P$---everything is decidable and algorithm synthesis is computationally tractable (see \figureref{fig:flexible}).

\begin{figure}
\centering
\includegraphics[page=3]{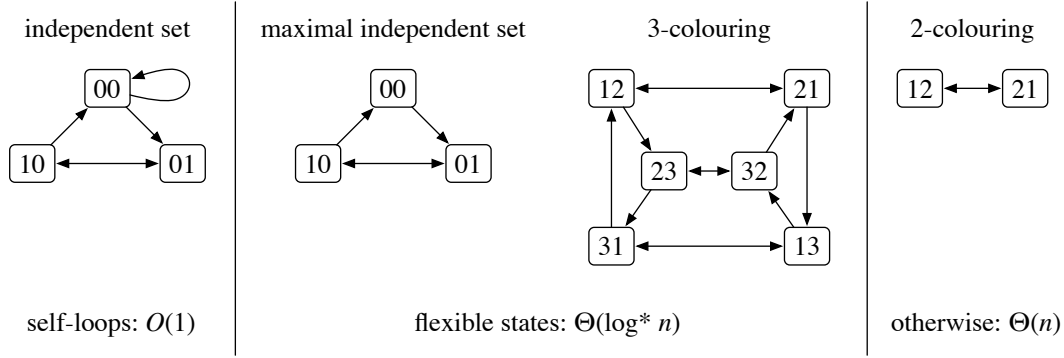}
\caption{\lcl{} problems for cycles can be represented as directed graphs. Here we have four \lcl{} problems with radius $r=1$; each node corresponds to a sequence of $2r=2$ output labels, and each edge corresponds to a sequence of $2r+1=3$ output labels. We can read the time complexity of the \lcl{} problem directly from the properties of the graph. For example, in the maximal independent set problem, state $00$ is flexible, as we can find walks from $00$ back to itself of lengths $3$ and $5$, and hence also of any length larger than $7$.}\label{fig:flexible}
\end{figure}

We construct an \emph{output neighbourhood graph} $H = (V,E)$ as follows. Problem $P$ can be interpreted as a set of feasible radius-$r$ local neighbourhoods $u_1 u_2 \dotsc u_{2r+1} \in P$. For every such neighbourhood, we will have an edge
$ ( u_1 u_2 \dotsc u_{2r},\, u_2 u_3 \dotsc u_{2r+1} ) \in E $
in graph $H$. For example, in the $3$-colouring problem, the sequence ``$132$'' is a feasible neighbourhood, and hence we will have an edge $(13, 32)$ in the graph; here e.g.\ ``$13$'' corresponds to a node with output $3$ that has a predecessor with output $1$ (see \figureref{fig:flexible}).
The key observation is that walks in graph $H$ correspond to feasible output labellings in problem $P$.

Now we say that a node $u \in V$ is {\em flexible} if there exists an integer $k$ such that for all $k'\geq k$, there exists a walk in $G$ of length $k'$ that starts and ends at $u$. We call the smallest such $k$ the {\em flexibility} of $u$. It is clear that $u$ is flexible if and only if there are circuits $C,C'$ containing $u$ whose respective lengths are coprime.

\begin{claim}
The complexity of $P$ is
$O(1)$ if some node of $H$ has a self-loop;
otherwise $\Theta(\log^* n)$ if some node of $H$ is flexible;
and otherwise $\Theta(n)$.
\end{claim}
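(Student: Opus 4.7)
My plan is to establish the three cases by combining constructive upper bounds with symmetry-based lower bounds. For the first case, if some vertex $u = u_1 u_2 \cdots u_{2r}$ of $H$ carries a self-loop, then the defining edge relation of $H$ forces $u_1 = u_2 = \cdots = u_{2r} = x$ for a common label $x$, and the constant labelling assigning $x$ to every node is valid; this is an $O(1)$ algorithm. Conversely, if no self-loop exists, the matching $\Omega(\log^* n)$ lower bound follows from the Naor--Stockmeyer style argument: any $o(\log^* n)$-time algorithm can be converted into a constant-time order-invariant algorithm, and on a cycle labelled with identifiers in increasing order such an algorithm produces the same label $x$ on a long run of consecutive nodes, forcing a vertex $xx\cdots x$ to exist in $H$ and carry a self-loop.

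For the second case, suppose no self-loop exists but some vertex $u$ is flexible with flexibility $k$. I would compute a ruling set $R$ of nodes in the cycle with consecutive distances in $[k, 2k]$ in time $O(\log^* n)$ using the classical Cole--Vishkin / Linial symmetry-breaking routine. Each node in $R$ serves as an anchor to which we assign the labels prescribed by $u$, and each inter-anchor segment of length $\ell \in [k, 2k]$ is filled in deterministically by a walk of length $\ell$ from $u$ back to itself in $H$, which exists by the definition of flexibility. The post-processing is constant time, yielding an overall complexity of $O(\log^* n)$; optimality follows from the $\Omega(\log^* n)$ bound established in the previous case, since a flexible vertex (having walks of coprime lengths back to itself) cannot be a self-loop state unless $r=0$.

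For the third case, suppose no vertex of $H$ is flexible. A vertex is flexible precisely when the $\gcd$ of lengths of closed walks through it equals $1$; hence every strongly connected component of $H$ has period at least $2$. Since any closed walk in a directed graph stays inside a single strongly connected component, the cycle lengths $n$ admitting a valid labelling are exactly those divisible by the period of some component, and the complement of this set is infinite because the periods are bounded integers $\ge 2$. By the paper's convention for unsolvable instances, the problem is therefore global, giving the $\Omega(n)$ bound. The main technical obstacle is to make the bound effective even under the promise that solutions exist; I would handle this by a splicing argument that takes a valid output of a hypothetical $o(n)$-time algorithm on a cycle $C_N$ of valid length and, by indistinguishability of radius-$t$ neighbourhoods, transplants a long common segment into a cycle $C_{N'}$ of a length forbidden by the period constraints, producing a valid labelling where none exists and yielding the desired contradiction.
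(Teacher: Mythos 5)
Your argument follows the paper's proof almost exactly: a self-loop forces a constant feasible labelling (and Naor--Stockmeyer gives the $\Omega(\log^* n)$ lower bound otherwise); for a flexible state $u$ with flexibility $k$ one places $u$ on a ruling set / MIS of $G^{(k)}$ and fills each gap with a closed walk of the appropriate length; and for the global case one observes that the spacing between repeated occurrences of a non-flexible state is rigid. Your treatment of the third case via periods of strongly connected components is in fact more explicit than the paper's one-line argument, and combined with the paper's convention that problems unsolvable for infinitely many $n$ count as global, it already suffices.

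One caveat on your splicing step for the promise version of the $\Omega(n)$ bound: transplanting a segment of a valid output into a cycle $C_{N'}$ of \emph{forbidden} length does not yield a contradiction as stated, because the nodes are given $n$ as input and the algorithm carries no correctness obligation on unsolvable instances; moreover, purely combinatorial cut-and-paste of a valid labelling at two occurrences of the same state changes the length only by multiples of the period, so it can never reach a forbidden length. The fix is to keep the length $n$ fixed and solvable: since some state $u$ occurs $\Omega(n)$ times, pick two occurrences with disjoint radius-$t$ neighbourhoods at arc-distance $d$ (necessarily $p_u \mid d$ with $p_u \ge 2$), then insert one fresh identifier between them far from both and delete one from the opposite arc; the two nodes still output $u$, but now sit at arc-distance $d+1 \not\equiv 0 \pmod{p_u}$, contradicting validity. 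This is exactly the ``global coordination of spacing'' that the paper gestures at without spelling out.
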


\begin{proof}
The case of $O(1)$ time is straightforward. Recall the result of \citet{naor95what} that shows that unique identifiers do not help with $o(\log^* n)$-time algorithms; hence we have only trivial problems for which a constant labelling is a feasible.

Also the case of $\Theta(n)$ time is straightforward. There are only constantly many neighbourhoods, and hence some neighbourhood $u \in V$ has to be used $\Omega(n)$ times in the output. However, $u$ is not flexible, and hence the spacing between $u$-neighbourhoods requires global coordination. (For example, in $2$-colouring the distance between any two occurrences of neighbourhood ``$12$'' has to be a multiple of $2$.)

It remains to be shown that if $u$ is a flexible node with some minimum flexibility $k$, we can solve $P$ in time $O(\log^* n)$. Let $G$ be a cycle graph and let $G^{(k)}$ be the $k$th power of $G$. We can find a maximal independent set $I$ in $G^{(k)}$ in time $O(\log^* n)$. Let $v$ be a node in $I$, and let $v'$ be the next node in $I$ by the ordering of the nodes of $G$. Let the distance from $v$ to $v'$ in $G$ be $i$; we have $k+1 \leq i \leq 2k+1$. We label $v$ and $v'$ using the neighbourhood $u$, and fill in the gap between $v$ and $v'$ by following some circuit $C_i$ of length $i$ from $u$ back to $u$ in $H$.
\end{proof}

It would be tempting to try to generalise this result to $2$-dimensional grids. Unfortunately, this is not possible; as we will see in \sectionref{sec:undecidability}, there does not exist an algorithm for finding the time complexity of a given \lcl{} problem in $2$-dimensional grids. Nevertheless, we can still prove that any \lcl{} problem has a complexity of $O(1)$, $\Theta(\log^* n)$, or $\Theta(n)$ also in $2$-dimensional grids. We will next prove the key ingredient: any $o(n)$-time algorithm can be turned into an $O(\log^* n)$-time algorithm that has a convenient structure.

\section{Speed-up and normal form}\label{sec:normalform}

In this section, we give the speed-up result underlying both the complexity classification of \lcl{} problems and the synthesis. The following result is essentially a refined version of the speed-up lemma of \citet{chang16exponential} for two-dimensional oriented grids; the proof immediately yields the normal form algorithm for any \lcl{} problem as discussed in \sectionref{sec:introduction}.

\begin{theorem}\label{thm:speedup}
	Given any \lcl{} $P$ with an algorithm $A$ that solves $P$ in time $T(n) = o(n)$, there exists an algorithm $B$ that solves $P$ and has running time $O(\log^* n)$.
\end{theorem}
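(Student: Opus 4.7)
The plan is a grid analogue of the Chang--Pettie speed-up lemma that, as a byproduct, exhibits the normal form $B = A' \circ S_k$ promised in the introduction.

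Exploiting $T(n)=o(n)$, I would first pick a constant $n^{*}$ for which $T(n^{*}) < n^{*}/10$, so that the radius-$T(n^{*})$ view used by $A$ on the small torus $G_{n^{*}}$ fits inside a non-wrapping window. Running $A$ offline on $G_{n^{*}}$ with any fixed valid identifier assignment $\mathrm{id}^{*}$ yields a feasible labeling $\sigma^{*} := A(G_{n^{*}},\mathrm{id}^{*})$, which I would hard-code inside $B$ as a constant-size table. On the input grid $G_n$, algorithm $B$ first computes a maximal independent set $I_k$ in $G^{(k)}$ for a constant $k=\Theta(n^{*})$ in $O(\log^{*}n)$ rounds via standard Cole--Vishkin/Linial-style symmetry breaking; this is the component $S_k$. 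The component $A'$ is then a constant-time finite function: each node uses $I_k$ and the global orientation to determine a ``virtual coordinate'' $\phi(v)\in \Z_{n^{*}}\times\Z_{n^{*}}$ relative to a deterministically selected anchor in its vicinity, and outputs $\sigma^{*}(\phi(v))$. Because the checkability radius $r$ of $P$ is constant and $2r<n^{*}$, any radius-$r$ neighborhood in $G_n$ whose virtual coordinates come from the same anchor is oriented-isomorphic to a radius-$r$ neighborhood of $G_{n^{*}}$ under $\phi$, so local checkability of $P$ transfers the feasibility of $\sigma^{*}$ pointwise.

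The main obstacle is that a single orientation-preserving homomorphism $G_n \to G_{n^{*}}$ does not exist when $n^{*}\nmid n$: going once around the torus accumulates a nonzero phase modulo $n^{*}$, so along the seams between Voronoi cells of different anchors the map $\phi$ cannot be made globally homomorphic. The remedy I would adopt is to choose two coprime constants $n^{*}_1,n^{*}_2$ with $T(n^{*}_i)<n^{*}_i/10$; by a Chicken McNugget / numerical-semigroup argument every sufficiently large side length decomposes as $n = a\,n^{*}_1 + b\,n^{*}_2$ in each dimension, so $G_n$ tiles by blocks of sizes $n^{*}_1$ and $n^{*}_2$ with the tiling determined by $I_k$ and the orientation. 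A finite case analysis at the $O(1)$-many distinct seam types, combined with the feasibility of $\sigma^{*}$ on both $G_{n^{*}_1}$ and $G_{n^{*}_2}$, shows that every radius-$r$ view in $G_n$ still matches some feasible view in one of the small tori, completing the feasibility argument. The overall running time is dominated by $S_k$, giving the claimed $O(\log^{*}n)$ bound.
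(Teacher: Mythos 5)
Your plan diverges from the paper's in a way that opens a genuine gap: you fix one identifier assignment $\mathrm{id}^*$ on a small torus, precompute a single solution $\sigma^* = A(G_{n^*},\mathrm{id}^*)$, and then paste copies of $\sigma^*$ over the large grid. Local checkability does not support the step ``feasibility of $\sigma^*$ on each small torus plus a finite seam analysis implies feasibility of the pasted labelling.'' A radius-$r$ window straddling a seam is a juxtaposition of fragments of two feasible solutions, and such a juxtaposition need not be compatible with \emph{any} feasible solution---this is exactly where the difficulty lives. Already for $4$-colouring (or for $3$-colouring a cycle, cf.\ \sectionref{sec:cycles}), tiling with a fixed colouring pattern generically produces two adjacent nodes of equal colour at the seam; what is needed there is the \emph{flexibility} of a state (closed walks of all sufficiently large lengths), which the mere existence of $\sigma^*$ does not provide. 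A second, independent problem is that your brick decomposition into blocks of sides $n_1^*$ and $n_2^*$ cannot be ``determined by $I_k$ and the orientation'': a maximal independent set of $G^{(k)}$ yields irregularly spaced anchors with pairwise distances anywhere in roughly $(k,2k]$, not a globally aligned tiling; computing such an aligned tiling would require nodes to learn their coordinates modulo the block structure, which is itself a global task.

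The fix---and this is what the paper does---is to not commit to any output in advance. The anchors are used only to manufacture \emph{locally unique identifiers} from $[k^2]$ (local coordinates relative to the nearest anchor in a Voronoi tiling), and then every node simulates $A$ on its own radius-$T(k)$ view under the pretence that the instance has size $k\times k$. Since $T(k) < k/4-4$, no simulated run ever sees a repeated identifier, and any radius-$(T(k)+r)$ window of the large grid, equipped with these identifiers, embeds into a genuine $k\times k$ instance on which $A$ is correct. Correctness of the glued output therefore follows from the \emph{universal} correctness of $A$ over all small instances---including instances realising the ``seams''---rather than from the feasibility of one precomputed $\sigma^*$. Your decomposition into $S_k$ followed by a finite function $A'$ is the right normal form, but $A'$ must encode the simulation of $A$ as a function of the anchor placement in the local neighbourhood, not a single hard-coded table $\sigma^*$.
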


\begin{proof}
  Recall that w.l.o.g., we can assume that problem $P$ has checkability radius $r = 1$.
  Algorithm~$B$ solves problem $P$ in an $n \times n$ grid $G$ as follows:
	\begin{enumerate}[label=(\arabic*),noitemsep]
		\item Pick the smallest even $k \ge 4$ such that $T(k) < k/4-4$. Such a $k$ exists by assumption, and it is a constant that only depends on $T$.
		\item Find a maximal independent set $I$ in $G^{(k/2)}$. This can be done in time $O(\log^* n)$; the nodes of $I$ are called \emph{anchors}.
		\item Simulate $A$ with \emph{locally unique identifiers} from $[k^2]$ around each anchor in $I$. \label{sim:step3}
	\end{enumerate}

	Step \ref{sim:step3} of the simulation proceeds as follows. First, $G$ is divided into a Voronoi tiling with respect to $I$, breaking ties arbitrarily---that is, we associate with each node $v \in I$ a Voronoi tile
  $T(v) = \{ u \in V \colon v \text{ is the closest anchor to } u \}$;
  each node can compute which tile they belong to in constant time. Then, each node $v$ is assigned a \emph{local coordinate} $c(v) = (x_v - x_{a(v)}, y_v - y_{a(v)})$, where $a(v)$ is the anchor of $v$'s tile. The local coordinates will be interpreted as locally unique identifiers.

	There are no repeating identifiers within distance $k/2$ of any node: If two nodes $u$ and $v$ have the same coordinate, they are in different Voronoi tiles. Since the anchors are at distance at least $k/2$, and $u$ and $v$ are by assumption in the same relative positions with respect of their anchors, also $u$ and $v$ are at distance at least $k/2$.

	Each Voronoi tile $T(v)$ holds nodes at distance at most $k/2+1$, since any node at distance $k/2+2$ from $v$ must have another anchor within distance $k/2$. We can calculate that the size of each tile is at most $|T(v)| \le k^2$; hence we only need $k^2$ distinct locally unique identifiers.

	Next we simulate $A$ on $G$, with a bit of cheating: we tell $A$ that we are actually solving $P$ for an instance of size $k \times k$; for each local neighbourhood of $G$, we feed it locally unique identifiers from $[k^2]$. Now $A$ has a running time $T(k) < k/4$, and hence it does not ever see repeating identifiers; it has to solve problem $P$ correctly in each local neighbourhood as this might be a legitimate instance of size $k \times k$. More precisely, if the local outputs of $A$ violated the constraints of the \lcl{} problem $P$ for some local neighbourhoods, we could also construct a genuine instance $H$ of size $k \times k$ with globally unique identifiers, and $A$ would fail to solve $P$ in $H$. Hence the local outputs of $A$ have to constitute a globally feasible solution for $P$ also in $G$.
\end{proof}

\section{Undecidability of classification}\label{sec:undecidability}

In this section we show that in general deciding whether the running time of a given \lcl{} problem is $\Theta(\log^* n)$ or $\Theta(n)$ is undecidable. We achieve this by defining, for each Turing machine $M$, an \lcl{} problem $L_M$ such that $L_M$ can be solved in time $\Theta(\log^* n)$ if and only if $M$ halts, and in time $\Theta(n)$ otherwise.

\begin{theorem} \label{thm:lcl-undecidability}
	The problem of deciding whether a given \lcl{} can be solved in time $\Theta(\log^* n)$ or $\Theta(n)$ on grids is undecidable.
\end{theorem}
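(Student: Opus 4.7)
The plan is to reduce from the halting problem for Turing machines. Given a Turing machine $M$, I would construct an \lcl{} problem $L_M$ whose feasible solutions encode valid space-time computation tableaux of $M$ started on the empty tape, in the style of classical Wang-tile encodings. The output alphabet partitions nodes into \emph{computation} nodes (carrying a tape symbol, possibly a head+state, and local tableau coordinates) and \emph{background} nodes. Radius-$1$ constraints enforce: (i) every maximal computation region is a rectangle whose rows trace a correct transition sequence of $M$, with the bottom row a blank initial configuration and some row containing the halting state; (ii) computation regions are framed by a designated boundary symbol separating them from background; and (iii) a non-triviality clause ruling out the all-background labelling, for example by requiring that every node be within some constant distance of a labelled "halting-tableau corner", which forces at least one complete halting tableau in every neighbourhood of a fixed radius.

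For the upper bound, if $M$ halts in $T$ steps, a concrete halting tableau $\tau_M$ of side at most $T$ exists. Fix a constant $k$ sufficiently larger than $T$, and compute a maximal independent set $I$ in $G^{(k)}$ in time $O(\log^* n)$. Around each anchor, place a translated copy of $\tau_M$ framed by the boundary symbol, and label every remaining cell as background. A Voronoi-style check analogous to the one in the proof of \theoremref{thm:speedup} shows that every local neighbourhood is feasible, yielding an $O(\log^* n)$ algorithm for $L_M$. For the lower bound, if $M$ never halts, no finite halting tableau exists, so clause (iii) cannot be met by any bounded pattern, while clauses (i)--(ii) prevent any other local pattern from being valid; hence $L_M$ cannot be solved in time $o(n)$, and by the trichotomy from \sectionref{sec:normalform} its complexity is exactly $\Theta(n)$. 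Since halting is undecidable, distinguishing the two cases for $L_M$ is undecidable, which proves the theorem.

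The main obstacle will be calibrating the non-triviality clause (iii) so that, on the one hand, it is satisfiable on a torus of every sufficiently large side length whenever $M$ halts, and on the other hand, it genuinely forbids all degenerate local solutions when $M$ does not halt. A naive "exactly one global tableau" forcing would make $L_M$ global regardless of $M$ and kill the reduction. I would therefore make the clause purely local, so that in the halting case a periodic tiling by copies of $\tau_M$ separated by variable-width background strips is feasible on every large enough torus (handling the matching of $\tau_M$'s period with $n$ in the same flexible-spacing spirit as in \sectionref{sec:cycles}); while in the non-halting case, the redundancy of the tape/head/transition encoding--imported from the standard proof of undecidability of the domino problem--rules out any bounded local pattern from spoofing a halting tableau. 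A secondary concern is that clauses (i)--(ii) must prevent "shortcut" labelings where background and computation cells interact in unintended ways across boundaries; this is handled by insisting on a full one-cell boundary ring of the designated frame symbol around each computation rectangle, which reduces the verification to purely local consistency checks.
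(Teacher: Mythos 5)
Your overall reduction target (halting problem), the encoding of a space--time tableau of $M$ as an \lcl{}, and the upper bound (find an MIS in $G^{(k)}$ for $k$ larger than the tableau, plant a copy of the tableau at each anchor, pad with background) all match the paper's construction. However, there is a genuine gap exactly at the point you flag as ``the main obstacle,'' and flagging it is not the same as closing it. Your non-triviality clause (iii) cannot work as stated: the constant distance to a ``halting-tableau corner'' must be fixed when $L_M$ is written down, i.e.\ computed from $M$ alone, so it cannot depend on the halting time $T$ (computing such a bound is itself the halting problem). If $M$ halts in $T$ steps much larger than that constant, the density requirement is unsatisfiable and the reduction's ``if $M$ halts then $O(\log^* n)$'' direction fails. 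Your proposed fix---make the clause ``purely local'' so that variable-width background strips between tableaux are feasible---reintroduces the opposite failure: once arbitrarily wide background is locally legal, the all-background labelling satisfies every radius-$r$ constraint, and $L_M$ becomes trivially solvable in $O(1)$ regardless of $M$. This is precisely the phenomenon the paper points out when explaining why the Naor--Stockmeyer non-toroidal construction collapses on a torus.

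The missing idea is a gadget that makes the ``no tableau anywhere'' output \emph{globally hard rather than forbidden or trivial}. The paper achieves this by typing every node with a quadrant or border label that points along a diagonal toward an anchor, and requiring each maximal same-type diagonal to be properly $2$-coloured: a labelling with no anchors then necessarily contains a diagonal of length $\Omega(n)$ whose $2$-colouring needs $\Omega(n)$ rounds, while a labelling with anchors triggers the obligation to write a complete halting tableau at each anchor. A second ingredient you omit is the fallback component (the paper takes the disjoint union with $3$-colouring, already proved global) which guarantees that $L_M$ remains \emph{solvable} in $\Theta(n)$ when $M$ does not halt, so that the instance genuinely satisfies the promise ``$\Theta(\log^* n)$ or $\Theta(n)$'' rather than being unsolvable. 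Without these two pieces your construction does not yield the required equivalence, so the proof is incomplete.
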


It is good to compare this with the result of \citet{naor95what}. They study grids with boundaries (non-toroidal grids; there are nodes of degrees $3$ and $2$). In such grids, deciding if an \lcl{} can be solved in time $O(1)$ is already undecidable. In essence, for any Turing machine $M$ we can construct an \lcl{} that specifies that in the lower-left corner of the grid we have to write out the complete execution history of $M$, and everything else can be padding. Now if and only if $M$ halts in some finite time $t$, then \lcl{} can be solved in time $O(t) = O(1)$, as it suffices to check if we are within distance $\Theta(t)$ from the corner and otherwise we can just output padding.

In our case of toroidal grids, this no longer holds. It is trivial to decide if a given \lcl{} can be solved in $O(1)$ time; only trivial problems in which a constant output is a feasible solution admit an $O(1)$-time solution in toroidal grids. For example, the problem constructed by \citet{naor95what} is now trivial, as there are no corners, and we can always output padding.

We develop a different \lcl{} problem $L_M$ that forces any efficient algorithm to \emph{create corners}. The problem is defined so that the grid can be partitioned in ``tiles'' of arbitrary dimensions, but there are additional requirements:
\begin{enumerate}[noitemsep]
    \item inside each tile we have to solve an inherently global problem, and
    \item in the ``corner'' of each tile we must output the complete execution table of $M$.
\end{enumerate}
Now property~(1) prevents efficient algorithms from producing an output that says that the entire grid consists of one tile. But as soon as the algorithm creates some tile boundaries, property~(2) kicks in and makes sure that we can have finite tiles if and only if $M$ halts in finite time. Additional care is needed to make sure that the problem is solvable but global if $M$ does not halt, as we will discuss next.

\paragraph{\boldmath \lcl{} problem $L_M$ in detail.} For each Turing machine $M$, we define $L_M$ as the disjoint union of two locally checkable labellings $P_1$ and $P_2$; to solve $L_M$, one has to solve either $P_1$ or $P_2$. The problem $P_1$ is defined to 3-colouring in order to make sure that $L_M$ can always be solved in time $O(n)$, independent of $M$. On the other hand, 3-colouring requires time $\Omega(n)$ by \theoremref{thm:3-colouring-lower-bound}. The problem $P_2$ is a problem that involves labelling the grid with the execution table of $M$, started on an empty tape. This problem is formulated so that it can be solved in time $O(\log^* n)$ if and only if $M$ halts on the empty tape.

Each node is labelled with the Turing machine $M$ and a \emph{type}: each node is either an \emph{anchor} or belongs either into one of the four \emph{quadrants} \textsf{NW}, \textsf{NE}, \textsf{SE}, and \textsf{SW}, or one of the four \emph{borders} \textsf{N}, \textsf{S}, \textsf{E}, and \textsf{W}. We overload the notation and define incidence operators as follows. For an arbitrary node $v = (x,y)$, define
\[
\begin{aligned}
  \mathsf{NW}(v) &= (x-1, y+1), & \mathsf{NE}(v) &= (x+1, y+1), \\
  \mathsf{SE}(v) &= (x+1, y-1), & \mathsf{SW}(v) &= (x-1,y-1),  \\
  \mathsf{N}(v)  &= (x, y+1),   & \mathsf{S}(v)  &= (x, y-1),   \\
  \mathsf{E}(v)  &= (x+1, y),   & \mathsf{W}(v)  &= (x-1, y).
\end{aligned}
\]
We will use the types of the nodes to refer to the corresponding incidence operators. The idea of the type labels is that they can be followed to find an anchor.

Let $Q(u) \in \{ \mathsf{NE}, \mathsf{SE}, \mathsf{SW}, \mathsf{NW}, \mathsf{N}, \mathsf{E}, \mathsf{S}, \mathsf{W}, \mathsf{A} \}$, denote the type of a node, and $x(u) \in \{0,1\}$ a colouring. Define the diagonal neighbour $\diag(u)$ of node $u$ as the node reached by taking a step in direction $Q(u)$. For example, if $Q(u) = \mathsf{NW}$, then $\diag(u) = \mathsf{NW}(u)$. For completeness, define the diagonal of an anchor is the node itself.

We have the following local rules.
\begin{enumerate}[noitemsep]
  \item If $Q(u) = \mathsf{NE}$, then $Q(\diag(u)) \in \{ \mathsf{NE}, \mathsf{N}, \mathsf{E}, \mathsf{A} \}$.
  \item If $Q(u) = \mathsf{SE}$, then $Q(\diag(u)) \in \{ \mathsf{SE}, \mathsf{S}, \mathsf{E}, \mathsf{A} \}$.
  \item If $Q(u) = \mathsf{SW}$, then $Q(\diag(u)) \in \{ \mathsf{SW}, \mathsf{S}, \mathsf{W}, \mathsf{A} \}$.
  \item If $Q(u) = \mathsf{NW}$, then $Q(\diag(u)) \in \{ \mathsf{NW}, \mathsf{N}, \mathsf{W}, \mathsf{A} \}$.
\end{enumerate}
On the borders, we must have that $Q(\diag(u)) = Q(u)$, or that $Q(\diag(u)) = \mathsf{A}$. In addition, we require that the borders are surrounded with different labels. In particular we must have that
\begin{enumerate}[noitemsep]
  \item If $Q(u) = \mathsf{N}$, then $Q(\mathsf{W}(u)) = \mathsf{NE}$ and $Q(\mathsf{E}(u)) = \mathsf{NW}$.
  \item If $Q(u) = \mathsf{S}$, then $Q(\mathsf{W}(u)) = \mathsf{SE}$ and $Q(\mathsf{E}(u)) = \mathsf{SW}$.
  \item If $Q(u) = \mathsf{E}$, then $Q(\mathsf{N}(u)) = \mathsf{SE}$ and $Q(\mathsf{S}(u)) = \mathsf{NE}$.
  \item If $Q(u) = \mathsf{W}$, then $Q(\mathsf{N}(u)) = \mathsf{SW}$ and $Q(\mathsf{S}(u)) = \mathsf{NW}$.
\end{enumerate}
Finally for any anchor $v$, we must have that $Q(\mathsf{N}(v)) = \mathsf{S}$, $Q(\mathsf{NW}(v)) = \mathsf{SE}$, $Q(\mathsf{E}(v)) = \mathsf{W}$, $Q(\mathsf{SE}(v)) = \mathsf{NW}$, $Q(\mathsf{S}(v)) = \mathsf{N}$, $Q(\mathsf{SW}(v)) = \mathsf{NE}$, $Q(\mathsf{W}(v)) = \mathsf{E}$, and $Q(\mathsf{NE}(v)) = \mathsf{SW}$.

\begin{figure}
  \centering
  \includegraphics[width=\columnwidth]{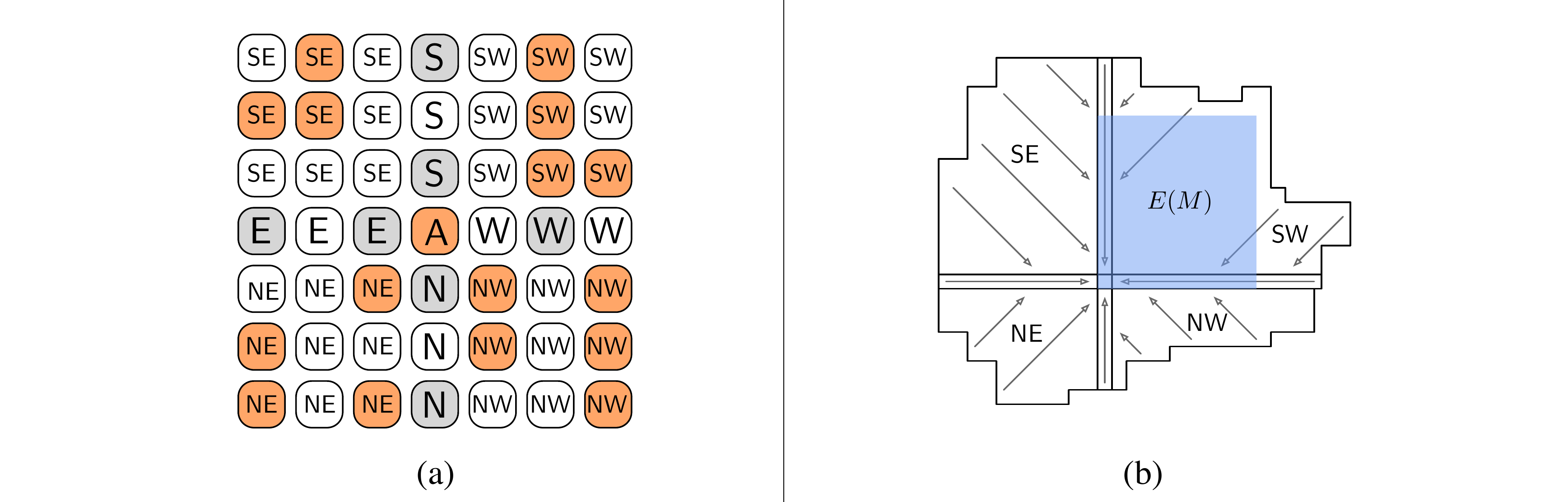}
  \caption{(a) The local rules for the labelling problem $P_2$: all nodes must have a type, indicated by the label, and their diagonal neighbours must have a type that is compatible. Every diagonal must be 2-coloured, that is, for every pair $u,v$ such that $u = \diag(v)$ and $Q(u) = Q(v)$ we must have that $x(u) \neq x(v)$. The anchor must be surrounded by the other labels. (b) The general structure of $O(\log^* n)$ time solution to $P_2$ if $M$ halts. The Voronoi tile of an anchor is split into four quadrants and four borders. Every diagonal can be followed to reach the anchor. An encoding of the execution table $E(M)$ starts from the anchor and is contained inside the Voronoi tile.}
  \label{fig:undecidability}
\end{figure}

In addition, the diagonals must be 2-coloured, that is, we require that if $Q(u) = Q(\diag(u))$, then $x(u) \neq x(\diag(u))$. This condition ensures that any fast solution cannot have large (e.g.\ linear-sized) contiguous fragments of nodes with the same type, and that anchor nodes must appear in the solution.

Finally, we require that starting from each anchor, the grid is labelled with the encoding of the execution table of $M$ when started on an empty tape. This encoding is detailed in the following paragraph.

\paragraph{\boldmath Encoding the execution table of a Turing machine $M$.} Consider an anchor node $v$. We will translate the coordinate system of $G$ so that $v = (0,0)$.

Assume that $M$ runs for $s$ steps on the empty tape. The encoding of the execution table $E(M)$ of $M$ consists of an $(s+1) \times r$ rectangular subgrid, where $r \le s+1$, with the bottom left corner at the anchor $v$. Each row $j$ of $E(M)$ encodes the contents of the tape at the beginning of step $j$. Each column $i$ corresponds to a cell of the tape. Thus, each node $(i,j)$ is labelled with the contents of the cell $i$ before step $j$. In addition, one node on each row holds the machine head and the state of $M$. Each node $u$ on the left boundary of $E(M)$ must have $Q(u) = \mathsf{S}$, and each node $w$ on the bottom boundary must have $Q(w) = \mathsf{W}$.

On the first row each cell is empty and the anchor $v$ holds the machine head. Every $2 \times 2$ subgrid of $E(M)$ must be consistent with the transition rules of $M$. On the last row one of the nodes holds the machine head in a halting state. Each node may hold the encoding of at most one machine. Only nodes with labels $Q \in \{ \mathsf{S}, \mathsf{W}, \mathsf{SW} \}$ may be labelled with an encoding of the Turing machine.

Note that since the labels contain no references to $s$ or the position of any node on $E(M)$, the encoding can be done using a constant number of labels.

\paragraph{Local checkability of the encoding.} Since the nodes can detect if both $P_1$ and $P_2$ are used, we can look at the two cases separately. Clearly a 2-colouring is locally checkable. Now assume that the labelling $P_2$ is used.

The local rules related to the labelling $\ell$ are clearly locally checkable. The nodes can check that they agree on the identity of the Turing machine $M$. The Turing machine encoding is also locally checkable: every anchor and the nodes on the \textsf{W} border can check that the tape is initially empty. Between the rows, nodes can check that the encoding respects the transition rules of $M$. On the top and the right border of the tape nodes can check that the final state is a stopping state and that the encoding is complete.

\paragraph{Solving $L_M$ in time $O(\log^* n)$ if $M$ halts.} Assume that $M$ halts in $s$ steps. $L_M$ can be solved in time $O(\log^* n)$ as follows.
\begin{enumerate}
  \item If $G$ has size $n < 2(s+1)$, solve $P_1$ by brute force.
  \item Else, find a maximal independent set $I$ in $G^{(4(s+1))}$. This is the set of anchors.
  \item Construct the Voronoi tiling $T$ of the anchor set $I$, breaking ties in an arbitrary but consistent manner. The size of each tile is bounded by a constant. Inside each tile $T(v)$ with anchor $v = (x,v)$, label nodes according their position with respect to the anchor:
  \begin{equation} \label{eq:def-quadrants}
  Q(u) =
  \begin{cases}
    \textsf{NW}, \text{ if } x_u > x, y_u < y,\\
    \textsf{NE}, \text{ if } x_u < x, y_u < y,\\
    \textsf{SW}, \text{ if } x_u > x, y_u > y,\\
    \textsf{SE}, \text{ if } x_u < x, y_u > y.
  \end{cases}
  \end{equation}
  Similarly, the borders are labelled as follows.
  \begin{equation} \label{eq:def-borders}
  Q(u) =
  \begin{cases}
    \textsf{N}, \text{ if } x_u = x, y_u < y,\\
    \textsf{S}, \text{ if } x_u = x, y_u > y,\\
    \textsf{E}, \text{ if } x_u < x, y_u = y,\\
    \textsf{W}, \text{ if } x_u > x, y_u = y.
  \end{cases}
  \end{equation}

  From each anchor, start a labelling with the execution table of $M$ as described above. The distance of at least $4(s+1)$ between anchors guarantees that each Voronoi tile can fit the execution table encoding inside it.
\end{enumerate}
Everything is constant time, except finding the maximal independent set, which can be done in time $O(\log^* n)$.

\paragraph{Solving $L_M$ requires time $\Omega(n)$ if $M$ does not halt.} Now assume that $M$ does not halt on the empty tape. Solving $P_1$ naturally requires $\Omega(n)$ time. There are two possibilities for the labelling $P_2$: either the labelling contains an anchor, or not.

First, assume that the labelling contains an anchor $v = (x,y)$. This means that around the anchor, the grid must be labelled with the execution table of $M$, starting with an empty tape. The nodes $(x, y+j)$, with $j > 0$, must be labelled with $\mathsf{S}$ and the contents of the first cell of $M$'s tape before time steps $j$. The nodes $(x+i, y)$, with $i > 0$, must be labelled with $\mathsf{W}$ and the initial, empty contents of $M$'s tape. Since $M$ does not halt on the empty tape, either some node detects and illegal transition in the encoding of the execution table, or the table wraps around the grid. Then there must be a node labelled with $\mathsf{N}$ or $\mathsf{NW}$, and contents of $M$'s tape, a contradiction to the correctness of the output.

Now assume that there are no anchors. If there are no borders, all nodes must be labelled with the same quadrant $Q \in \{ \mathsf{NW}, \mathsf{NE}, \mathsf{SE}, \mathsf{SW} \}$, as otherwise there would a node with the wrong type of diagonal neighbour. Then we can find diagonals of length $\Omega(n)$ that must be 2-coloured, requiring time $\Omega(n)$. Now assume that there exists a node labelled with a border. Since there are no anchors, this node must have a diagonal labelled with the same border, until the border wraps around. The border has length $\Omega(n)$ and must again be 2-coloured, leading to a running time of $\Omega(n)$.

\paragraph{Solving $L_M$ requires time $\Omega(\log^* n)$ if $M$ halts.} Finally, we note that solving $L_M$ requires time $\Omega(\log^*n)$, as it requires breaking symmetry between nodes.

We have shown that $L_M$ has an $O(\log^* n)$ time algorithm if and only if $M$ halts on an empty tape. This is known to be an undecidable problem, and therefore the problem of deciding whether an $O(\log^* n)$ time algorithm exists is in general also undecidable.

\section{Synthesis}\label{sec:synthesis}

At first, the undecidability result of \sectionref{sec:undecidability} seems to suggest that there is little hope in automating algorithm design for \lcl{} problems in grids. Indeed, given an \lcl{} problem $P$, we cannot even tell if it can be solved in $O(\log^* n)$ time or if it is inherently global.

However, in a sense this is the \emph{only} obstacle for automatic synthesis of optimal algorithms! Let us assume that we are given $1$ bit of advice indicating whether $P$ is local (solvable in time $O(\log^* n)$) or global. We will now argue that this information is enough to automatically synthesise an asymptotically optimal algorithm for~$P$.

If $P$ is global, then there is a trivial brute-force algorithm of time $O(n)$ that merely gathers the entire output at a single node and solves the problem globally.

If $P$ is local, we can first check whether it is trivial. If there is a constant label that can be used to fill the entire grid, then (and only then) the problem is solvable in time $O(1)$.

The remaining case is a local problem that cannot be solved in time $O(1)$. Now \theoremref{thm:speedup} and the classical result of \citet{naor95what} imply that the only possibility is the complexity of $\Theta(\log^* n)$. Moreover, the proof of \theoremref{thm:speedup} suggests a convenient \emph{normal form}: problem $P$ can be solved with an algorithm of the form $A' \circ S_k$ for some constant $k$, where
\begin{itemize}[noitemsep]
    \item $S_k$ finds a set of anchors $I$ that forms a maximal independent set in $G^{(k)}$,
    \item $A'$ is an algorithm with running time bounded by $O(k)$ that takes as an input only the set of anchors $I$ and the global orientation of the grid.
\end{itemize}
In the proof of \theoremref{thm:speedup}, algorithm $A'$ first constructs Voronoi tiles, then assigns locally unique identifiers, and then simulates some $O(k)$-time algorithm $A$. But we do not need to worry about such details here; we can see this entire process as a black box $A'$ that simply takes the placement of anchors in the radius-$O(k)$ neighbourhood as input, and using only this information produces the final local output. In particular, $A'$ does not depend on the assignment of unique identifiers or on the value of $n$.

It follows that $A'$ is a \emph{finite function}, mapping radius-$O(k)$ neighbourhoods in a $\{0,1\}$-labelled grid to local outputs. There are only finitely many ways to assign $\{0,1\}$ labels in a constant-sized fragment of the grid, and hence $A'$ can be conveniently represented as a finite lookup table.

The only missing piece is finding the value of $k$, and to do that, we can simply start with $k = 1$ and increment it until synthesis succeeds. (Note that if we were dealing with a global problem instead of a local problem this loop will never terminate.)

For each value of $k$, we proceed as follows. We pick sufficiently large values $r_1, r_2 = \Theta(k)$. Then we enumerate all possible ways in which the anchors may appear within a $r_1 \times r_2$ fragment of the grid; these are called \emph{tiles}. We describe in Appendix~\ref{app:tiles} a practical algorithm for such an enumeration. For example, for $k = 1$ we have the following $3 \times 2$ tiles; if we consider a maximal independent set in the grid, and pick a $3 \times 2$ window, we will see one of these configurations:
\newcommand{\T}[3]{\,\boxed{\arraycolsep=0pt\renewcommand{\arraystretch}{0}\begin{array}{c}#1\\[4pt]#2\\[4pt]#3\end{array}}\,}
\[
\T{00}{00}{10}\T{00}{00}{01}\T{00}{10}{00}\T{00}{10}{01}\T{00}{01}{00}\T{00}{01}{10}\T{10}{00}{00}\T{10}{00}{10}\T{10}{00}{01}\T{10}{01}{00}\T{10}{01}{10}\T{01}{00}{00}\T{01}{00}{10}\T{01}{00}{01}\T{01}{10}{00}\T{01}{10}{01}
\]
We will then construct a \emph{neighbourhood graph} $H = (V_H,E_H)$, in which each node $u \in V_H$ corresponds to a $r_1 \times r_2$ tile, and each edge corresponds to a tile of dimensions $(r_1 + 1) \times r_2$ or $r_1 \times (r_2 + 1)$. For example, there is a $3 \times 3$ tile
\[
\T{000}{010}{100}
\]
and hence in the neighbourhood graph of $3 \times 2$ tiles there is a directed \emph{horizontal edge}
\[
\Biggl(\T{00}{01}{10},\T{00}{10}{00}\Biggr).
\]
Similarly, we can identify directed vertical edges. Now $A'$ is simply a mapping from $V_H$ to local outputs; $A'(u)$ is what we output for a node whose local neighbourhood with respect to $I$ is equal to $u$. Furthermore, the constraints of the \lcl{} problem $P$ (once sufficiently normalised) can be encoded as constraints related to horizontal and vertical edges. For example, in the $4$-colouring problem, the constraint is simply that adjacent tiles have different labels.

Hence the task of synthesising algorithm $A'$ reduces to a combinatorial constraint satisfaction problem in which our task is to find a labelling of the nodes of graph $H$ that satisfies all constraints on the edges of the graph; if such an assignment does not exist, we simply repeat the process with a larger value of $k$ and larger tile dimensions.

We have successfully used this approach with many concrete \lcl{} problems discussed in this work, and it works well in practice. As a concrete nontrivial example, consider the problem of $4$-colouring 2-dimensional grids. Here it can be shown that no solution exists for $k = 1$ or $k = 2$, but synthesis succeeds with $k = 3$ for e.g.\ $7 \times 5$ tiles. While a priori it might seem that the number of tiles is impractical for such parameter values ($2^{7 \cdot 5}$ candidate tiles?), the key observation is that $1$'s are fairly sparse in any maximal independent set of $G^{(k)}$, and it turns out that we only need to consider $2079$ tiles. Finding a proper $4$-colouring of the neighbourhood graph can be done with modern SAT solvers in a matter of seconds.

\section{\boldmath Vertex colouring \texorpdfstring{$d$}{d}-dimensional grids with 4 colours}\label{app:vertex-4-colouring}

In this section we prove the following general upper bound. We remark that this result can be almost directly derived also from the work of Holroyd et al.~\cite{finitarycol} (see Corollary~15 in particular), but we give a direct proof here for the sake of completeness.
\begin{theorem}
\label{thm:vertex-colouring}
For every fixed $d \ge 2$, the complexity of $4$-colouring $d$-dimensional grids is $\Theta(\log^{*} n)$.
\end{theorem}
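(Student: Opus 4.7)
The plan is to prove the matching $\Omega(\log^* n)$ and $O(\log^* n)$ bounds. The lower bound is immediate: since no constant labelling satisfies the $4$-colouring constraint (every node has a neighbour in a toroidal grid, forcing a colour clash), the problem is non-trivial, and by the Naor--Stockmeyer argument recalled in \sectionref{sec:cycles} every non-trivial \lcl{} on toroidal grids requires $\Omega(\log^* n)$ rounds.

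For the upper bound I would instantiate the normal-form recipe from \sectionref{sec:synthesis}: pick a large enough constant $k = k(d)$, compute a maximal independent set $I$ of anchors in $G^{(k)}$ in $O(\log^* n)$ rounds, and then apply a constant-time local rule $A'$ that depends only on the placement of anchors in an $O(k)$-radius window and the global orientation. A natural template for $A'$ is to fix two distinguished axes $x_1, x_2$ and colour $u$ in the Voronoi tile $T(v)$ by
\[
c(u) = \Bigl(\bigl(x_1(u) - x_1(v) + \alpha_1(v)\bigr) \bmod 2,\ \bigl(x_2(u) - x_2(v) + \alpha_2(v) + \textstyle\sum_{i \ge 3} (x_i(u) - x_i(v))\bigr) \bmod 2\Bigr),
\]
where the sum ensures that every edge along the remaining $d-2$ axes flips the second bit, and the per-anchor shifts $(\alpha_1(v), \alpha_2(v)) \in \{0,1\}^2$ are chosen to patch the tile boundaries. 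Within a single tile this is a proper $4$-colouring because every grid edge flips exactly one of the two bits.

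The main obstacle is selecting the shifts $\alpha_1, \alpha_2$ so that colours also stay distinct across every Voronoi boundary, where the two anchors need not share the same parities in $x_1$ or $x_2$. My plan is to argue that the resulting constraint satisfaction problem on the anchor graph decouples across the two distinguished axes into two essentially independent $\{0,1\}$-labelling tasks whose local constraints are always satisfiable for $k$ large enough: the anchor separation $\ge k{+}1$ buys enough slack to adjust parities in a small transition layer without creating conflicts inside any tile. For a clean argument I would either exhibit such an assignment by induction on $d$ (first $4$-colouring the anchor sub-grid at a coarser scale and then propagating shifts back to individual tiles), or simply invoke the synthesis algorithm of \sectionref{sec:synthesis}, which by \theoremref{thm:speedup} is guaranteed to find an $A'$ once any $o(n)$-time $4$-colouring is known to exist; the paper already reports that this synthesis succeeds for $d = 2$ at $k = 3$, and the construction of \citet{finitarycol} gives an explicit such algorithm for every $d \ge 2$. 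Combined with the $O(\log^* n)$ MIS step and the constant-time rule, this yields the claimed complexity.
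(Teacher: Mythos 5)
Your lower bound is fine. The upper bound, however, has a genuine gap exactly where the problem is hard: making the colouring consistent across tile boundaries. The per-anchor-shift template does not work as stated. If two Voronoi-adjacent anchors $v,w$ have $\alpha_1(v)\neq\alpha_1(w)$ and $\alpha_2(v)=\alpha_2(w)$, then any boundary edge along axis $x_1$ between their tiles flips the first bit twice (once from the coordinate, once from the shift) and the second bit not at all, giving two endpoints of identical colour; since Voronoi boundaries between nearby anchors generically contain edges in every axis direction, no choice of per-tile shifts can avoid this. You acknowledge this by invoking a ``small transition layer,'' but that layer is the entire content of the theorem, and asserting that the resulting CSP ``decouples'' and ``is always satisfiable for $k$ large enough'' is not an argument --- note that the same optimism applied to $3$ colours would be refuted by \theoremref{thm:3-colouring-lower-bound}. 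The paper's proof resolves precisely this point with a different mechanism: instead of Voronoi tiles, each anchor $v$ gets an adjustable radius $r(v)\in(\ell,2\ell)$, chosen by a local $(\ell,12d)$-conflict colouring so that the bounding hyperplanes of any two intersecting balls are separated by at least $2$ in every coordinate; then the parity of the number of ball borders a node lies on yields a two-class decomposition whose components each sit inside a single ball (Lemma~\ref{lem:bordercrossing}), and a $4$-colouring follows in $O(d\ell)$ rounds. Nothing in your sketch substitutes for the separation-of-hyperplanes step.

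Your fallback routes do not close the gap either. Invoking \theoremref{thm:speedup} or the synthesis procedure of \sectionref{sec:synthesis} is circular: the speed-up theorem takes an $o(n)$-time algorithm as input, and mere existence of a $4$-colouring only gives a $\Theta(n)$ brute-force algorithm; the synthesis procedure terminates only if the problem is already known to be $O(\log^* n)$, which is what you are trying to prove. The reported computational success at $k=3$ covers only $d=2$, whereas the theorem claims all $d\ge 2$. Citing \citet{finitarycol} is legitimate as attribution, but the paper explicitly notes that transferring results between the finitary-colouring and \local{} settings is not black-box, which is why it supplies the direct proof; leaning on that citation leaves your proof incomplete on its own terms.
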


Here we consider $d$-dimensional (toroidal) grids, for some fixed dimension $d$. Particularly, each of the $n^d$ vertices $v$ has $d$ coordinates: $v = (v_1,v_2,\ldots,v_d)$, where $v_i \in \{0,1,\ldots,n-1\} = [n]$. For the sake of simplifying notation, we will not distinguish between a vertex and the vector of its coordinates, and treat all arithmetic operations on coordinates as happening in a $(\operatorname{mod} n)$ regime, thus for any $u,v \in [n]^d$ we have $uv,u-v \in [n]^d$. For $x \in [n]$ we define $\Vert x \Vert = \min\{x,n-x\}$,  and for $v \in [n]^d$  the $L_1$ norm as $\Vert v \Vert = \sum_{1 \le i \le d} \Vert v_i \Vert$ and the $L_\infty$ norm as $\Vert v \Vert_{\infty} = \max_{1 \le i \le d} \Vert v_i \Vert$.
The $L_1$ and $L_\infty$ distance definition follows from the corresponding norm definition.
Observe that $L_1$ distance corresponds to the distance using grid edges.

\begin{definition}
	We define the \emph{radius-$r$ ball} of $u$ as
\[B_\infty(u,r) = \{v : \Vert u-v\Vert_\infty \le r\}.\]
	Moreover, we denote by $G^{[k]}$ the $k$th power of $G$ according to the $L_{\infty}$ norm, i.e., $V(G^{[k]}) = V(G)$ and
\[
	E(G^{[k]}) = \bigl\{ \{u,v\} : \Vert u - v \Vert_{\infty} \le k \bigr\}.
\]
\end{definition}

We also need the notion of \emph{conflict colouring}, as given by \citet{fraigniaud16local}.
\begin{definition}
Given graph $G$, lists of available colours for each vertex and lists of forbidden colour pairs for each edge, we say that a problem of assigning colours to vertices so that: (i) each vertex is assigned one of colours from this list and (ii) no edge observes on its endpoints a pair of colours from forbidden pair; is a $(\ell,d)$-conflict colouring problem, if:
\begin{enumerate}
\item each available colour list is of length at least $\ell$,
\item for each edge, for each colour on one endpoint, there are at most $d$ forbidden colours on the other endpoint.
\end{enumerate}
\end{definition}
\citet{fraigniaud16local} show that if $\ell / d > \Delta$, then there is a distributed algorithm solving $(\ell,d)$-conflict colouring in $\widetilde{O}(\sqrt{\Delta})+\log^*n$ rounds. However, we observe that a greedy approach gives a good enough running time for our purposes: (i) colour vertices of graphs using $\Delta^2$ colours (classical vertex-colouring problem) (ii) in $\Delta^2$ rounds, iterate through colours, in round $i$ vertices of colour $i$ take any colour in a greedy fashion.

\begin{proof}[Proof of Theorem~\ref{thm:vertex-colouring}]
Let us name a parameter $\ell$, of even value to be fixed later. We use a set of anchors $M$ being the maximal independent set of vertices of $G^{[\ell]}$. Since the degree of a vertex in $G^{[\ell]}$ is at most $(2\ell+1)^d$, $M$ can be found in $O((2\ell+1)^{2d}+\log^{*}n)$ rounds on $G^{[\ell]}$. Since $\Vert\cdot\Vert_{1} \le d  \Vert\cdot\Vert_{\infty}$, any algorithm can be simulated on $G$ with $\ell \cdot d$ multiplicative slowdown, giving in total $O(\ell \cdot d \cdot (2\ell+1)^{2d}+ \ell \cdot d \cdot \log^{*}n))$ rounds.

Our aim is to assign to every vertex of $v \in M$ a radius $r(v) \in \mathbb{Z}^{+}$, such that:
\begin{enumerate}
\item \label{properties_enum_1} $\{B_\infty(v, r(v)-1) : v \in M\}$ covers all $V$,
\item \label{properties_enum_2} for any $u,v \in M$ such that if $B_\infty(u,r(u)+1) \cap B_\infty(v,r(v)+1) \not= \emptyset$ then the bounding hyperplanes for those $L_\infty$ balls are separated, that is,
\[\forall_{1 \le i \le d} \min_{\varepsilon_1,\varepsilon_2 \in \{-1,1\}} \bigl\Vert (u_i + \varepsilon_1 \cdot r(u)) - (v_i + \varepsilon_2 \cdot r(v)) \bigr\Vert \ge 2. \]
\end{enumerate}

Consider the family of  $L_\infty$ balls of radius $\ell$ centred in every vertex of $M$: $\{ B_\infty(v,\ell) : v \in M \}$.
By the properties of MIS, this family covers every vertex of $V$, as otherwise we could add one more vertex to $M$, thus to satisfy \ref{properties_enum_1} it is enough to have $r(v) > \ell$.

Next we show that for large enough $\ell$ we can find an appropriate assignment of radii fast, by reduction to local conflict colouring, with colours $\ell < r(v) < 2\ell$.

\begin{figure}
\begin{minipage}{.5\textwidth}
\centering\includegraphics[scale=0.4]{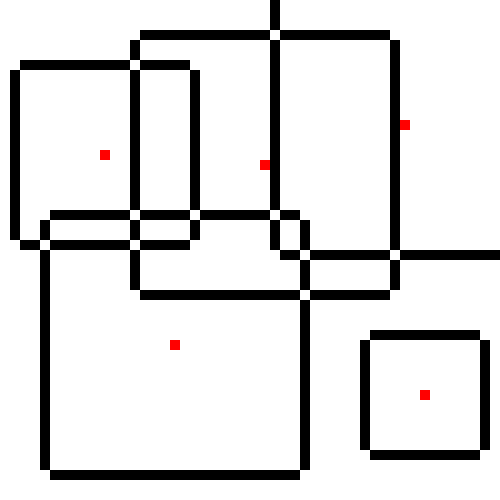}
\end{minipage}
\begin{minipage}{.5\textwidth}
\centering\includegraphics[scale=0.4]{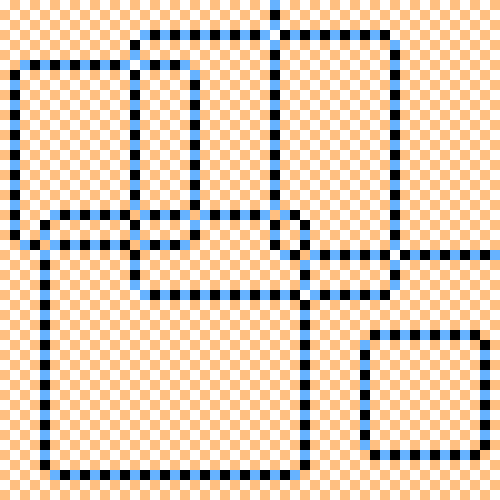}
\end{minipage}
\caption{On the left, partition resulting from a choice of ball centres and ball radii. Induced vertex 4-colouring on the right.}
\label{fig:4coloring_example}
\end{figure}

\begin{lemma}
\label{lem:balls}
Consider family of $L_\infty$ balls of radius $c \ell$ centred in every vertex of $M$. Then every such ball intersects with at most $(8c)^d$ other balls.
\end{lemma}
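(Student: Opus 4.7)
The plan is a classical $L_\infty$ packing argument, driven by the fact that $M$ is independent in $G^{[\ell]}$ and hence any two distinct vertices of $M$ are at pairwise $L_\infty$ distance at least $\ell+1$.

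As a first step, I would translate intersections into distances: two $L_\infty$ balls of radius $c\ell$ centred at $u$ and $v$ intersect if and only if $\|u-v\|_\infty \leq 2c\ell$. So, fixing an arbitrary $u \in M$, it suffices to upper-bound $|M \cap B_\infty(u, 2c\ell)|$ and subtract one to discount $u$ itself.

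For the packing step, around each $v \in M \cap B_\infty(u, 2c\ell)$ I would inscribe a smaller $L_\infty$ ball of radius $\lfloor \ell/2 \rfloor$. These small balls are pairwise disjoint, since for any two distinct $v, v' \in M$ we have $\|v - v'\|_\infty \geq \ell + 1 > 2\lfloor \ell/2 \rfloor$, and by the triangle inequality each of them is contained in the inflated ball $B_\infty(u, 2c\ell + \lfloor \ell/2 \rfloor)$. Using $|B_\infty(w, r)| = (2r+1)^d$ and comparing volumes then yields
\[
  |M \cap B_\infty(u, 2c\ell)| \;\leq\; \left( \frac{4c\ell + 2\lfloor \ell/2 \rfloor + 1}{2\lfloor \ell/2 \rfloor + 1} \right)^d \;\leq\; (4c+2)^d \;\leq\; (8c)^d,
\]
where the final cosmetic estimate is valid for $c \geq 1/2$. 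The remaining case $c < 1/2$ is trivial, since then $2c\ell < \ell + 1$ is incompatible with the MIS separation $\|u - v\|_\infty \geq \ell+1$, so no other ball intersects $B_\infty(u, c\ell)$ at all.

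I do not anticipate any serious obstacle; the proof is essentially a volume estimate. The only minor points to be careful about are the parity of $\ell$ in evaluating $2\lfloor \ell/2 \rfloor + 1$ (which is either $\ell$ or $\ell+1$, both handled uniformly by the bound above), and the implicit assumption that $\ell$ is small enough compared to the torus side $n$ so that $|B_\infty(w, r)| = (2r+1)^d$ is exact. Both are automatic in the regime $\ell = O(1)$ in which the lemma is applied in the proof of Theorem~\ref{thm:vertex-colouring}.
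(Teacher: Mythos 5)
Your proof is correct and takes essentially the same route as the paper: both reduce ball intersection to $\Vert u-v\Vert_\infty \le 2c\ell$ and then run a volume-packing argument with the pairwise disjoint radius-$\ell/2$ balls around the centres in $M$ (the paper lower-bounds each small ball's overlap with $B_\infty(u,2c\ell)$ by $(\ell/2+1)^d$ instead of inflating the outer ball, but this is the same estimate). Your treatment of the floors and of the degenerate small-$c$ case is, if anything, slightly more careful than the paper's.
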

\begin{proof}
Consider $u,v \in M$, $u \not= v$.
By the triangle inequality $\Vert u-v \Vert_\infty > \ell$, thus $B_\infty(v,\ell/2) \cap B_\infty(v,\ell/2) = \emptyset$.
Also, $B_\infty(v,c\ell)$ intersects $B_\infty(u,c\ell)$ iff $v \in B_\infty(u,2c\ell)$, but that implies that $\left| B_\infty(v, \ell/2) \cap B_\infty(u,2c \ell)\right| \ge (\ell/2+1)^d$.
Using bounds on first, the fact that all $B_\infty(v,\ell/2), v \in M$ balls are disjoint, and then that if the points are centres of balls of radius $c\ell$ that are intersecting with ball centred in $u$, then they intersect on large volume, we can bound the total number of intersecting balls as
\[ \frac{ |B_\infty(u,2c \ell)| }{ (\ell/2+1)^d } = \left(\frac{4 c\ell + 1}{\ell/2+1}\right)^d \le (8c)^d.\qedhere\]
\end{proof}

Instead of considering conflicts over intersection of balls $B_\infty(v,r(v))$, we will guarantee no conflicts over intersections of $B_\infty(v,2\ell)$.
Now consider the graph $H$, over vertex set $M$, with edges connecting every pair of $u,v$ such that $B_\infty(u,r(u)+1) \cap B_\infty(v,r(v)+1) \not= \emptyset$.
By Lemma~\ref{lem:balls}, the maximum degree in $H$ can be upper-bounded as $\Delta_H \le 16^d$.

While vertices do not know their coordinates, that is $v$ does not have information on the values of $(v_1,\ldots,v_n)$, a pair of vertices $u,v$ such that $(u,v) \in H$ is able to determine $u_i-v_i$ for all $i$. To satisfy condition \ref{properties_enum_2} it is enough to exclude at most $12$ possible values of $r(u)$, per each dimension and each value of $r(v)$. That is, we need to ensure that
\begin{equation}
\label{eq:conflict_coloring}
\forall_{(u,v) \in H} \forall_{1 \le i \le d} \forall_{\varepsilon_1,\varepsilon_2 \in \{-1,1\}} \forall_{x \in \{-1,0,1\}}\ \  \varepsilon_1 \cdot r(u)  \not= x  + \varepsilon_2\cdot r(v) + (v_i - u_i).
\end{equation}
Thus our problem is an $(\ell,12d)$-conflict colouring, and can be solved locally if $(\ell-1) / (12d) > \Delta_H$, so it is enough to set $\ell = 1+12 d \cdot 16^d$ for our purposes. Running time is upper-bounded by $O(\text{poly}(\Delta_H) + \log^{*} n)$ rounds in $H$, which can be simulated with multiplicative overhead of $\ell$, giving total time of this part $O(\ell \cdot 16^{O(d)} + \ell \cdot \log^{*}(n))$ rounds.

What remains to show, is that given $M$ and all $r(v)$, we can compute locally a $(2,2 d  \ell)$ weak diameter network decomposition of $G$: a decomposition $V = V_1 \cup V_2$ into disjoint sets, such that each connected component of $V_i$ is of diameter $2 d \ell$. Given such a network decomposition, a $4$ colouring of $G$ can be found in $O(d \ell)$ rounds trivially.

We say that $u \in V$ is on the $i$th border of $v \in M$, if $u_i \in \{v_i-r(v),v_i+r(v)\}$ and $\Vert u - v \Vert - r(v)$.
By the property \ref{properties_enum_2} of radii, we know that every vertex $v$ is on the $i$th dimension border of at most one vertex $u \in M$, and it can be decided locally during the computation of the radii. We define
\[\textrm{count}(v) = \bigl|\bigl\{ (i,u) : v \text{ is on the }i\text{-th dimension border of }u\bigr\}\bigr|.\]
We assign $v$ to $V_1$ iff $\textrm{count}(v)$ is odd and to $V_2$ iff $\textrm{count}(v)$ is even.

\begin{lemma}
\label{lem:bordercrossing}
If $u,u' \in V$ are neighbouring in $G$, and there is $v$ such that: $\Vert u - v \Vert_\infty = r(v)-1$ and $\Vert u' - v \Vert_\infty = r(v)$, then $\textrm{count}(u) + 1 = \textrm{count}(u')$.
\end{lemma}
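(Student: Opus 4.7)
The plan is to decompose $\text{count}(u') - \text{count}(u)$ as a sum of contributions from individual anchors in $M$ and show that only $v$ produces a net change, equal to $+1$.

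Since $u$ and $u'$ are neighbours in $G$, they differ in exactly one coordinate, say $j$, by $\pm 1$. The first step is to handle $v$ itself. The hypothesis $\|u-v\|_\infty = r(v)-1$ places $u$ strictly inside $v$'s ball, so $u$ is on no border of $v$. For $u'$, the only coordinate $i$ in which $|u'_i - v_i|$ can equal $r(v)$ is $i=j$ (since $u'_i = u_i$ and $|u_i - v_i| \le r(v)-1$ for $i \ne j$); combined with $\|u'-v\|_\infty = r(v)$, this shows $u'$ lies on the $j$-th border of $v$ and no other border, so $v$ contributes $+1$ to $\text{count}(u') - \text{count}(u)$.

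For any other anchor $w \ne v$, I would show the contribution is zero using the separation property~\ref{properties_enum_2}: whenever $B_\infty(v,r(v)+1)$ and $B_\infty(w,r(w)+1)$ intersect, the corresponding bounding hyperplanes are separated by at least $2$ along every axis. Applying this along axis $j$: if $u$ or $u'$ were on the $j$-th border of $w$, then a hyperplane of the form $w_j + \varepsilon_1 r(w)$ would lie at distance at most $|u_j - u'_j| = 1$ from the hyperplane $v_j + \varepsilon_2 r(v)$ hit by $u'$, contradicting the required separation; the needed intersection of inflated balls is witnessed by the point $u$ or $u'$ itself. For directions $i \ne j$, the coordinates $u_i$ and $u'_i$ coincide, and a short check shows that $\|u-w\|_\infty = \|u'-w\|_\infty$ whenever some $i \ne j$ satisfies $|u_i - w_i| = r(w)$ (the $L_\infty$-maximum is then attained off axis $j$), so $u$ lies on the $i$-th border of $w$ if and only if $u'$ does.

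Summing contributions yields $\text{count}(u') = \text{count}(u) + 1$, as claimed. The main obstacle is the case analysis for $w \ne v$: one must carefully match inflated-ball intersections to the hypothesis of property~\ref{properties_enum_2} and rule out the ``double border crossing'' scenario along axis $j$, where the edge $uu'$ simultaneously crosses a hyperplane of some $w \ne v$ while already crossing $v$'s boundary.
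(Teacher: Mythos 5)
Your plan is correct and follows essentially the same route as the paper's proof: the anchor $v$ contributes exactly $+1$ (since $u$ is strictly interior and $u'$ lies on precisely the $j$-th border of $v$), property~\ref{properties_enum_2} applied along axis $j$ rules out $j$-th-border membership of $u$ or $u'$ with respect to any other anchor, and off-axis border memberships are unchanged. The only loose spot is the intermediate claim that $\Vert u-w\Vert_\infty=\Vert u'-w\Vert_\infty$ whenever $|u_i-w_i|=r(w)$ for some $i\ne j$ — as stated this can fail if the maximum is attained on axis $j$, and one needs the already-established fact that neither $u$ nor $u'$ is on the $j$-th border of $w$ to close that case, which is exactly the ``double crossing'' check you flag as the remaining work.
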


\begin{proof}
Let $j$ be the dimension such that $u_j \not= u'_j$. Then  $\Vert u'_j - v_j \Vert = r(v)$, $\Vert u_j - v_j \Vert = r(v)-1$ and $\forall_{i \not= j} \Vert u'_i - v_i \Vert = \Vert u_i - v_i \Vert \le r(v)-1$, that is $j$ was the only dimension on which it was on the border of $v$. Moreover, $u$ cannot be on the $j$th dimension border for any other vertex $v'$, as then $v$ and $v'$ would violate property \ref{properties_enum_2} of radii over the $j$th coordinate.

Now we observe that, while $u'$ might be on the $i$th dimension border for some $w$, $i \not= j$, $w \not= v$, those remain the same for $u$. Namely, if we assume otherwise, that is that $\Vert u' - w \Vert_\infty = r(w)$ and $\Vert u - w \Vert_\infty \not= r(w)$, then by simple observation that only the $j$th coordinate changes in those difference vectors, we would have that $u'$ is on $j$th dimension border for $w$, a contradiction.
By analogous reasoning, we have that for any $i$th dimension border that $u$ is on, it remains the same for $u'$.

All in all, we have that $u$ is on one less dimension border than $u'$.
\end{proof}

Now we proceed to show that every connected component of $V_1$ or $V_2$ is a subset  fully contained in $B_\infty(v,r(v)-1)$ for some $v \in M$. Let us assume that this is not the case. Take any connected component $X$ and $u \in X$, and let $v$ be such that $u \in B_\infty(v,r(v)-1)$ (by property \ref{properties_enum_1} there is always one). If $X \not\subseteq B_\infty(v,r(v)-1)$, then there are neighbouring $u',u'' \in X$, such that $\Vert u' - v \Vert_\infty = r(v)-1$ and $\Vert u''- v \Vert_\infty =r(v)$. However, by Lemma~\ref{lem:bordercrossing} they cannot be on the same side of the partition, a contradiction.
\end{proof}

\section{Lower bound for 3-colouring 2-dimensional grids}\label{app:vertex-3-colouring}

\begin{theorem}\label{thm:3-colouring-lower-bound}
The complexity of $3$-colouring on $2$-dimensional grids is $\Omega(n)$.
\end{theorem}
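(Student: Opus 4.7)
The plan is to combine the speed-up theorem with a rigidity property of 3-colourings on the torus, and then argue that no local output function can satisfy the resulting constraints for every $n$. First, suppose for contradiction that $3$-colouring admits an algorithm of complexity $T(n)=o(n)$. By \theoremref{thm:speedup} this may be replaced by one of the form $A'\circ S_k$ with $k$ a constant and $A'$ a finite function of the radius-$O(k)$ pattern of anchors (an MIS of $G^{(k/2)}$).

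Next, I would extract a global integer invariant of proper 3-colourings of $C_n\times C_n$. For such a colouring $c$, define $\delta_x(v)=c(v+e_x)-c(v)$ and $\delta_y(v)=c(v+e_y)-c(v)$, viewed as elements of $\{+1,-1\}$ in $\mathbb{Z}/3$. Around every $2\times 2$ face the signed sum of the four incident increments is $0$ modulo $3$ by telescoping; since this is a sum of four $\pm 1$'s in $\{-4,-2,0,2,4\}$ and the only multiple of $3$ in that set is $0$, the face identity in fact holds exactly in $\mathbb{Z}$. Summing it over $x$ shows that $H(c):=\sum_{x=0}^{n-1}\delta_x(x,y_0)$ does not depend on the row $y_0$, so it is a well-defined integer; closing a horizontal loop forces $H(c)\equiv 0\pmod 3$, and since it is a sum of $n$ terms $\pm 1$ it also satisfies $H(c)\equiv n\pmod 2$. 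In particular $|H(c)|\ge 3$ whenever $n$ is odd.

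I would then show that $H$ is essentially algorithm-determined. If two MIS's $I_1,I_2$ of $G_n^{(k/2)}$ differ only inside a region of diameter at most $n-O(k)$, then $A'(I_1)$ and $A'(I_2)$ agree outside a slightly larger region and hence agree on some full row, giving $H(A'(I_1))=H(A'(I_2))$. Since any two MIS's on $G_n^{(k/2)}$ can be connected by a sequence of single-anchor swaps, each a local modification of the above kind, the signature takes a constant value $H_A(n)$ depending only on $A'$ and $n$.

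The final step is to derive a contradiction from the fact that $A'$ is a fixed, $n$-oblivious finite local function. For every sufficiently large $n$ one can exhibit an MIS that is ``essentially periodic'' away from a bounded-size defect accommodating the torus wrap-around; running $A'$ on such an MIS yields a decomposition $H_A(n)=\alpha\, n+\varepsilon(n\bmod K)$, where the slope $\alpha$ and the bounded correction $\varepsilon$ depend only on $A'$. Demanding $H_A(n)\equiv 0\pmod 3$ along the progression $n\equiv 0\pmod K$ forces $K\alpha\in 3\mathbb{Z}$; combined with the parity requirement $H_A(n)\equiv n\pmod 2$ applied across the finite set of residue classes modulo $2K$ (which span both parities of $n$ when $K$ is even), the bounded correction $\varepsilon$ would have to take mutually incompatible integer values, the desired contradiction. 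The main obstacle is precisely this last step: producing the essentially-periodic MIS on $G_n^{(k/2)}$ in a way compatible with the $k/2$-independence constraint and systematically ruling out all admissible $(\alpha,\varepsilon)$. A cleaner alternative might avoid varying $n$ altogether and instead exhibit, on a single large torus, two MIS's whose local views agree almost everywhere but whose row-sums of $\delta_x$ must differ under any globally consistent $A'$, directly contradicting the invariance established in the third step.
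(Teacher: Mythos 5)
Your invariant $H(c)$ is sound and is essentially the same quantity the paper uses: the paper counts signed northbound/southbound crossings of a row by cycles in an auxiliary digraph on colour-$3$ nodes and calls the row-independent total $s(G)$; your height-function formulation via $\delta_x\in\{\pm1\}$ is a cleaner way to get the same facts ($H$ is row-independent, $3\mid H$, $H\equiv n\pmod 2$, $|H|\le n$). The step where locality forces the invariant to be algorithm-determined also matches the paper (Lemma~\ref{lem:sum_is_const}), though the paper's version --- replace the identifiers on half the rows and compare outputs on a far-away row --- avoids your unproven claim that any two maximal independent sets of $G_n^{(k/2)}$ are connected by single-anchor swaps through valid intermediate MIS's; that reconfiguration statement is not obvious and your argument should not rest on it when the identifier-replacement trick suffices.

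The genuine gap is the final step, and your proposed resolution does not work. The congruence constraints you extract are jointly satisfiable: for instance $H_A(n)=3$ for odd $n$ and $H_A(n)=0$ for even $n$ is of the form $\alpha n+\varepsilon(n\bmod K)$ with $\alpha=0$, is divisible by $3$, has the parity of $n$, and is bounded by $n$ --- so ``demanding $H_A(n)\equiv 0\pmod 3$ and $H_A(n)\equiv n\pmod 2$'' yields no contradiction, whether or not you vary $n$. The real content of the lower bound is that a $o(n)$-time algorithm cannot \emph{control} a global sum of locally produced $\pm1$/$0$ outputs to hit a prescribed nonzero odd target. The paper isolates this as the $q$-sum coordination problem on directed cycles (\theoremref{thm:q-sum-lower-bound}) and proves it by a fragment-swapping argument: either the sum is insensitive to the ``filler'' between two marked fragments, in which case combining three fragments around a cycle yields a parity contradiction with $q(n)$ odd, or it is sensitive, in which case repeatedly lengthening the filler pumps the sum past $n/2$. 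Your proposal contains no substitute for this argument; without it, the invariant machinery (yours or the paper's) establishes only that valid $3$-colourings of odd tori must have $|H|\ge 3$, which by itself rules out nothing.
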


The rough outline of the proof is as follows:
\begin{itemize}[noitemsep]
    \item We first show that a certain artificial coordination problem requires $\Omega(n)$ rounds on directed cycles.
    \item We then reduce this problem to $3$-colouring two-dimensional grids. Essentially, we show that any $3$-colouring algorithm for grids solves an instance of the aforementioned coordination problem for each row of the grid.
\end{itemize}
As with the $4$-colouring upper bound, the general idea of the proof is very similar to the one used by Holroyd et al.~\cite{finitarycol}. However, directly translating the proof seems more difficult in this case due to subtle differences between the models.

\paragraph{\boldmath The $q$-sum coordination problem.} Let $q \colon \N \to \Z$ be a function. In the \emph{$q$-sum coordination problem}, we assume that the input graph is a directed cycle with unique identifiers, and each node $v$ has to output $\ell(v) \in \{ -1, 0, 1 \}$ such that $\sum_{v \in V} \ell(v) = q(n)$, where $n = |V|$. That is, this is a family of problems, one for each possible function $q$. We now show that this problem is global for even fairly simple choices of $q$.

\begin{theorem}\label{thm:q-sum-lower-bound}
Let $q \colon \N \to \Z$ be a function such that
\begin{enumerate}[noitemsep]
    \item $q(n)$ is odd when $n$ is odd, and
    \item $|q(n)| \le n/2$ for all $n$.
\end{enumerate}
Then $q$-sum coordination requires $\Omega(n)$ rounds.
\end{theorem}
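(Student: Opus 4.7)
The plan is to argue by contradiction: suppose $A$ is a deterministic algorithm solving $q$-sum coordination on the $n$-cycle in time $t = o(n)$. First I would apply the cycle analogue of \theoremref{thm:speedup} together with the standard Ramsey-theoretic argument of \citet{naor95what} to reduce to the case where $t = O(\log^{*} n)$ and $A$ is order-invariant on a suitably chosen identifier subset. Equivalently, $A$ is determined by a function $a : S_{2t+1} \to \{-1, 0, 1\}$ mapping the permutation pattern of the $2t+1$ identifiers in a node's radius-$t$ view to its output.

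Next I split into two cases based on the value $a(\pi_{\mathrm{id}})$, where $\pi_{\mathrm{id}} = (1, 2, \ldots, 2t+1)$ is the ascending pattern. In the first case $a(\pi_{\mathrm{id}}) \ne 0$, consider the ``sorted'' identifier assignment $\sigma_{\uparrow}$ placing identifier $i$ at position $i$. All but the $2t$ views straddling the cyclic wrap carry $\pi_{\mathrm{id}}$, so summing the outputs yields $(n - 2t)\, a(\pi_{\mathrm{id}}) + W_{\uparrow} = q(n)$ with $|W_{\uparrow}| \le 2t$. Combined with the assumption $|q(n)| \le n/2$, this forces $t \ge n/8$, contradicting $t = o(n)$.

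The main case is $a(\pi_{\mathrm{id}}) = 0$, where I exploit the parity hypothesis on $q(n)$. The plan is to construct an identifier assignment $\sigma_2$ whose total output sum is forced to equal $2W$ for some $W \in \mathbb{Z}$, contradicting $q(n)$ being odd. I would partition $C_n$ into four consecutive arcs $B_1, B_2, B_3, B_4$, each of length at least $t+1$, and assign identifiers from a ``low'' range to $B_1 \cup B_3$ and from a ``high'' range (separated by a large gap $N$) to $B_2 \cup B_4$, each range listed in increasing order along the cycle. The only descents then occur at the $B_2 \to B_3$ and $B_4 \to B_1$ transitions, and by choosing the low and high sub-ranges symmetrically, the two descent neighbourhoods become order-isomorphic. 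Order-invariance then forces the $2t$ outputs in the views straddling each descent to form the same sequence, summing to a common $W$, while every other view carries $\pi_{\mathrm{id}}$ and outputs $0$. Hence $\sum_v \ell(v) = 2W$ must equal $q(n)$, but $2W$ is even while $q(n)$ is odd, a contradiction.

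The main obstacle is verifying the symmetry in the main case: that the two descent neighbourhoods in $\sigma_2$ really produce identical sequences of outputs. This reduces to an explicit combinatorial check that at each offset $s \in \{-t+1, \ldots, t\}$ from either descent, the $2t+1$ visible identifiers exhibit the same permutation pattern---it turns out to be the cyclic shift of $\pi_{\mathrm{id}}$ by $s + t$ positions. The check goes through as long as each arc $B_i$ has size at least $t + 1$ and the gap between the low and high identifier ranges exceeds the diameter of either range, both easily arranged for $t = o(n)$ and $n$ sufficiently large.
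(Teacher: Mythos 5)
The first step of your plan---reducing to an order-invariant $O(\log^* n)$-time algorithm---does not go through, and everything after it depends on that step. There are two independent obstructions. First, \theoremref{thm:speedup} (and its cycle analogue) applies only to \lcl{} problems: its proof lies to the algorithm about the value of $n$ and then argues that locally valid outputs must be globally valid. The $q$-sum coordination problem is not an \lcl{}: the constraint $\sum_v \ell(v) = q(n)$ is a single global equation that cannot be verified from $O(1)$-radius neighbourhoods, and under a simulation with fake instance size $k$ the algorithm would aim for the wrong target $q(k)$ with no locally detectable violation. (Indeed, if the speed-up applied to this problem, the paper would not need the fragment machinery at all.) Second, the Ramsey-theoretic order-invariance reduction of \citet{naor95what} needs the Ramsey number for $(2t+1)$-subsets of the $\mathrm{poly}(n)$-sized identifier space to yield a homogeneous set of size $n$, which forces $t = o(\log^* n)$; it is not available for $t = \Theta(\log^* n)$, let alone $t = o(n)$. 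So neither route supplies the order-invariant function $a \colon S_{2t+1} \to \{-1,0,1\}$ on which both of your cases are built.

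That said, your case analysis captures the right dichotomy and closely mirrors the paper's proof: your case $a(\pi_{\mathrm{id}}) \ne 0$ corresponds to the paper's amplification step (a fixed nonzero increment $d$ per added identifier is iterated until the sum exceeds $n/2$), and your case $a(\pi_{\mathrm{id}}) = 0$ with two order-isomorphic descents corresponds to the paper's parity argument (Lemma~\ref{lemma:A-change-value}), where a length-$n$ cycle is assembled from fragments in two ways whose partial sums must have matching parities, contradicting the oddness of $q(n)$. The paper reaches these conclusions without order-invariance by working with concrete disjoint identifier fragments and using only $T(n) < n/200$: the output in the interior of a long fragment depends only on that fragment, so the same fragments can be reassembled into several different length-$n$ cycles, and the resulting system of parity equations on the quantities $P(F_i \mathord{*} F_j)$ is shown to be inconsistent unless some middle fragment can flip a parity---which then feeds the amplification step. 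To repair your argument you would need to replace order-invariance by this kind of cut-and-paste reasoning; as written, the claim that the two descent neighbourhoods produce identical output sequences has no justification, since the algorithm is free to behave differently on the distinct identifiers appearing there.
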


\begin{proof}

Assume that we have an algorithm $A$ that solves the problem in $T(n)=o(n)$ rounds. Fix a sufficiently large odd $n$ such that $T(n) < n/200$. We show that we can construct an identifier assignment for a directed cycle of length $n$ for which the sum of the outputs of $A$ is greater than $n/2$, giving a contradiction.

We say that an \emph{input fragment} $F$ is a sequence of unique identifiers. We may interpret an input fragment $F$ as a connected subgraph of a possible input graph of size $n$; we denote the length of sequence $F$ by $|F|$, and say that fragments $F_1$ and $F_2$ are disjoint if the corresponding identifier sets are disjoint. Given at least two disjoint input fragments $F_1, F_2, \dotsc, F_k$ and $|F_i| \ge n/100$, we define $A(F_1 F_2 \dotsb F_k)$ as the sum of output labels $A$ gives to vertices from the midpoint of $F_1$ (inclusive) to the midpoint of $F_k$ (exclusive) in the subgraph corresponding to the concatenated sequence $F_1 F_2 \dotsb F_k$. Note that since $T(n) < n/100$, this value only depends on $F_1, F_2, \dotsc, F_k$. Moreover, denote by $P(F_1 F_2 \dotsb F_k)$ the parity of $A(F_1 F_2 \dotsb F_k)$. It follows immediately from the definition that $P(F_1 \dotsb F_j \dotsb F_k) = P(F_1 \dotsb F_j) +  P(F_{j} \dotsb F_k)$.

\begin{lemma}\label{lemma:A-change-value}
There are disjoint input fragments $F_1$ and $F_2$ with $|F_1| = |F_2| = \lceil n/100 \rceil$ such that for some input fragments $X_1$, $X_2$ disjoint from $F_1$ and $F_2$ with $|X_1|, |X_2| \in [2n/100,96n/100]$ we have $P(F_1 X_1 F_2) \not= P(F_1 X_2 F_2)$.
\end{lemma}

\begin{proof}
Assume that the claim does not hold. Then, for any disjoint input fragments $F_1$ and $F_2$ with $|F_1| = |F_2| = \lceil n/100 \rceil$ there is a value $P(F_1 \mathord{*} F_2)$ such that $P(F_1 X F_2) = P(F_1 \mathord{*} F_2)$ for all $X$ with $|X| \in [2n/100,96n/100]$. By considering a cycle of form $F_1 X_1 F_2 X_2$, where all fragments are disjoint, $|X_1|, |X_2| \in [2n/100,96n/100]$ and $|F_1|+|X_1|+|F_2|+|X_2|=n$, we observe that for any $F_1$ and $F_2$ we have that $P(F_1 \mathord{*} F_2)$ and $P(F_2 \mathord{*} F_1)$ have fixed, different values, since
\[ P(F_1 \mathord{*} F_2) + P(F_2 \mathord{*} F_1) = P(F_1 X_1 F_2) + P(F_2 X_2 F_1) = q(n)\,,\]
which is odd. Moreover, fixing disjoint $F_1$, $F_2$ and $F_3$ such that $P(F_1 \mathord{*} F_2) = 0$ and considering a length-$n$ cycle of form $F_1 X_1 F_2 X_2 F_3 X_3$, we observe by a similar argument that either $P(F_2 \mathord{*} F_3) = 0$ or $P(F_3 \mathord{*} F_1) = 0$; by relabelling $F_1$, $F_2$ and $F_3$ if necessary, we can assume that $P(F_1 \mathord{*} F_2) = P(F_2 \mathord{*} F_3) = 0$. Considering disjoint fragments $Y_1$ and $Y_2$ with $|Y_1| = |Y_2| = \lceil 2n/100 \rceil$, we finally observe that
\[ P(F_1 \mathord{*} F_3) = P(F_1 Y_1 F_2 Y_2 F_3) = P(F_1 Y_1 F_2) + P(F_2 Y_2 F_3) = P(F_1 \mathord{*} F_2) + P(F_2 \mathord{*} F_3) = 0 \,.\]

Now assume that $F_1$, $F_2$ and $F_3$ are disjoint fragments with $|F_1| = |F_2| = |F_3| = \lceil n/100 \rceil$ as above, and let $X$ be a fragment disjoint from $F_1$, $F_2$ and $F_3$ with $|X| =  \lceil 2n/100 \rceil$. We now have the following:
\begin{align}
    P(F_1 X F_2) = 0 \hspace{5mm} \Rightarrow \hspace{5mm} & P(F_1 X) = P(X F_2)\,,\label{eq:cyc1}\\
    P(F_2 X F_3) = 0 \hspace{5mm} \Rightarrow \hspace{5mm} & P(F_2 X) = P(X F_3)\,,\label{eq:cyc2}\\
    P(F_1 X F_3) = 0 \hspace{5mm} \Rightarrow \hspace{5mm} & P(F_1 X) = P(X F_3)\,.\label{eq:cyc3}
\end{align}
Thus, we have
\begin{equation}
P(F_2 X) \overset{\text{(\ref{eq:cyc2})}}{=} P(XF_3) \overset{\text{(\ref{eq:cyc3})}}{=} P(F_1 X) \overset{\text{(\ref{eq:cyc1})}}{=} P(XF_2)\,. \label{eq:cyc4}
\end{equation}
Furthermore, we have that
\begin{align}
    P(F_3 X F_2) = 1 \hspace{5mm} \Rightarrow \hspace{5mm} & P(F_3 X) = P(X F_2) + 1\,, \label{eq:cyc5}\\
    P(F_2 X F_1) = 1 \hspace{5mm} \Rightarrow \hspace{5mm} & P(X F_1) = P(F_2 X) + 1\,. \label{eq:cyc6}
\end{align}
Thus,
\begin{equation*}
1 = P(F_3 X F_1) = P(F_3X) + P(X F_1) \overset{\text{(\ref{eq:cyc5},\ref{eq:cyc6})}}{=} P(X F_2) + 1 + P(F_2 X) + 1 \overset{\text{(\ref{eq:cyc4})}}{=} 0\,,
\end{equation*}
which is a contradiction.
\end{proof}

Now let $F_1$, $F_2$, $X_1$ and $X_2$ be as in \lemmaref{lemma:A-change-value}. If $|X_1| = |X_2|$, we are done, since for any fragment $Y$ disjoint from the other fragments such that $|F_1| + |F_2| + |X_1| + |Y| = n$, the cycles $F_1 X_1 F_2 Y$ and $F_1 X_2 F_2 Y$ are valid instances of $q$-sum coordination with different outputs. Otherwise, we can assume that $|X_1|+1=|X_2|$ without loss of generality; in fact, we can assume that $X_2$ is obtained from $X_1$ by adding an unique identifier to the start of $X_1$, since by the above observation $P(F_1 X F_2)$ is defined by the length of $X$.

Now let $d=A(F_1 X_2 F_2)-A(F_1 X_2 F_2)$, and observe that $|d| \geq 1$. Let $X_3$ be obtained by adding another unique identifier to the start of $X_2$. Consider removing the last identifier from $X_1$, $X_2$ and $X_3$ to obtain $X_1^-$, $X_2^-$ and $X_3^-$, respectively. First, consider $X_2^-$; clearly $A(F_1 X_2^- F_2)=A(F_1 X_1 F_2)$ since $|X_2^-|=|X_1|$. Thus, removing the last identifier $v$ from the end of $X_2$ reduces the sum of the outputs in the $T(n)$-neighbourhood of $v$ by $d$, and since $X_2$ is sufficiently long, this does not effect the first vertices of $X_2$. However, since the local changes look the same within a $T(n)$-radius neighbourhood, this implies that we also have $A(F_1 X_3 F_2)-A(F_1 X_3^- F_2)=d$ and $A(F_1 X_1 F_2)-A(F_1 X_1^- F_2)=d$. That is, adding an identifier to the front of $X_2$ increases the score by $d$, and removing an identifier from the end of $X_1$ decreases the score by $d$.

By repeatedly applying this argument to both directions, we can construct a sequence of fragments $Y_1, Y_2, \dotsc, Y_{\lceil 9n/10 \rceil}$ such that $|Y_1| = \lceil 2n/100 \rceil$, $|Y_{k+1}| = |Y_{k}| + 1$ and $A(F_1 Y_{k+1} F_2) = A(F_1 Y_1 F_2) + kd$. By definition of the problem, $|A(F_1 Y_1 F_2)| \le 4n/100$, so $|A(F_1 Y_{\lceil 9n/10 \rceil} F_2)| \ge 8n/10$ and $|Y_{\lceil 9n/10 \rceil}| \ge 92n/100$. But this means that the sum of outputs of $A$ on any input containing the fragment $F_1 Y_{\lceil 9n/10 \rceil} F_2$ has absolute value more than $n/2$, which is a contradiction.
\end{proof}

\paragraph{\boldmath Reduction to $3$-colouring.} Fix an algorithm $A$ for $3$-colouring grids, and assume $A$ runs in $T(n) = o(n)$ rounds. By adding a constant-round preprocessing step, we may assume that $A$ produces a greedy colouring, that is, if node $v$ has colour $2$, then it has a neighbour of colour $1$, and if it has colour $3$, then it has neighbours of colours $1$ and $2$. We now show that algorithm $A$ can be used to solve $q$-sum coordination in $T(n)$ round for some $q$ satisfying the conditions in \sectionref{thm:q-sum-lower-bound}.

\begin{figure}[b]
\centering
\includegraphics[width=\columnwidth,page=4]{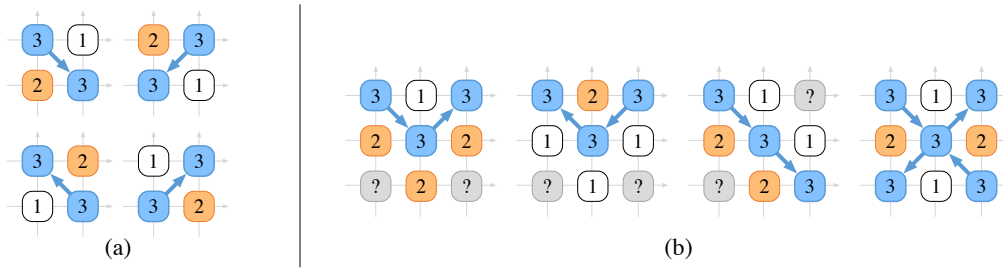}
\caption{(a) Edge directions in $H$. (b) Possible neighbourhoods in $H$ up to rotation.}\label{fig:3-col-cycles}
\end{figure}

Fix the input size $n$ and an input grid $G$, and consider the colouring $c \colon V(G) \to \{ 1,2,3 \}$ produced by $A$. We will now define an auxiliary directed graph $H$ with node set $V(H) = \{ v \in V(G) \colon c(v) = 3\}$ as follows. We add a directed edge to $E(H)$ between two nodes $u,v$ with $c(v) = c(u) = 3$ if they share two neighbours $w,w'$ such that $c(w) = 1$ and $c(w') = 2$, and we direct this edge so that the common neighbour with colour $1$ is to the ``left'' of the edge (\figureref{fig:3-col-cycles}a shows all possibilities). There are four possible neighbourhoods for a node in $H$, up to rotation (\figureref{fig:3-col-cycles}b):
\begin{enumerate}
    \item If $v$ has exactly one neighbour of colour 1 (say to the north), then there is an in-edge from the node to the north-west and an out-edge to the node to the north-east.
    \item If $v$ has exactly one neighbour of colour 2 (again to the north), then there is an in-edge from the node to the north-east and an out-edge to the node to the north-west.
    \item If $v$ has two neighbours of colour 1 (say to the north and east), then there is an in-edge from the node to the north-west and an out-edge to the node to the south-east.
    \item If $v$ has two neighbours of colour 1 (say to the north and south), then there are in-edges from the nodes to the north-west and to the south-east, as well out-edges to the nodes in the north-east and in to the south-west.
\end{enumerate}
In particular, each node has either in-degree $1$ and out-degree $1$, or in-degree $2$ and out-degree $2$ in~$H$.
Thus, we can partition $E(H)$ into a collection $\mathcal{C}$ of edge-disjoint directed cycles.

Consider a cycle $C \in \mathcal{C}$ and a row $r$ of $G$, and let $u, v, w$ be nodes on $C$ such that $(u,v) \in C$ and $(v,w) \in C$. We say that $v$ is a \emph{northbound intersection} if $u$ is on the row south of $v$ and $w$ is on the row north of $v$. Similarly, we say that $v$ is a \emph{southbound intersection} if $u$ is on the row north of $v$ and $w$ is on the row south of $v$. Let $\operatorname{north}_r(C)$ be the number of northbound intersections on $C$ and $\operatorname{south}_r(C)$ the number of southbound intersections on $C$ and define $i_r(C)=\operatorname{north}_r(C)-\operatorname{south}_r(C)$.

\begin{lemma}\label{cla:row}
For all rows $r_1$ and $r_2$, we have that $i_{r_1}(C)=i_{r_2}(C)$.
\end{lemma}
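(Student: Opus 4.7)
The plan is to show that $i_r(C)$ equals the (vertical) winding number of the cycle $C$, a global quantity that manifestly does not depend on the chosen row.

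First, I would observe that every edge of $H$ is diagonal in $G$: two colour-$3$ nodes share two common neighbours only if they sit at $L_\infty$-distance $1$ and $L_1$-distance $2$, since two colour-$3$ nodes at $L_1$-distance $2$ on the same row or column share only one neighbour. Consequently, each step along $C$ changes the row coordinate by exactly $\pm 1$, and in particular every in-edge and every out-edge at a node on row $r$ goes to either row $r-1$ or row $r+1$.

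Next, I would fix a row $r$ and classify each visit of $C$ to a node on row $r$ by a pair $(\alpha,\beta)\in\{+,-\}^2$, where $\alpha=+$ if the incoming edge arrives from row $r-1$ (south) and $\alpha=-$ if it arrives from row $r+1$, with $\beta$ defined analogously for the outgoing edge. Writing $n_{\alpha\beta}$ for the number of visits of each type, the northbound intersections are exactly the $(+,+)$-visits and the southbound intersections the $(-,-)$-visits, so $i_r(C)=n_{++}-n_{--}$. Letting $N_{r-1}$ denote the number of edges of $C$ going from row $r-1$ to row $r$ and $S_{r-1}$ the number going from row $r$ to row $r-1$, these counts expand as $N_{r-1}=n_{++}+n_{+-}$ and $S_{r-1}=n_{--}+n_{+-}$, giving the key identity
\[
i_r(C)=N_{r-1}-S_{r-1}.
\]

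Finally, I would invoke conservation of flow at row $r$: each visit of $C$ contributes exactly one incoming and one outgoing edge, so the total number of edges of $C$ with head on row $r$ equals the total number with tail on row $r$, i.e.\ $N_{r-1}+S_r=S_{r-1}+N_r$, where $S_r$ and $N_r$ count the edges of $C$ between rows $r$ and $r+1$ in the southward and northward directions, respectively. Rearranging and applying the key identity also to row $r+1$ gives $i_r(C)=N_{r-1}-S_{r-1}=N_r-S_r=i_{r+1}(C)$, and iterating over adjacent rows proves $i_{r_1}(C)=i_{r_2}(C)$ for all $r_1,r_2$.

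The only bookkeeping subtlety is that nodes of in/out-degree $2$ in $H$ (case (4) of the earlier enumeration) may be visited more than once by the same cycle $C$; counting \emph{visits} of $C$ rather than distinct nodes handles this transparently throughout the argument.
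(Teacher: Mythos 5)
Your proof is correct, and it takes a genuinely different route from the paper's. The paper fixes two \emph{adjacent} rows, looks at the set of intersections of $C$ with either row, establishes a ``follows/precedes'' structure among the four intersection types, pairs up consecutive intersections, and then checks by a four-way case analysis that each pair contributes equally to $i_{r_1}(C)$ and $i_{r_2}(C)$. You instead identify $i_r(C)$ with the net northward flux of $C$ across the horizontal cut between rows $r-1$ and $r$ (via the identity $i_r(C)=N_{r-1}-S_{r-1}$, which follows from the observation that every edge of $H$ is diagonal and hence changes the row coordinate by exactly $\pm 1$), and then conclude by conservation of flow at each row. Your argument is shorter, avoids the pairing case analysis entirely, and makes the underlying invariant explicit --- $i_r(C)$ is the vertical winding number of $C$ --- which is also the quantity the paper alludes to informally later (``the invariant of how many wraps around the cycle does''). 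The paper's pairing argument, for its part, stays closer to the combinatorial definition of intersections and does not require setting up the edge-crossing bookkeeping. Two small points you handle correctly and that are worth keeping: the verification that $H$-edges are genuinely diagonal (two colour-$3$ nodes on the same row or column at $L_1$-distance $2$ share only one common neighbour, for $n$ large enough), and the decision to count \emph{visits} rather than nodes so that degree-$2$ nodes of $H$ traversed twice by the same cycle are accounted for; the paper's definition of intersections via consecutive edge pairs $(u,v),(v,w)\in C$ is implicitly the same convention.
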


\begin{proof}
It is enough to show that this is the case for two adjacent rows, so let $r_1 = r$ and $r_2 = r+1$, that is, $r_2$ is the row immediately to the north of $r_1$. In the case that $C$ does not intersect either row the claim holds. Otherwise, the set $I = \{ v \in V(C) \colon v \text{ is an intersection on $r_1$ or $r_2$} \}$ is non-empty. For $u,v \in I$, we say $u$ \emph{follows} $v$ if $u$ is the next element of $I$ we reach when following the cycle from $v$ in the direction of the edges; likewise, we say that $v$ \emph{precedes} $u$.

The set $I$ may contains four types of intersections: northbound on $r_1$ (denoted by $\mathsf{N}_{r_1}$), southbound on $r$ ($\mathsf{S}_{r_1}$), northbound on $r_2$ ($\mathsf{N}_{r_2}$) and southbound on $r_2$ ($\mathsf{S}_{r_2}$). We now observe that following hold:
\begin{enumerate}[noitemsep]
    \item $\mathsf{N}_{r_1}$ is followed by $\mathsf{N}_{r_2}$ or $\mathsf{S}_{r_1}$.
    \item $\mathsf{S}_{r_2}$ is followed by $\mathsf{S}_{r_1}$ or $\mathsf{N}_{r_2}$.
    \item $\mathsf{N}_{r_2}$ is preceded by $\mathsf{N}_{r_1}$ or $\mathsf{S}_{r_2}$.
    \item $\mathsf{S}_{r_1}$ is preceded by $\mathsf{S}_{r_2}$ or $\mathsf{N}_{r_1}$.
\end{enumerate}
We prove (1); the other cases follow by a similar argument. Consider an $\mathsf{N}_{r_1}$ intersection $v$. Following the cycle $C$ forward from $v$, we observe that every other node is on row $r_2 = r+1$ and every other node is on row $r_1 = r$, until we either have a node on row $r+2$ or on row $r-1$. That is, the next intersection we encounter is either a northbound intersection on $r_2$ or a southbound intersection on~$r_1$.

For each intersection $v$ of type $\mathsf{N}_{r_1}$ or $\mathsf{S}_{r_2}$, we define the \emph{pair} $p(v)$ of $v$ to be the following intersection on $C$, and for each intersection of type $\mathsf{N}_{r_2}$ or $\mathsf{S}_{r_1}$ we define the pair as the preceding intersection on $C$. By the above case analysis, this partitions $I$ to disjoint pairs $\{ v, p(v) \}$.

Now we observe that there are four possible types of pairs, each of which contributes the same amount to $i_{r_1}(C)$ and $i_{r_2}(C)$:
\begin{enumerate}[noitemsep]
    \item $\mathsf{N}_{r_1}$ and $\mathsf{N}_{r_2}$: contributes $1$ to $i_{r_1}(C)$ and $i_{r_2}(C)$.
    \item $\mathsf{S}_{r_1}$ and $\mathsf{S}_{r_2}$: contributes $-1$ to $i_{r_1}(C)$ and $i_{r_2}(C)$.
    \item $\mathsf{N}_{r_1}$ and $\mathsf{S}_{r_1}$: contributions to $i_{r_1}(C)$ cancel out, contributes nothing to $i_{r_2}(C)$.
    \item $\mathsf{N}_{r_2}$ and $\mathsf{S}_{r_2}$: contributions to $i_{r_2}(C)$ cancel out, contributes nothing to $i_{r_1}(C)$.
\end{enumerate}
Summing over all pairs, we have that $i_{r_1}(C)=i_{r_2}(C)$.

\end{proof}

As a corollary we have that \[\sum_{C \in \mathcal{C}} i_{r_1}(C) = \sum_{C \in \mathcal{C}} i_{r_2}(C)\]
for all rows $r_1$ and $r_2$ in $G$. Writing $s(G)$ for this sum, we make the following claim.

\begin{lemma}
\label{lem:sum_is_const}
We have $s(G_1) = s(G_2)$ for any $n \times n$ grids $G_1$ and $G_2$ when $T(n) < n/4$.
\end{lemma}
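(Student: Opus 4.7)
My plan is to establish that $s(G)$ is independent of the identifier assignment by a hybrid-grid construction, exploiting both the locality of $A$ and the row-invariance given by \lemmaref{cla:row}. The starting point is that $s(G)$ is a row-local quantity: since $s(G) = \sum_{C \in \mathcal{C}} i_r(C)$ equals the total number of northbound intersections on row $r$ minus the total number of southbound intersections on row $r$, it can be rewritten as
\[ s(G) = \sum_{v \text{ on row } r} f(v), \]
where $f(v)$ depends only on the colouring of $v$ and its $O(1)$-radius neighbourhood in $G$. Since $A$ determines the colouring at each node from the identifiers within distance $T(n)$, the quantity $f(v)$ depends only on the identifiers within distance $T(n) + O(1)$ of $v$.

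The core of the proof will be a hybrid construction. Given $G_1$ and $G_2$, first assume their identifier sets are disjoint. I build a hybrid grid $G^*$ whose rows $0, \ldots, n/2-1$ carry the identifiers of the corresponding rows of $G_1$, and whose rows $n/2, \ldots, n-1$ carry the identifiers of the corresponding rows of $G_2$; disjointness guarantees that $G^*$ has a valid unique identifier assignment. Because $T(n) < n/4$, the radius-$T(n)$ ball around any node in row $r_1 = n/4$ lies entirely in the top half of $G^*$ and is identical to the corresponding ball in $G_1$ as a labelled graph; hence the output of $A$ on rows $r_1 - 1, r_1, r_1 + 1$ of $G^*$ coincides with that on the same rows of $G_1$, and so the row-$r_1$ evaluation gives $s(G^*) = s(G_1)$. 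Evaluating $s(G^*)$ at $r_2 = 3n/4$ analogously yields $s(G^*) = s(G_2)$, and the row-independence from \lemmaref{cla:row} forces $s(G_1) = s(G_2)$. For the general case where identifier sets may overlap, I introduce a bridge grid $G_0$ whose identifiers are taken from a range disjoint from both $G_1$ and $G_2$ (feasible since the identifier universe $\{1, \ldots, \mypoly(n^2)\}$ has polynomial slack beyond the $n^2$ identifiers needed) and chain $s(G_1) = s(G_0) = s(G_2)$.

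The main obstacle I anticipate is justifying that $f(v)$ is well-defined. The cycle decomposition $\mathcal{C}$ is not unique at nodes $v$ with both in- and out-degree equal to $2$ in $H$ (neighbourhood case~(4)), where the incoming edges from NW and SE can be paired with the outgoing edges to NE and SW in two distinct ways, producing globally different cycle structures and a priori different values of $i_r$. The remedy is a short case analysis on such a $v$ on row $r$: one pairing routes both passes horizontally through $v$ and contributes zero intersections, while the other routes one pass northbound and one southbound, contributing $+1-1 = 0$. Either way the net contribution of $v$ to $i_r$ is $0$, so $f$ is determined by the local colouring alone and is independent of the chosen decomposition. Once this is settled the hybrid argument proceeds as above, and the disjoint-identifier reduction via the bridge grid $G_0$ is routine.
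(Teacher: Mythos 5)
Your proof is correct and follows essentially the same approach as the paper's: a hybrid argument that splices identifier assignments, uses $T(n) < n/4$ to argue the output near a row deep inside the unchanged half is preserved, and invokes the row-invariance of $i_r$ to chain the equalities (the paper uses two intermediate grids $H_1, H_2$ with a common fresh identifier block rather than one spliced hybrid plus a bridge grid, but this is the same technique). Your explicit verification that the contribution of an in/out-degree-$2$ node to $i_r$ is zero under either pairing, so that $s$ is a well-defined local function of the colouring near row $r$, is a point the paper leaves implicit and is a welcome addition.
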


\begin{proof}
Construct an $n \times n$ grid $H_1$ from $G_1$ by replacing the unique identifiers on rows $1$ to $\lceil n/2 \rceil$ by identifiers that do not appear in either $G_1$ and $G_2$. Since $T(n) < n/4$, the output on row $\lceil 3n/4 \rceil$ is the same on $G_1$ and $H_1$, so by previous results we have $s(G_1) = s(H_1)$. Constructing a graph $H_2$ from $G_2$ by replacing the identifiers on rows $1$ to $\lceil n/2 \rceil$ with the same ones that appear in $H_1$, we have the same argument that $s(G_2) = s(H_2)$ and, using row $\lceil n/4 \rceil$, that $s(H_1) = s(H_2)$.
\end{proof}

Since $T(n) = o(n)$, there is a constant $n_0$ such that $T(n) < n/4$ for all $n \ge n_0$. Let us define a function $s(n)$ so that if $n < n_0$, then $s(n) = 1$ if $n$ is odd and $s(n) = 0$ if $n$ is even; if $n \ge n_0$, then $s(n) = s(G)$ for any $n \times n$ grid $G$.

\begin{lemma}\label{lemma:cycle-function}
    If $n$ is odd, then $s(n)$ is odd. Moreover, for all $n$, we have $|s(n)| \le n/2$.
\end{lemma}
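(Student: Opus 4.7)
For $n < n_0$ both claims are immediate from the definition. For $n \ge n_0$, I fix any $n\times n$ grid $G$ (irrelevant by \lemmaref{lem:sum_is_const}), let $c$ be the greedy $3$-colouring output by $A$, and fix any row $r$. Writing $N^r_{ij}$ for the number of west-east adjacent colour-pairs $(i,j)$ on row $r$, my central claim is the explicit identity $s(G) = N^r_{13} - N^r_{31}$, after which both parts of the lemma fall out by elementary arithmetic.

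To prove the identity I will locally classify intersections. For each colour-$3$ vertex $v$ on row $r$ I enumerate the four possibilities for $(c(N(v)), c(S(v)))\in\{1,2\}^2$ and apply the ``colour-$1$ on the left'' rule, reading off the incident $H$-edges in subcases (1)--(3) (with case (4) contributing zero net, as shown earlier). The check is: $v$ is a northbound intersection iff $(c(W(v)), c(E(v))) = (1,2)$, a southbound intersection iff $(c(W(v)), c(E(v))) = (2,1)$, and not an intersection otherwise. Summing over $v$ and using $N^r_{13} = T^r_{131} + T^r_{132}$ and $N^r_{31} = T^r_{131} + T^r_{231}$, where $T^r_{abc}$ denotes the number of consecutive row-triples of colours $(a,b,c)$, gives $s(G) = T^r_{132} - T^r_{231} = N^r_{13} - N^r_{31}$.

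The magnitude bound is then immediate: $N^r_{13}$ and $N^r_{31}$ are each at most the number of colour-$3$ vertices on row $r$, which form an independent set in the row-cycle of length $n$, hence number at most $\lfloor n/2\rfloor$; so $|s(n)| \le \lfloor n/2\rfloor \le n/2$. For the parity, applying $\sum_j N^r_{ij} = \sum_j N^r_{ji}$ for each colour $i\in\{1,2,3\}$ (using $N^r_{ii} = 0$) forces $N^r_{12} - N^r_{21} = N^r_{23} - N^r_{32} = N^r_{31} - N^r_{13} =: \delta_r$, so summing yields
\[
(N^r_{12} + N^r_{23} + N^r_{31}) - (N^r_{13} + N^r_{21} + N^r_{32}) \;=\; 3\delta_r.
\]
The left-hand side is the difference between the numbers of ``positive'' and ``negative'' transitions in the cyclic colour ordering $1 \to 2 \to 3 \to 1$ along row $r$, while their sum equals $n$; for odd $n$ these have the same parity, forcing $3\delta_r$ to be odd and hence $\delta_r$ odd. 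Since $s(G) = -\delta_r$, $s(n)$ is odd. The main obstacle is executing the intersection-classification case analysis correctly and consistently; everything else is elementary bookkeeping with cyclic sequences.
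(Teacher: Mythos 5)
Your proof is correct, and for the magnitude bound and the reduction to row-local data it matches the paper: both arguments classify each colour-$3$ node $v$ on row $r$ by the pair $(c(W(v)),c(E(v)))$ --- northbound iff $(1,2)$, southbound iff $(2,1)$, zero contribution otherwise (including the in-degree-$2$ nodes of case (4), whose two passages cancel) --- and then bound $|s(n)|$ by the number of colour-$3$ nodes on the row, at most $\lfloor n/2\rfloor$ since they form an independent set in the row cycle. Where you genuinely diverge is the parity argument. The paper defines a potential $p(v)=y_v+c(v)\bmod 2$ on the colour-$\{1,2\}$ nodes of the row, observes that it is preserved across $1$--$2$ adjacencies and across colour-$3$ nodes with $\ell=0$ but flips exactly at intersections, and anchors the count at an adjacent $1$--$2$ pair on the row (attributed to greediness, though it really follows from the row being a properly $3$-coloured odd cycle); an odd number of flips is then forced by $p(v_0)\neq p(v_{n-1})$. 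You instead double-count ordered adjacent colour pairs around the row cycle: $\sum_j N^r_{ij}=\sum_j N^r_{ji}$ for each $i$ gives $N^r_{12}-N^r_{21}=N^r_{23}-N^r_{32}=N^r_{31}-N^r_{13}=\delta_r$, so the signed count of cyclic transitions is $3\delta_r$, which has the same parity as the total number $n$ of transitions; hence $\delta_r$, and so $s(n)=-\delta_r$, is odd for odd $n$. This winding-number-style argument buys you a proof with no anchor vertex and no identifier shifting, and it re-derives the existence of a $1$--$2$ adjacency on the row as a byproduct (since $N^r_{12}-N^r_{21}$ is odd). The one step you must execute carefully is the four-subcase check that $(c(W(v)),c(E(v)))=(1,2)$ forces a northbound intersection whatever the colours of $N(v)$ and $S(v)$; this does go through against \figureref{fig:3-col-cycles}(b), and the overall sign convention is irrelevant to the lemma.
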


\begin{proof}
The claim is trivially true when $n < n_0$, so assume $n \ge n_0$. Fix an arbitrary $n \times n$ grid $G$, a row $r$ of $G$ and a $3$-colouring $c \colon V(G) \to \{ 1,2,3 \}$ given by algorithm $A$. Moreover, let $H$ be the auxiliary graph on colour $3$ nodes as before. Assign a label $\ell(v) \in \{ -1, 0, 1 \}$ to each node $v$ on row~$r$:
\begin{enumerate}[noitemsep]
    \item If $c(v) = 3$ and $\indeg_H(v) = \outdeg_H(v) = 1$, let $u,w$ be the unique colour $3$ nodes such that $(u,v) \in E(H)$ and $(v,w) \in E(H)$. We define the label $\ell(v)$ based on the positions of $u$ and $w$ as follows:
    \begin{enumerate}[noitemsep]
        \item $\ell(v) = 1$ if $u$ is on row $r-1$ and $w$ is on row $r+1$,
        \item $\ell(v) = -1$ if $u$ is on row $r+1$ and $w$ is on row $r-1$, and
        \item $\ell(v) = 0$ if $u$ and $w$ are on the same row.
    \end{enumerate}
    \item We define $\ell(v) = 0$ in all other cases.
\end{enumerate}
Informally, this means that $\ell(v) = 1$ if $v$ is a northbound intersection on some cycle, $\ell(v) = -1$ if $v$ is a southbound intersection, and $\ell(v) = 0$ if $v$ is both or neither.  Directly by definitions, we have $s(n) = \sum_{v} \ell(v)$. Since any row in a colouring can have at most $\lfloor n/2 \rfloor$ nodes of colour $3$ and only nodes of colour $3$ have non-zero $\ell(v)$, we have that $|s(n)| \le n/2$.

It remains to show that $s(n)$ is odd if $n$ is odd. Assume that $n$ is odd; since the colouring $c$ is greedy, there are two adjacent nodes on row $r$ that have colours $1$ and $2$. By shifting the identifiers, we may assume that these are nodes $v_0 = (r,0)$ and $v_{n-1} = (r,n-1)$. For any node $v$ on row $r$ with colour $1$ or $2$, define the \emph{parity} of $v = (r,y_v)$ as $p(v) = y_v + c(v) \operatorname{mod} 2$. We now make the following observations:
\begin{itemize}
    \item If two nodes $u,v$ are adjacent on row $r$ with colours $1$ and $2$, and are not $v_0$ and $v_{n-1}$, they have the same parity.
    \item If two nodes $u,v$ on row $r$ with colours $1$ and $2$ are separated by a single node $w$ with $c(w) = 3$, then $u$ and $v$ have a different parity if and only if $\ell(w) \in \{ -1, 1 \}$; this follows by a simple case analysis (compare with \figureref{fig:3-col-cycles}b).
\end{itemize}
Finally, we observe that $p(v_0) \not= p(v_{n-1})$. Thus, following row $r$ from $v_0$ to $v_{n-1}$, we must have an odd number of colour $3$ nodes $v$ with $\ell(v) \in \{ -1, 1 \}$, which implies that $s(n)$ is odd.
\end{proof}

We can now solve $s$-sum coordination on directed cycles in time $T(n)$ as follows. If $n < n_0$, we gather full information about the input cycle in $n_0$ rounds; all nodes output $0$ except the one with smallest identifier, which outputs $s(n)$.  If $n \ge n_0$, we simulate $A$ on a $T(n)$-wide strip; each node looks at the middle row of the strip and outputs $\ell(v)$ as in the proof of \lemmaref{lemma:cycle-function}. Since $T(n) = o(n)$, this gives a contradiction with \theoremref{thm:q-sum-lower-bound}.

\section{\boldmath Edge colouring \texorpdfstring{$d$}{d}-dimensional grids with \texorpdfstring{$2d+1$}{2d + 1} colours}\label{app:edge-colouring}

\begin{theorem}
\label{thm:edge-colouring}
	For every fixed $d$, the complexity of edge $(2d+1)$-colouring $d$-dimensional grids is $\Theta(\log^{*} n)$.
\end{theorem}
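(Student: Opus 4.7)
The problem $(2d+1)$-edge colouring is a non-trivial \lcl{}: no constant edge-labelling is feasible since any two edges sharing a vertex must receive different colours. Hence the theorem of \citet{naor95what} immediately yields $\Omega(\log^{*} n)$.

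\textbf{Upper bound strategy.} My plan is to mirror the partitioning construction used in the proof of \theoremref{thm:vertex-colouring}. First, compute a maximal independent set $M$ in $G^{[\ell]}$ for a sufficiently large constant $\ell = \ell(d)$, and then assign to each $v \in M$ a radius $r(v)$ by reduction to local conflict colouring so that the family $\{B_\infty(v, r(v)) : v \in M\}$ covers $V$ and has hyperplane-separated boundaries. This step takes $O(\log^{*} n)$ rounds and produces a partition of $V$ into box-shaped cells, each associated with one anchor.

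\textbf{Colouring within cells.} Inside each cell I would colour edges one direction class at a time. The $d$ direction classes of edges at any vertex $v$ are locally identifiable even without orientation information, because two neighbours $u_1,u_2$ of $v$ lie in the same direction class if and only if they share no common neighbour other than $v$ (valid once $n$ and $\ell$ are large enough to rule out short $4$-cycles). For each direction $i$, I would assign the primary colours $\{2i-1, 2i\}$ in an alternating pattern along each direction-$i$ path, with the phase determined by each vertex's local coordinate with respect to its anchor. The extra colour $2d+1$ plays the role of a \emph{wildcard}, used at most once along each direction-$i$ path inside a cell to absorb any parity mismatch caused by odd path lengths. At every vertex this uses at most two colours per direction plus possibly one wildcard, yielding a proper colouring with at most $2d+1$ colours.

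\textbf{Main obstacle.} The real work is ensuring consistency across adjacent cells: the alternating patterns chosen independently in two neighbouring boxes must agree on the shared boundary edges. My plan is to exploit the hyperplane-separation property inherited from \theoremref{thm:vertex-colouring}, which leaves at least two coordinates of slack between neighbouring cells in every dimension; this gives enough room to choose the phase of each cell's alternation so that boundary edges match, and to keep all uses of the wildcard colour well inside the cell (far from any boundary). Establishing that compatible phases always exist reduces to a finite combinatorial check depending only on $\ell$ and $d$. If a slick closed-form construction of such phases proves elusive, I would instead appeal to the synthesis framework of \sectionref{sec:synthesis}: for a fixed constant $\ell$, verifying the existence of a consistent lookup table for the normal form $A' \circ S_k$ is a finite constraint satisfaction problem that can be solved once and for all per dimension $d$.
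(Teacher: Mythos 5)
Your lower bound is fine. For the upper bound you have identified the right template---two private colours per dimension, alternating along rows, with colour $2d+1$ as a parity-fixing defect---but the proposal omits precisely the step that constitutes essentially the whole of the paper's proof. The crucial requirement is that the set of colour-$(2d+1)$ edges, taken over \emph{all} $d$ dimensions and all rows and all cells, forms a matching: every vertex has exactly $2d$ incident edges, so if two defect edges from different dimensions (or from neighbouring cells) share a vertex, the colouring is improper. Your plan places defects cell by cell and dimension by dimension but never argues that these choices can be made pairwise non-adjacent; the phrase ``reduces to a finite combinatorial check'' is carrying the entire burden, and your observation that a vertex uses ``two colours per direction plus possibly one wildcard'' silently assumes the very non-adjacency property that needs proof. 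The paper devotes its whole argument to this point: for each dimension $q$ it constructs a $j,k$-independent set of marker nodes whose radius-$k$ $L_\infty$-balls are pairwise disjoint (via a nontrivial ``moving east'' relocation procedure, \lemmaref{lem:stop-moving} and \lemmaref{lem:find-jk}), and then shows by a counting argument that when a marker picks its defect edge, at most $2(d-1)$ previously marked edges can touch its row segment while the segment contains $2k > 4(d-1)$ candidates, so a non-adjacent choice always exists.

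Two further remarks. First, once defects are inserted, each row (a cycle) is cut into \emph{paths}, and a path is always alternately $2$-edge-colourable with the endpoints negotiating the phase; there is no ``phase matching across cells'' problem at all. That difficulty is an artefact of insisting on colouring box by box---cutting rows directly, as the paper does, dissolves it (and note the parity obstruction is real: for odd $n$ each row genuinely needs at least one defect, which is why $2d$ colours do not suffice). Second, the fallback to the synthesis framework of \sectionref{sec:synthesis} cannot establish the theorem ``for every fixed $d$'': synthesis is a separate finite computation for each $d$ that terminates only when the problem is in fact $O(\log^* n)$, so it can certify individual dimensions after the fact but cannot serve as a uniform existence proof for all $d$.
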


Again, the lower bound follows from the result of \citet{linial92locality}, so it remains to show the upper bound. Moreover, we will show that this is tight in the sense that it is not possible to edge-colour the $d$-dimensional grid using $2d$ colours when $n$ is odd.

\paragraph{High-level idea.}
The general idea of the colouring is to have two exclusive colours for each dimension and to use the last remaining colour $c$ in order to colour a set of edges that cuts each row in each dimension into pieces of constant length which can then be coloured alternately by the two colours for the edges in the respective dimension.
In order to find such a set $S$ of (pairwise non-adjacent) edges, we first find a set of nodes that is able to locally choose the edges from $S$ such that the required conditions are met.

Consider an arbitrary dimension.
For each row in this dimension, find a maximal independent set of large distance and denote the union of these maximal independent sets by $M$.
Now move the nodes in $M$ on their respective rows until each node is the centre of a radius-$r$ ball (according to the $L_{\infty}$ norm, i.e., the ball is essentially a hypercube) that intersects no radius-$r$ ball from another node from $M$.
By making sure that the initial maximal independent sets are of sufficiently large distance, arbitrarily large radii $r$ can be achieved, since each node from $M$ has sufficiently large space on its row compared to the number of nodes from $M$ in its vicinity.
Repeat the whole process for each remaining dimension.

Now each node from each of the obtained $M$ colours a nearby edge (i.e., one in its radius-$r$ ball) in the row, the node was initially chosen from, with colour $c$.
By making the radii $r$ sufficiently large (depending on $d$) in the beginning, the nodes can ensure that none of these coloured edges are adjacent, since the number of radius-$r$ balls (with nodes from some of the aforementioned maximal independent sets as the centres) that intersect the nearby part of the row, from which a node chooses the to-be-coloured edge, can be bounded by a function that depends only on $d$.
Essentially, even if a node chooses the edge it wants to colour last of all choosing nodes, it always has an edge available that is not adjacent to an already coloured edge.
Moreover, the distances in the initially chosen maximal independent sets must be sufficiently large (as a function of $d$) to ensure sufficiently large radii $r$, but since $d$ is fixed they can still be chosen to be constant, and likewise the distances the nodes are moved can be bounded by a function in $d$.
Hence, the pieces obtained in each row by removing the edges of colour $c$ are of constant size and can be coloured with two colours, following our initial scheme.

\paragraph{Preliminaries.}
We use the same setting and notations as for the vertex colouring in \sectionref{app:vertex-4-colouring}. We will call a row in dimension $q$ a \emph{$q$-directional} row. As before, we call a set of nodes of $G$ a \emph{maximal independent set of distance $k$} if it is a maximal independent set in $G^{(k)}$, the $k$th power of~$G$. Furthermore, we need the following generalisation of a vertex colouring:

\begin{definition}
A vertex colouring of a $d$-dimensional grid $G$ is a \emph{colouring of $L_{\infty}$ distance $k$} if no two adjacent nodes of $G^{[k]}$ have the same colour.
\end{definition}

Note that a vertex colouring is a colouring of $L_{\infty}$ distance $2k$ if and only if for any node $u$, $B_\infty(u,k)$ contains no two nodes of the same colour.
The following lemma establishes a bound on the time it takes to find a specific colouring of a certain distance that we will need later.

\begin{lemma} \label{lem:distance-colouring}
	For every fixed $d$, there is a distributed algorithm that finds a vertex $(2k+1)^d$-colouring of $L_{\infty}$ distance $k$ of $G$ in time $O(k(\log^* n + k^d))$.
\end{lemma}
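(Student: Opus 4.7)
The plan is to observe that a vertex colouring of $L_\infty$ distance $k$ is, by definition, a proper vertex colouring of $G^{[k]}$. Since $G^{[k]}$ has maximum degree $\Delta := (2k+1)^d - 1$, the target of $(2k+1)^d$ colours is exactly $\Delta + 1$, so the lemma is asking for a $(\Delta+1)$-colouring of $G^{[k]}$. I would run an off-the-shelf $(\Delta+1)$-colouring procedure directly on $G^{[k]}$ and pay the cost of simulating each of its rounds on $G$.

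The simulation overhead is easy to account for: two vertices at $L_\infty$ distance at most $k$ are at $L_1$ distance at most $dk$, so for fixed $d$ each vertex can gather its full $G^{[k]}$-neighbourhood in $dk = O(k)$ grid-edge rounds. Hence one round of communication in $G^{[k]}$ is carried out in $O(k)$ rounds of $G$. To produce the colouring, I would first apply Linial's colour-reduction scheme on $G^{[k]}$, starting from the unique identifiers, to obtain a proper $O(\Delta^2) = O(k^{2d})$-colouring in $O(\log^* n)$ rounds of $G^{[k]}$. Then I would reduce this to $(\Delta+1)$ colours via a standard $(\Delta+1)$-colouring routine that runs in $O(\Delta + \log^* n)$ rounds on bounded-degree graphs. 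Adding the two gives $O(\log^* n + k^d)$ rounds on $G^{[k]}$, which via the simulation overhead becomes $O(k(\log^* n + k^d))$ rounds on $G$, as required.

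The main obstacle I anticipate is the second step: a naive reduction that iterates through all $O(\Delta^2)$ colour classes of the Linial colouring and lets each class greedily pick a free colour would take $O(\Delta^2) = O(k^{2d})$ rounds of $G^{[k]}$, which is off by a factor of $k^d$ from the target. Circumventing this requires either invoking a more careful $(\Delta+1)$-colouring scheme for bounded-degree graphs (for instance along the lines of Barenboim--Elkin), or exploiting structural properties of $G^{[k]}$ itself: $G^{[k]}$ is a Cayley graph of $\mathbb{Z}_n^d$ with polynomial growth (a $G^{[k]}$-ball of radius $L$ has at most $(2kL+1)^d$ vertices), under which colouring results for graphs of bounded growth give the desired round complexity. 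Once this step is in place, the arithmetic matches and the stated bound follows.
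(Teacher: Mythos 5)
Your proposal is correct and follows essentially the same route as the paper: reduce to $(\Delta+1)$-colouring of $G^{[k]}$ with $\Delta = (2k+1)^d-1$, simulate with $O(kd)$ overhead, and invoke a colouring algorithm running in $O(\Delta + \log^* n)$ rounds --- the paper cites exactly the Barenboim--Elkin--Kuhn linear-in-$\Delta$ algorithm that you identify as the way around the $O(\Delta^2)$ naive reduction. The obstacle you flag is real but already resolved by that citation, so no further structural argument about $G^{[k]}$ is needed.
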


\begin{proof}
	The nodes in $G$ can simulate any distributed algorithm on $G^{[k]}$ with a multiplicative overhead of $kd$.
	Observe that a proper vertex colouring of $G^{[k]}$ induces a colouring of $L_{\infty}$ distance $k$ of $G$.
	Now, since $G^{[k]}$ has a maximum degree of $(2k+1)^d - 1$, finding a $(2k+1)^d$-colouring of $G^{[k]}$ can be done in time $O(\log^* n + (2k+1)^d)$ using the algorithm of \citet{barenboim14distributed}. Counting the simulation, total running time is $O(kd(\log^* n + (2k+1)^d))$; noting that $d$ is constant yields the desired bound.
\end{proof}

In the high-level overview of the main algorithm, we mentioned nodes that will locally choose the edges that will be coloured with the special colour that is not assigned to some specific dimension.
These nodes have to have two properties: On the one hand they should not be too far from each other in order to be able to choose a nearby edge each, such that each row in each dimension is thereby cut in sufficiently small pieces (for the later $2$-colouring of the pieces); on the other hand they should be far enough from each other such that each node has enough space to choose an edge that is not adjacent to any other chosen edge.
We formalize these considerations in the following definition:

\begin{definition} \label{def:jk-independent}
	A \emph{$j,k$-independent set w.r.t.\ dimension $q$} is a set $M$ of nodes of the grid with the following properties:
	\begin{enumerate}[label=(\arabic*),noitemsep]
		\item \label{prop:row-dist} For any node $w \notin M$ there is a node $u \in M$ in the same $q$-directional row with $\dist(u,w) \leq j$.
		\item \label{prop:cube-dist} For any two nodes $u,v \in M$ we have that $B_\infty(u,k) \cap B_\infty(v,k) = \emptyset$.
	\end{enumerate}
\end{definition}

\paragraph{\boldmath Finding a $j,k$-independent set.} In the following we describe a distributed algorithm that finds a $j,k$-independent set w.r.t.\ dimension $q$, where $j = 3(4k+1)^d$ and $k \geq 1$.

W.l.o.g.\ let $q=1$ and denote the directions belonging to dimension $1$ by \emph{west} and \emph{east} where the coordinate of dimension $1$ increases stepwise in eastern direction.
We simply use the term \emph{row} when referring to a $1$-directional row.

For each row $r$, choose a maximal independent set $M_r$ of distance $2(4k+1)^{d}$ in $r$, i.e., in the graph induced by the nodes in row $r$.
Moreover choose a (vertex) $(8k+1)^d$-colouring $c$ of $L_{\infty}$ distance $4k$ of the whole grid where the colours are chosen from $\{ 1, \dots, (8k+1)^d \}$.

Let $M$ be the union of the $M_r$ taken over all rows $r$ in the grid.
We will now transform $M$ into a $j,k$-independent set by deleting and adding nodes.
More specifically, we repeatedly delete nodes from $M$ and replace them by the respective next node in eastern direction.
When we perform such a replacement of a node $u$ by its eastern neighbour, we say that \emph{$u$ moves to the east}.
For simplicity, we denote the new node in $M$ again by $u$ and assign it the same colour $u$ had before.

The replacements take place in phases, starting with Phase $1$.
In Phase $p$ the following steps are performed:
Each node that does not have colour $p$ does nothing.
Each node $u$ of colour $p$ checks whether it is contained in $M$ and whether $B_\infty(u,2k)$ contains a second node from $M$ (i.e., a node different from $u$ itself).
If both is the case, then $u$ moves to the east and continues moving to the east until $B_\infty(u,2k)$ does not contain a node from $M$ any more, apart from $u$ itself.
Any node of colour $p$ that stops moving to the east (or did not start moving in the first step of the phase) does not start moving again, even if a node from $M$ moves into its radius-$2k$ ball.
(Recall that the radius is taken according to the $L_{\infty}$ norm.)
Each phase ends after $(4k+1)^{d} - (4k+1)$ steps upon which the next phase starts.
This concludes the description of the algorithm.

We note that any node moves to the east in at most one phase and therefore it moves at most $(4k+1)^{d} - (4k+1)$ steps to the east.
Since this is less than the distance between any two points from the same $M_r$, no node moves ``over'' another node from $M$.
Hence, the colours of the nodes that are passed by a node moving eastwards are irrelevant since only nodes from $M$ have an active role in the algorithm.
Thus, the presented algorithm is well-defined despite those colours not being specified.
Note further that we can assume that all nodes start at the same time with the different phases (and that the nodes move with the same ``speed''), by using standard synchronisation arguments.

In order to be able to show the correctness of the algorithm, we need the following lemma:

\begin{lemma} \label{lem:stop-moving}
	When a node $u$ stops moving to the east, then $B_\infty(u,2k)$ contains no node from $M$, apart from $u$ itself.
\end{lemma}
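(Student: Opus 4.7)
I plan to show that $u$ always stops because $B_\infty(u,2k)$ becomes clear of other $M$ nodes rather than because the phase's step budget $T := (4k+1)^d - (4k+1)$ expires; since the algorithm halts $u$ the moment its ball first becomes empty of other $M$ nodes, it suffices to exhibit some step $s\in\{0,1,\ldots,T\}$ of $u$'s movement at which $B_\infty(u,2k)\cap M = \{u\}$. The strategy is a covering argument over the $(4k+1)^{d-1}$ one-directional rows that meet $B_\infty(u,2k)$.

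The row containing $u$ itself contributes no obstacle: the original spacing $\geq 2(4k+1)^d$ of $M_r$ in any row, combined with the fact that all $M$-node movements in the algorithm are eastward and bounded by $T$, keeps every other $M$ node of $u$'s row strictly outside $B_\infty(u,2k)$ throughout phase $p$. For each of the remaining $(4k+1)^{d-1}-1$ rows $r'$ meeting $B_\infty(u,2k)$, any $M$ node $z$ whose current east-coordinate ever lies in the swept range $[u_0 - 2k,\,u_0 + T + 2k]$ during phase $p$ must have original east-coordinate in a window of size $2T+4k+1 = 2(4k+1)^d-(4k+1) < 2(4k+1)^d$, so the original $M_{r'}$-spacing leaves at most one such candidate per row (writing $u_0$ for the east-coordinate of $u$'s position at the start of phase $p$, which equals its original east-coordinate since $u$ has colour $p$).

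For this unique candidate $z$ in row $r'$ I would then show that the set of steps $s\in[0,T]$ at which $z$ lies in $B_\infty(u,2k)$ is an interval of length at most $4k+1$. If $z$ is not of colour $p$ then $z$ is stationary during phase $p$ and the claim is immediate from the fixed east-coordinate of $z$; if $z$ is of colour $p$, the $L_\infty$ distance-$4k$ colouring forces $|z_0 - u_0|\ge 4k+1$, so while $u$ and $z$ both move east at the same speed their east-distance remains constant and strictly exceeds $2k$, meaning $z$ can only enter $B_\infty(u,2k)$ after $z$ has stopped (say at step $s_z^*$) while $u$ continues, contributing the interval $[s_z^* + z_0 - u_0 - 2k,\,s_z^* + z_0 - u_0 + 2k]$ of the same length $4k+1$. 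This colour-$p$ case is the main subtlety of the proof, and the equal-speed observation is exactly what reduces it to the same interval bound as the stationary case. Combining everything, the set of $s\in[0,T]$ at which $B_\infty(u,2k)$ contains an $M$ node other than $u$ is covered by at most $(4k+1)^{d-1}-1$ intervals of length $\le 4k+1$, totalling at most $((4k+1)^{d-1}-1)(4k+1)=T$ integer points; since $[0,T]$ contains $T+1$ integer points, at least one $s$ is uncovered, and $u$ must have stopped by that step with $B_\infty(u,2k)\cap M = \{u\}$, as claimed.
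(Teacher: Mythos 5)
Your proof is correct, but it takes a genuinely different route from the paper's. The paper first strips away the phase cutoff ("it is enough to show that each node would stop within $(4k+1)^d-(4k+1)$ steps even if the phase continued") and then runs an induction on the order in which nodes stop moving: the induction hypothesis is what guarantees that every already-stopped node has been displaced by at most $T=(4k+1)^d-(4k+1)$, so that the original row spacing $2(4k+1)^d$ still leaves at most one blocker per row inside the swept region, and a pigeonhole over the $(4k+1)^{d-1}$ candidate landing positions $Z(u)$ (spaced $4k+1$ apart) finishes the argument. You instead keep the cutoff and use it as an a priori bound of $T$ on every displacement, which makes the "at most one candidate per row" count immediate and eliminates the induction entirely; your pigeonhole is then over all $T+1$ time steps, with each of the $(4k+1)^{d-1}-1$ foreign rows blocking an interval of at most $4k+1$ steps, for a total of at most $T$ blocked steps. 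This is not circular: the displacement bound is enforced by the algorithm, and you then show the cutoff is never the actual stopping reason. Both proofs share the crucial observation that two colour-$p$ nodes move in lockstep, so the distance-$4k$ colouring prevents them from interfering while both are in motion. Your version is arguably cleaner (no ordering of stopping times, no hypothetical "moves forever" case); the paper's version proves the marginally stronger fact that the cutoff would be unnecessary. One small point worth making explicit: the colouring gives $\Vert z-u\Vert_\infty\ge 4k+1$ over all coordinates, and you get $|z_0-u_0|\ge 4k+1$ in the east coordinate only because your candidate $z$ lies in a row meeting $B_\infty(u,2k)$, so its other coordinates already differ from $u$'s by at most $2k$.
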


\begin{proof}
	There are two reasons a node might stop moving to the east, namely that $B_\infty(u,2k)$ contains no node from $M$, apart from $u$ itself, or that the respective phase ends.
	Since each node starts moving at the beginning of the phase (if it moves in that phase at all), it is enough to show that each node $u$ would stop moving to the east after at most $(4k+1)^{d} - (4k+1)$ steps even if the phase still continued.
	To this end, we assume for the remainder of the proof that there is no cutoff of a phase after $(4k+1)^{d} - (4k+1)$ steps, but that instead the phase ends after the last moving node stops moving.

	For any node $u$, let $Z(u)$ be the set of nodes such that $v \in Z(u)$ if and only if $v$ is in the same row as $u$ and $v_1 - u_1 = z(4k+1)$ for some integer $z \in \{ 0, \dots, (4k+1)^{d-1} - 1 \}$.
	Consider the $(4k+1)^{d-1}$ radius-$2k$ balls of the nodes in $Z(u)$.
	By the definition of $Z(u)$ these balls are lined up one after the other in eastern direction, but no pair of them intersects.
	Moreover, the union $B$ of these balls intersects exactly $(4k+1)^{d-1}$ rows of the grid and each of these rows has exactly $(4k+1)^{d}$ consecutive nodes in $B$.

	Order the nodes from $M$ by the time they stop moving, breaking ties arbitrarily.
	Note that during the whole algorithm, a node moves either once (but then possibly a number of consecutive steps) or not at all.
	In the latter case we set the point in time at which the node stops moving to $0$.
	In the case that a node moves on infinitely, we set the stopping time at $\infty$.
	Denote the ordered nodes by $u(0), u(1), \dots$ where the stopping time increases (or stays the same) with increasing argument.

	We show now by induction (on the argument of the node) that no node from $M$ moves further to the east than $(4k+1)^{d} - (4k+1)$ steps, i.e., no further than the centre of the furthest of the balls defined above.

	Consider $M$ at the point in time when $u(0)$ stops moving, or, if $u(0)$ moves on infinitely (a possibility that we cannot exclude yet), at an arbitrary point in time in the phase corresponding to the colour of $u(0)$.
	Observe that at no point in time, a node that is still moving contains another moving node in its radius-$2k$ ball.
	(The reason for this is that all (moving) nodes move synchronously and in the beginning of each phase $p$, no node of colour $p$ contains another node of colour $p$ in its radius-$2k$ ball, by the definition of our colouring $c$.)
	Hence, the current non-moving nodes in $M$ (at their current places in the grid), denoted by $M_{\textsf{static}}$, are the only nodes that can have caused $u(0)$ to move at all.
	Moreover, since $u(0)$ stops moving first, the nodes in $M_{\textsf{static}}$ are at the exact same places as they were in the beginning of phase $1$.
	In the beginning of phase $1$, any two nodes in $M_{\textsf{static}} \cup \{ u(0) \}$ in the same row have a distance of at least $2(4k+1)^{d}$, by the definition of the $M_r$.
	Since $(4k+1)^d < 2(4k+1)^{d}$, each of the $(4k+1)^{d-1} - 1$ rows intersecting $B$ contains at most one node from $M_{\textsf{static}} \cap B$, by our above observation about $B$.
	Furthermore, the row containing $u(0)$ contains no node from $(M_{\textsf{static}} \setminus \{ u(0) \} ) \cap B$.
	Hence, $B$ contains at most $(4k+1)^{d-1} - 1$ nodes from $M_{\textsf{static}} \setminus \{ u(0) \}$.
	By the pigeonhole principle, one of the $(4k+1)^{d-1}$ radius-$2k$ balls of the nodes in $Z(u(0))$ does not contain a node from $M_{\textsf{static}}\setminus \{ u(0) \}$.
	Thus, $u(0)$ stops moving when it arrives at the centre of this ball, at the latest, which yields a maximum of $((4k+1)^{d-1} - 1)(4k+1) = (4k+1)^{d} - (4k+1)$ steps taken.
	This concludes the base case of the induction.

	For the induction step, consider an arbitrary node $u(a) \in M, a \geq 1$ at the point in time it stops moving, or, if $u(a)$ moves on infinitely, at a point in time when $u(a)$ has already started to move and $u(a-1)$ has stopped moving (the induction hypothesis ensures that such a point in time actually exists).
	Define $M_{\textsf{static}}$ analogously to the definition in the base case.
	By the induction hypothesis, we can assume that each node from $M_{\textsf{static}}$ has moved at most $(4k+1)^{d} - (4k+1)$ steps to the east from its initial position.
	Thus, by the definition of the $M_r$, any two nodes in $M_{\textsf{static}}$ in the same row have a distance of at least $2(4k+1)^{d} - ((4k+1)^{d} - (4k+1)) > (4k+1)^d$ from each other (and also from the initial location of $u(a)$).
	Now the proof of the induction step follows analogously to the proof of the base case.
\end{proof}

Using Lemma \ref{lem:stop-moving}, we prove the correctness of the above algorithm and give an upper bound for its time complexity.

\begin{lemma} \label{lem:find-jk}
	Let $j = 3(4k+1)^d$, $k \geq 1$ and $1 \leq q \leq d$.
	The algorithm described above finds a $j,k$-independent set w.r.t.\ dimension $q$ in time $O(k^d \log^* n + k^{2d+1})$.
\end{lemma}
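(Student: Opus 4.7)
The plan is to verify the two properties of \defref{def:jk-independent} for the set $M$ output by the algorithm, and then tally the running times of the three stages.

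For Property~(1), I would argue that every node $w \notin M$ in a row $r$ has some $u \in M$ in $r$ with $\dist(u,w) \leq j = 3(4k+1)^d$. Initially $M_r$ is a maximal independent set of distance $2(4k+1)^d$ in the cycle induced by $r$, so some $u_0 \in M_r$ sits at row-distance at most $2(4k+1)^d$ from $w$. Since labels move only east and by at most $(4k+1)^d - (4k+1) < (4k+1)^d$ steps (the phase cutoff, matched by \lemmaref{lem:stop-moving}), the worst case---where $u_0$ was east of $w$ and moves further east---gives a final row-distance of at most $2(4k+1)^d + (4k+1)^d = j$, and in every other case the distance is smaller. For Property~(2), take two distinct $u,v$ in the final $M$ and order them so that $u$ stops moving no later than $v$. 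Applying \lemmaref{lem:stop-moving} to $v$ at the instant $v$ stops---when $u$ has already settled into its final position---gives $u \notin B_\infty(v, 2k)$, i.e., $\Vert u-v \Vert_\infty > 2k$, which is equivalent to $B_\infty(u,k)\cap B_\infty(v,k) = \emptyset$.

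For the running time I would sum three contributions. First, computing each $M_r$: an MIS of distance $2(4k+1)^d$ in a cycle can be obtained in $O(\log^* n + k^{O(d)})$ rounds by first colouring the appropriate cycle power via Linial's algorithm and then greedily extracting an MIS. Second, computing the vertex colouring $c$ of $L_\infty$-distance $4k$ with $(8k+1)^d$ colours via \lemmaref{lem:distance-colouring} (applied with $4k$ in place of $k$) costs $O(k \log^* n + k^{d+1})$ rounds. Third, executing the phases: there are $(8k+1)^d = O(k^d)$ phases, each consisting of $(4k+1)^d - (4k+1) = O(k^d)$ steps, where each step takes $O(k)$ communication rounds in $G$ to inspect the $L_\infty$-radius-$2k$ ball; this gives $O(k^{2d+1})$ rounds in total. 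Summing the three bounds yields $O(k^d \log^* n + k^{2d+1})$, as claimed.

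The main obstacle is Property~(2): we need non-intersection to hold in the final configuration, whereas \lemmaref{lem:stop-moving} only characterises the situation at the instant a node stops. Ordering the two nodes by stopping time and invoking the lemma on the later-stopping one circumvents this cleanly, since the earlier-stopping node is then frozen at its final position when the lemma is applied.
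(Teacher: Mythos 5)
Your proof follows the paper's own argument essentially step for step: Property (1) via the initial $2(4k+1)^d$ spacing of the $M_r$ plus the $(4k+1)^d-(4k+1)$ bound on eastward movement, Property (2) by applying \lemmaref{lem:stop-moving} to the later-stopping of the two nodes (exactly the paper's observation that $v \in B_\infty(u,2k)$ iff $u \in B_\infty(v,2k)$), and the same three-term accounting of the running time. The only deviation is a harmless bookkeeping difference in the cost of computing the $M_r$, which is absorbed by the claimed bound either way.
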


\begin{proof}
	As above, w.l.o.g.\ let $q=1$, and let $M$ denote the final set obtained by our algorithm.
	We start by showing that $M$ is indeed a $j,k$-independent set w.r.t.\ dimension $1$, by checking the properties given in Definition \ref{def:jk-independent}.

	Regarding Property \ref{prop:row-dist}, we use similar observations to the ones made in the proof of Lemma \ref{lem:stop-moving}:
	In the beginning of Phase $1$, any node has distance at most $2(4k+1)^{d}$ to some node from $M$ (as it was in the beginning of Phase $1$) in the same row, by the definition of the $M_r$.
	Since any node from $M$ moves at most $(4k+1)^{d} - (4k+1)$ steps to the east, any node in the grid has distance at most $2(4k+1)^{d} + (4k+1)^{d} - (4k+1) < 3(4k+1)^{d}$ to some node from $M$ in the same row, which proves Property \ref{prop:row-dist}.
	Property \ref{prop:cube-dist} follows from Lemma \ref{lem:stop-moving}.
	Note that if a node $u \in M$ stops moving, then no other node $v \in M$ will stop moving in $B_\infty(u,2k)$ since $v \in B_\infty(u,2k)$ is equivalent to $u \in B_\infty(v,2k)$.

	Now we examine the time complexity of our algorithm.
	Finding the $M_r$ can be done in time $O(k^d \log^* n)$, in each row in parallel, by finding a maximal independent set in the $(2(4k+1)^{d})$th power of each row.
	Finding the $(8k+1)^d$-colouring $c$ can be done in time $O(k (\log^* n + k^d))$, by Lemma \ref{lem:distance-colouring}.
	Each phase contains $O(k^{d})$ steps per node and checking the radius-$2k$ ball of a node can be done in time $O(k)$, resulting in a total time of $O(k^{2d+1})$ for the $(8k+1)^d$ phases (in a simple implementation). Thus, the total running time is $O(k^d \log^* n + k^{2d+1})$.
\end{proof}

\begin{figure}
    \centering\includegraphics[width=\columnwidth,page=5]{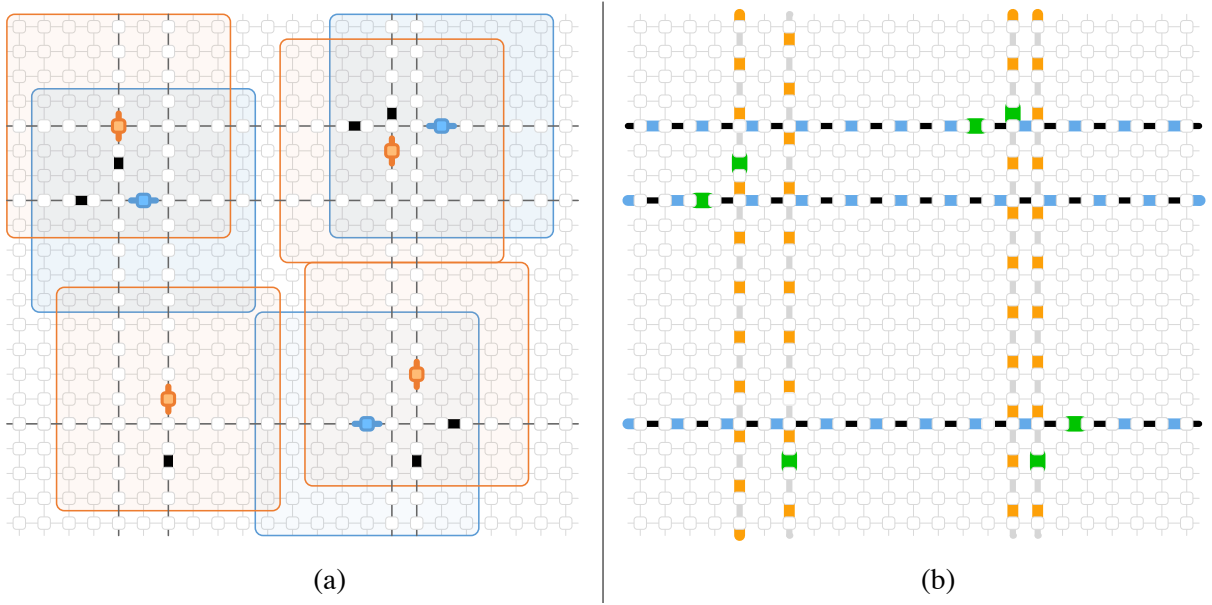}
    \caption{(a) Nodes from two $j,k$-independent sets in a $2$-dimensional grid, together with their radius-$k$ balls and the rows they are on. Each node marks an edge (shown in black) on the same row (corresponding to the dimension of the $j,k$-independent set) in its radius-$k$ ball so that marked edges are not adjacent to each other. (b) The edge colouring resulting from the edge choices made in (a); marked edges use colour $5$, shown in green. Rows with no colouring shown use colour $5$ outside the shown area.}\label{fig:edge-col}
\end{figure}

\paragraph{Upper bound for edge colouring.}
    We now proceed to describe an algorithm that finds an edge colouring with $2d+1$ colours in time $O(\log^* n)$.
	The algorithm starts by finding, for each $1 \leq q \leq d$, a $j,k$-independent set $I_q$ w.r.t.\ dimension $q$, where $j = 3(4k+1)^d$ and $k = 2d$.
	By Lemma \ref{lem:find-jk}, this is possible in time $O(\log^* n)$ since $d$ is fixed and $k$ only depends on $d$.
	Now, we again proceed in phases, starting with Phase $1$ and ending with Phase $d$.
	In Phase $p$, each node $u \in I_p$ marks an edge in $B_\infty(u,k)$ (i.e., an edge for which both of its endpoints are contained in $B_\infty(u,k)$) that is not adjacent to a previously marked edge, runs in direction of dimension $p$ and is in the same $p$-directional row as $u$ (cf.\ Figure \ref{fig:edge-col}a for an illustration in the $2$-dimensional case).

	In order to show that there is always such a non-adjacent edge in $B_\infty(u,k)$ available, consider the number of already marked edges that have at least one endpoint in $B_\infty(u,k) \cap R_p(u)$, where $R_p(u)$ denotes the set of nodes in the same $p$-directional row as $u$:
	Observe that for any arbitrary set $\mathcal B$ of pairwise disjoint radius-$k$ balls of dimension $d$ in our grid, at most two of the balls in $\mathcal B$ can intersect $B_\infty(u,k) \cap R_p(u)$ since any such intersecting ball from $\mathcal B$ must contain at least one of the two ``endpoints'' of the path $B_\infty(u,k) \cap R_p(u)$.
	Moreover, for any $1 \leq q \leq d$, the radius-$k$ balls of the nodes in $I_q$ are pairwise disjoint, by the definition of the $I_q$ and Property \ref{prop:cube-dist} of Definition \ref{def:jk-independent}.
	Hence, $B_\infty(u,k) \cap R_p(u)$ intersects at most $2(d-1)$ radius-$k$ balls of some node in some $I_q$ (apart from the ball $B_\infty(u,k)$ itself).
	Now, since an edge with at least one endpoint in $B_\infty(u,k) \cap R_p(u)$ can only be marked by a node whose radius-$k$ ball intersects $B_\infty(u,k) \cap R_p(u)$ and each of these nodes marks only one edge, there can be at most $2(d-1)$ marked edges with one endpoint in $B_\infty(u,k) \cap R_p(u)$, before $u$ marks an edge.
	Each such marked edge prevents at most two of the edges in $B_\infty(u,k)$ in the same $p$-directional row as $u$ to be marked by $u$ because of the adjacency condition.
	But since the $p$-directional row containing $u$ has $2k > 4(d-1)$ edges inside $B_\infty(u,k)$, there must be an edge left that $u$ can mark without violating any of the required conditions.
	Marking the edges as described above can be done in time $O(dk) = O(1)$.

	Now, the idea to colour the edges of the grid is simple (cf.\ Figure \ref{fig:edge-col}b for an illustration in the $2$-dimensional case):
	Each marked edge gets colour $2d+1$.
	Each remaining edge that runs in the direction of dimension $q$ gets colour $2q-1$ or $2q$.
	For that, each edge of colour $2d+1$ (or, more precisely, the endpoints of that edge) negotiates with the next edge of colour $2d+1$ in the same row in the same dimension the colouring of the in-between edges, such that the two available colours alternate.
	Observe that for any node from $I_q$ there is another node from $I_q$ in the same $q$-directional row (in both directions) with distance at most $2j+1$, by Property \ref{prop:row-dist} of Definition \ref{def:jk-independent}.
	Moreover, since each node marks an edge that is in distance at most $k$, any edge of colour $2d+1$ has a distance of at most $2k+2j+1$ to the next edge of colour $2d+1$ in the same row in the same dimension.
	Hence, the colouring of the edges can be completed (in parallel) in time $O(1)$.
	The construction of the colouring ensures that no two adjacent edges have the same colour.
	This proves the upper bound claimed in Theorem \ref{thm:edge-colouring}.

\paragraph{\boldmath Edge colouring with $2d$ colours.}
As mentioned in the beginning of this section, the bound on the number of colours given in Theorem \ref{thm:edge-colouring} is tight:

\begin{theorem}
	Let $d$ be fixed. Any $d$-dimensional grid $G_n$ with $n$ odd admits no edge $2d$-colouring.
\end{theorem}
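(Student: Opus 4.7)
The plan is to derive a contradiction from a simple parity count on the number of vertices. The key observation is that the $d$-dimensional toroidal grid $G_n$ is a $2d$-regular graph on exactly $n^d$ vertices, and $n^d$ is odd whenever $n$ is odd.

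First I would recall the standard fact that in any proper edge $k$-colouring of a $k$-regular graph $H$, each colour class is a perfect matching of $H$. The reason is immediate: every vertex $v$ is incident to exactly $k$ edges, and these must carry $k$ pairwise distinct colours; hence for each colour $c$, the vertex $v$ is the endpoint of exactly one edge of colour $c$, so the colour-$c$ class induces a spanning $1$-regular subgraph, i.e.\ a perfect matching.

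Then I would assume, towards a contradiction, that $G_n$ admits a proper edge $2d$-colouring, and pick any colour class $M \subseteq E(G_n)$. By the fact above, $M$ is a perfect matching of $G_n$, so $|V(G_n)|$ must be even. However $|V(G_n)| = n^d$, which is odd since $n$ is odd. This contradiction shows that no edge $2d$-colouring can exist.

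There is essentially no obstacle here: the entire argument reduces to the observation that a $2d$-regular graph with an odd number of vertices cannot be edge-$2d$-coloured, because each colour class would have to be a perfect matching, which in turn requires evenly many vertices. This tightness result therefore complements \theoremref{thm:edge-colouring} in a one-paragraph proof.
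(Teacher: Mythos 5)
Your proof is correct and uses essentially the same argument as the paper: both observe that in a proper edge $2d$-colouring of the $2d$-regular grid each colour class must be a perfect matching (equivalently, must contain $n^d/2$ edges), which is impossible since $n^d$ is odd.
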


\begin{proof}
	Let $n$ be odd and assume for a contradiction that there exists an edge $2d$-colouring of $G_n$.
	Let $c$ be one of the $2d$ colours.
	Since each node of the grid has degree $2d$, each node must have exactly one incident edge of colour $c$.
	Summing up the number of incident edges of colour $c$ over all nodes, we obtain $n^d$.
	Since this sum counts each edge of colour $c$ exactly twice, the total number of edges of colour $c$ must be $n^d/2$.
	Since $n$ is odd, $n^d/2$ is not an integer, yielding a contradiction.
\end{proof}

\section{Edge orientations}\label{app:edge-orientations}

Recall that for a set $X \subseteq \{ 0,1,2,3,4 \}$, an $X$-orientation is an orientation of the edges such that for each node $v \in V$ we have $\indeg(v) \in X$. In this section, we present an exhaustive classification for $X$-orientation problem:

\begin{theorem}
$X$-orientation problem for $2$-dimensional grids has the following complexity:
\begin{itemize}[noitemsep]
\item $\Theta(1)$ if $2 \in X$.
\item $\Theta(\log^* n)$ if $\{1,3,4\} \subseteq X$ or $\{0,1,3\} \subseteq X$.
\item Otherwise no solution exists for infinitely many $n$.
\end{itemize}
\end{theorem}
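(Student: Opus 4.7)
The plan is to handle the three cases of the theorem separately, making repeated use of the \emph{edge-reversal duality} that replaces every orientation with its reverse and maps $X$ to $\{4-x : x \in X\}$: this exchanges $\{1,3,4\} \leftrightarrow \{0,1,3\}$ and $\{0,3,4\} \leftrightarrow \{0,1,4\}$, halving the work. For the case $2 \in X$, I would simply output the canonical orientation that directs every edge from its smaller endpoint to its larger endpoint, using the direction labels already available at each node; every node then has $\indeg = 2 \in X$. This is a $0$-round algorithm, matching the trivial $\Omega(1)$ lower bound.

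For the case $\{1,3,4\} \subseteq X$ (with $\{0,1,3\} \subseteq X$ following by duality), the $\Omega(\log^* n)$ lower bound is standard: no direction-only labelling produces an indegree outside $\{2\}$, so by \citet{naor95what}'s order-invariance argument no $o(\log^* n)$ algorithm exists on toroidal grids. For the $O(\log^* n)$ upper bound I would invoke the normal form of \theoremref{thm:speedup}: compute a maximal independent set $I$ in $G^{(k)}$ for a small constant $k$ in $O(\log^* n)$ rounds, and then apply a constant-size local pattern $A'$ around each anchor. A natural construction to describe is a chessboard-type orientation realising a perfect matching between the two colour classes---giving $\indeg=1$ on one class and $\indeg=3$ on the other, which works directly when $n$ is even---with constant-size anchor-localised defects absorbing the parity obstruction when $n$ is odd. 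If writing the pattern out by hand turns out to be cumbersome, the synthesis procedure of \sectionref{sec:synthesis} is guaranteed to find one: by hypothesis the complexity is $O(\log^* n)$, so some constant $k$ and tile labelling must exist.

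For the ``otherwise'' case, I would enumerate all $X \subseteq \{0,1,3,4\}$ with $\{1,3,4\} \not\subseteq X$ and $\{0,1,3\} \not\subseteq X$; a direct case analysis shows these are precisely the subsets of $\{0,3,4\}$ together with the subsets of $\{0,1,4\}$, dual under edge reversal. The identity $\sum_v \indeg(v) = 2n^2$ then dispatches most subcases: no solution for any $n$ when $|X| \le 1$ or $X \in \{\{0,1\}, \{3,4\}\}$; a requirement that $3 \mid n$ when $X \in \{\{0,3\}, \{1,4\}\}$; and a requirement that $n$ be even when $X \in \{\{0,4\}, \{1,3\}\}$. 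In each instance infinitely many $n$ admit no $X$-orientation.

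The main obstacle is the remaining subcase $X = \{0,3,4\}$ (and dually $\{0,1,4\}$), for which the counting constraints $a_0+a_3+a_4=n^2$ and $3a_3+4a_4=2n^2$ are always satisfiable. My plan is a local rigidity argument: in any $\{0,3,4\}$-orientation, each rook-neighbour of a sink (a node of $\indeg 4$) has $\outdeg \ge 1$ forced by the edge into the sink, and since $\outdeg \in \{0,1,4\}$ it must either be a source ($\outdeg 4$) or an absorber whose unique outgoing edge points precisely to the sink ($\outdeg 1$). Propagating this, together with the symmetric constraints around sources, should force a near-periodic placement of sinks, sources and absorbers whose period is incompatible with tori of infinitely many sizes $n$ (for instance odd $n$ of specific residue classes). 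Should the direct rigidity argument falter, a backup plan is to extract a global parity or flow invariant summed along a non-contractible cycle of the torus that cannot match the prescribed value for those $n$. Either route yields the desired infinite family of unsolvable $n$, and the paper's convention on problems unsolvable for infinitely many sizes then upgrades the statement to a $\Theta(n)$ bound, completing the classification.
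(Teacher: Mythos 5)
Most of your proposal tracks the paper: the trivial case $2 \in X$, the edge-reversal duality, the $\Omega(\log^* n)$ lower bound via the absence of a feasible constant output, and the degree-sum dispatch of the small subsets all match the paper's argument. Two smaller issues first. Your characterisation of the ``otherwise'' class as ``precisely the subsets of $\{0,3,4\}$ together with the subsets of $\{0,1,4\}$'' is not quite right --- it also contains $\{1,3\}$, which is a subset of neither; you do nevertheless handle $\{1,3\}$ correctly in the dispatch via $2a_3 = n^2$, matching the paper's separate lemma. For the $\{1,3,4\}$ upper bound, your fallback ``by hypothesis the complexity is $O(\log^* n)$, so some constant $k$ and tile labelling must exist'' is circular: the upper bound is exactly what is being proved. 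The paper discharges it by actually running the synthesis procedure of \sectionref{sec:synthesis} and reporting that it succeeds with $k=1$; you would need either to do the same or to verify your chessboard-plus-defects construction in detail (the parity repair for odd $n$ is precisely the nontrivial part).

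The genuine gap is the case $X = \{0,3,4\}$ (and its dual $\{0,1,4\}$). Your plan is to show that no solution exists for infinitely many $n$, via a local rigidity argument or a mismatched invariant. This aims at the wrong target: $\{0,3,4\}$-orientations are neither rigid nor scarce --- already on the $3 \times 3$ torus, three pairwise non-adjacent in-degree-$0$ sources together with a consistently oriented $6$-cycle on the remaining six vertices yields a valid $\{0,3\}$-orientation, and the paper's structural analysis shows that solutions in general decompose into flexible $1$-forests bordered by in-degree-$0$ vertices, so no near-periodic placement is forced. What the paper actually proves is an $\Omega(n)$ \emph{round-complexity} lower bound: from any valid output it extracts a signed crossing number $r(i)$ per row of vertical edges, shows $r(i)$ is independent of the row and (for any $o(n)$-time algorithm) of the identifier assignment, that its parity equals the parity of $n$ and $|r(i)| \le n/2$, and then invokes \theoremref{thm:q-sum-lower-bound}, the $q$-sum coordination lower bound, whose own proof is a delicate fragment-swapping argument on labelled cycles. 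Your ``backup'' invariant along a non-contractible cycle is the right object to look at, but the conclusion to be drawn from it is not that the invariant cannot be realised (it can); it is that locally producing outputs whose invariant agrees with a prescribed global value requires $\Omega(n)$ rounds --- and that step is an entire theorem your proposal does not supply.
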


We first make the following simple observations:
\begin{itemize}[noitemsep]
    \item If $2 \in X$, the existing input orientation of the grid is a valid solution.
    \item $\{1,3,4\}$-orientation and $\{0,1,3\}$-orientation have the same complexity, as one can be obtained from the other by flipping edge directions.
\end{itemize}
The following lemmas cover the remaining cases.

\begin{lemma}
$\{1,3,4\}$-orientation has complexity $\Theta(\log^* n)$.
\end{lemma}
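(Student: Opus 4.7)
The lower bound $\Omega(\log^*n)$ is immediate from \citet{naor95what}: any constant edge labelling forces each vertex to have in-degree exactly $2$, since for every pair of opposite directions in the local frame (N/S and E/W), consistency of the two endpoints of an edge requires exactly one of the two to be incoming. As $2 \notin \{1,3,4\}$, no constant output is a valid solution, the problem is non-trivial, and hence it requires $\Omega(\log^*n)$ rounds.

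For the upper bound, the plan is to produce an algorithm of the normal form provided by \theoremref{thm:speedup}. We first compute a maximal independent set $I$ in $G^{(k)}$ for a suitably large constant $k$; this takes $O(\log^*n)$ rounds and yields anchors at pairwise distance at least $k$. The final orientation is then a local function of the configuration of $I$ in each vertex's $O(k)$-neighbourhood.

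The base template is a \emph{horizontal matching pattern}: pair $(2i, y)$ with $(2i+1, y)$, orient the matching edge $(2i, y) \to (2i+1, y)$, orient all remaining horizontal edges westward, and orient every vertical edge northward. A direct verification shows that each giver $(2i, y)$ has in-degree $1$ (only the south edge enters) and each receiver $(2i+1, y)$ has in-degree $3$ (west, east, and south enter), giving a valid $\{1,3,4\}$-orientation. For $n$ even, the column parity that distinguishes givers from receivers is a $2$-colouring of the cycle of $n$ columns, which can be propagated from the anchor set $I$ in $O(\log^*n)$ further rounds (indeed in $O(1)$ additional rounds once $I$ has been found).

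The hard case is odd $n$. A parity count ($\sum_v \indeg(v) = 2n^2$) shows that any valid $\{1,3,4\}$-orientation on the $n \times n$ torus with $n$ odd must contain an odd number of in-degree-$4$ sinks. The plan is to place these sinks at the anchors of $I$: each anchor has all four incident edges oriented inward, and the matching template is modified within a constant-radius neighbourhood of each anchor so that the sink defect is absorbed and the template smoothly rejoins outside. Because anchors are at pairwise distance at least $k$, modifications around different anchors do not interact, and for $k$ sufficiently large a consistent local pattern should exist. The main obstacle, and the step I expect to consume most of the work, is verifying this consistency at the seams between modified and unmodified regions; if an explicit case analysis becomes cumbersome, the existence of a suitable local function can instead be confirmed by running the synthesis procedure of \sectionref{sec:synthesis}. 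The resulting algorithm has the form $A' \circ S_k$ of \figureref{fig:normal-form} and therefore runs in $O(\log^*n)$ rounds.
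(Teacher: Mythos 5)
Your lower bound is exactly the paper's: a constant output forces in-degree $2$ at every vertex, and $2 \notin \{1,3,4\}$, so \citet{naor95what} gives $\Omega(\log^* n)$. Your final fallback for the upper bound---run the synthesis procedure of \sectionref{sec:synthesis}---is also exactly what the paper does (it reports that synthesis succeeds with $k=1$). The problem is the explicit construction you offer as the primary route: it has a genuine gap, and not only in the odd-$n$ case you flag.

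The claim that for even $n$ the giver/receiver column parity ``can be propagated from the anchor set $I$ in $O(1)$ additional rounds'' is false. Assigning alternating roles to the $n$ columns is precisely a $2$-colouring of an $n$-cycle, a $\Theta(n)$ problem; a maximal independent set in $G^{(k)}$ places anchors at distances of uncontrolled parity, so it cannot be used to break this tie locally. Worse, the template itself is globally rigid: once every vertical edge points north, every vertex has vertical in-degree exactly $1$, so its horizontal in-degree must be $0$ or $2$, i.e.\ within each row every vertex is a source or a sink of the horizontal orientation. That forces a perfect alternation around each row-cycle, with no local slack to absorb a parity defect. It also means your plan for odd $n$ cannot work as stated: an in-degree-$4$ sink needs its north edge incoming, which the ``all vertical edges northward'' rule forbids, so absorbing the defect requires re-orienting vertical edges, and the resulting seams propagate vertically into neighbouring rows as well as horizontally. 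Any correct explicit construction must build flexibility into the vertical direction too (this is where the case analysis genuinely lives), or one simply delegates to the synthesis algorithm as the paper does.
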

\begin{proof}
For the lower bound of $\Omega(\log^* n)$, notice that a constant output is not feasible solution and hence there is no constant-time solution. For the upper bound we resort to computational techniques; we can synthesise an $O(\log^* n)$-time algorithm, using techniques outlined in \sectionref{sec:synthesis} with $k = 1$.
\end{proof}

\begin{lemma}
There is no $\{1,3\}$-orientation for grids with odd $n$.
\end{lemma}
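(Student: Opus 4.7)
The plan is to use a simple double-counting (parity) argument on the sum of in-degrees. In a toroidal $n \times n$ grid every node has degree $4$, so the total number of edges is $|E| = 2n^2$. In any orientation of the edges we have the identity
\[
\sum_{v \in V} \indeg(v) \;=\; |E| \;=\; 2n^2.
\]

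Next I would impose the assumption that a $\{1,3\}$-orientation exists and count nodes by in-degree. Let $a$ be the number of nodes with $\indeg(v) = 1$ and $b$ the number with $\indeg(v) = 3$. Then $a + b = n^2$ and $a + 3b = 2n^2$, which gives $2b = n^2$, i.e.\ $b = n^2/2$. If $n$ is odd, then $n^2$ is odd, so $n^2/2$ is not an integer, a contradiction. Hence no $\{1,3\}$-orientation can exist whenever $n$ is odd.

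There is essentially no obstacle here — the argument is a one-line parity check on the handshake identity, analogous to the $2d$-edge-colouring impossibility in the previous section. The only thing worth stating explicitly in the write-up is that this failure for all odd $n$ also yields an $\Omega(n)$ lower bound on any $\{1,3\}$-orientation algorithm (as noted in the preliminaries, infinitely many unsolvable instances force the problem to be global), which combined with the other lemmas in the section completes the $X$-orientation classification for the remaining cases $X \subseteq \{0,1,3,4\}$ with $\{1,3,4\} \not\subseteq X$ and $\{0,1,3\} \not\subseteq X$.
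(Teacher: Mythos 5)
Your proof is correct and uses essentially the same argument as the paper: both count the sum of in-degrees, equate it with the number of edges $2n^2$, and derive a parity contradiction with $n^2$ odd (the paper phrases it as the parities of the two in-degree classes matching, you solve explicitly for $b = n^2/2$; these are the same computation).
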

\begin{proof}
Consider grid with odd $n$ and any $\{1,3\}$-orientation of it. Since the sum of all in-degrees is equal to number of edges, being $2n^2$, it is even. Thus number of vertices with in-degree $1$ matches with parity to number of vertices with in-degree $3$, meaning that total number of vertices is even, a contradiction.
\end{proof}

\begin{theorem}
$\{0,3,4\}$-orientation problem is global on $2$-dimensional grids.
\end{theorem}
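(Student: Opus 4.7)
The plan is to follow the template of the preceding $\{1,3\}$-orientation lemma: first try to exhibit an infinite family of $n$ for which no $\{0,3,4\}$-orientation exists, and only fall back on the speedup theorem if no clean non-existence argument emerges. First I would collect the basic counting identities. Writing $a$, $b$, $c$ for the number of vertices of in-degree $0$, $3$, $4$ respectively, summing in-degrees over all vertices gives $3b + 4c = 2n^2$, which together with $a + b + c = n^2$ forces $c = 3a - n^2$ and $b = 2(n^2 - 2a)$. In particular $a \ge \lceil n^2/3 \rceil$, both the sources and the sinks must form independent sets in the grid, and every in-deg-$3$ vertex has a unique out-edge, so the functional graph on in-deg-$3$ vertices consists of directed cycles and trees rooted at sinks.

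Next I would refine this by decomposing each $\indeg(v)$ into its horizontal part $\operatorname{hin}(v)$ and vertical part $\operatorname{vin}(v)$, both in $\{0,1,2\}$. Since $\indeg(v) \in \{0,3,4\}$, the only allowed pairs are $(0,0)$, $(1,2)$, $(2,1)$, $(2,2)$. Summing horizontal in-degrees along a row equals $n$, which forces that in each row the number of sources equals the number of ``horizontal sinks'' (vertices with $\operatorname{hin}=2$); the analogous statement holds for each column. I would then try to combine this per-row and per-column balance with the density bound $a\ge n^2/3$ and with the independence requirements on the source and sink sets, to obtain a contradiction for an infinite family of $n$---for example, via double counting on the functional graph, or by analysing how the four allowed vertex types can possibly tile the torus for specific residues of $n$ modulo a small constant.

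If no direct non-existence argument comes out cleanly, the fallback is the speedup route: assume for contradiction an $o(n)$-time algorithm, invoke \theoremref{thm:speedup} to produce an $O(\log^* n)$-time algorithm in the normal form $A' \circ S_k$ of \sectionref{sec:synthesis}, and observe that the resulting orientation is then determined by a finite tile set indexed by the anchor configurations in a radius-$O(k)$ window. One would then use the functional-graph cycle structure above to show that closing these cycles around the torus requires a global choice of ``phase'' which no finite tile set can realise uniformly in $n$. The hard part in either route is this final step: the obvious parity obstructions on $a$, $b$, $c$ modulo $2$, $3$, $4$ are all satisfiable, so a direct non-existence proof requires a subtler invariant, while the tiling argument needs a precise impossibility statement that accounts for the rotational and reflection symmetries of the allowed local rules.
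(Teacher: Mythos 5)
Your proposal does not close the argument, and the gap is in exactly the place you flag as ``the hard part''. Route one (non-existence for infinitely many $n$) is a dead end: the counting identities $3b+4c=2n^2$, $a+b+c=n^2$ and the independence of the source and sink sets are all consistent, and the problem is in fact solvable; the theorem is a genuine $\Omega(n)$ lower bound on a solvable problem, not a non-existence statement. Route two has the right shape --- some global ``phase'' must be coordinated --- but ``no finite tile set can realise the phase uniformly in $n$'' is not a proof and cannot be one as stated: every $\Theta(\log^* n)$ problem on the torus \emph{is} realised by a finite tile set, so you must exhibit a concrete invariant that any sublinear algorithm would have to compute but cannot. You never construct that invariant, and the reflection/rotation symmetries you worry about are a red herring.

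The missing idea is a reduction to the $q$-sum coordination problem of \theoremref{thm:q-sum-lower-bound}, mirroring the $3$-colouring lower bound. Concretely: label each vertex by its in-degree, note (as you do) that no two $0$'s and no two $4$'s are adjacent and that each in-degree-$3$ vertex has a unique out-edge, so the non-source vertices form a $1$-forest. Then assign to each \emph{vertical} edge between rows $i$ and $i+1$ a label in $\{-1,0,+1\}$ determined by its orientation and by the parity of the $L_1$ distance between the nearest in-degree-$0$ vertices to its left and right in those two rows (these gaps have width at most $2$, so the labels are computable in $O(1)$ extra rounds). One then proves three facts: the row sum $r(i)$ of these labels is the same for every $i$ (only the directed cycles of the $1$-forest contribute, and each contributes its winding number around the torus); $r(G)$ is the same for all grids of the same size (by the identifier-swapping argument of \lemmaref{lem:sum_is_const}); and $r(i)$ is odd when $n$ is odd with $|r(i)|\le n/2$. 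Feeding this function into \theoremref{thm:q-sum-lower-bound} via a simulation of the orientation algorithm on a $T(n)$-wide strip yields the contradiction. Your per-row balance between sources and horizontal sinks, and your functional-graph decomposition, are true observations but neither is the invariant that does the work; without the $\{-1,0,+1\}$ labelling and the cycle-winding argument, nothing in your outline rules out a sublinear algorithm.
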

\begin{proof}
Our proof follows the steps of proof of the lower bound for vertex $3$-colouring of grids. Assume that we have an algorithm $A$ that solves the problem in $T(n) = o(n)$ rounds. We will show that such algorithm can be used to solving $q$-sum coordination problem, which by Theorem~\ref{thm:q-sum-lower-bound} leads to contradiction.

We label nodes with values of their in-degrees, that is $0$, $3$ or $4$. Observe that no two $0$ can be neighbours, similarly no two $4$ can be neighbours. When speaking of nodes labelled $3$, we will also refer to a direction of its only outgoing edge as \emph{pointing} to.

Consider two consecutive rows of vertices of the grid, $i$ and $i+1$. A row of vertical edges connecting them will be referred to as $i$-th vertical row of edges.

Consider labelling of edges from $i$-th vertical row of edges with values from $\{-1,0,+1\}$ in a following manner.
Let $u^+$ and $u^-$ be the vertices $0$ in rows $i$ or $i+1$, in the columns closest to the left and closest to the right from considered edge.
\begin{itemize}[noitemsep]
\item If there is vertex $0$ on one of the endpoint of considered edge, assign label $0$.
\item If vertices $u^+$ and $u^-$ are at odd $L_1$ distance and edge is oriented ``up'', assign label $+1$.
\item If vertices $u^+$ and $u^-$ are at odd $L_1$ distance and edge is oriented ``down'', assign label $-1$.
\item If vertices $u^+$ and $u^-$ are at even $L_1$ distance, assign label $0$.
\end{itemize}

Consider two vertex rows, $i$ and $i+1$. The gaps between nodes $0$ are at most 2 columns wide, and the only way to have exactly 2 columns gap is to put nodes labelled $3$ into $2\times 2$ square with outgoing edges forming cycle. (See Figure~\ref{fig:three-small-figures}.)

\begin{figure}[t]
\centering
\includegraphics[width=0.7\columnwidth]{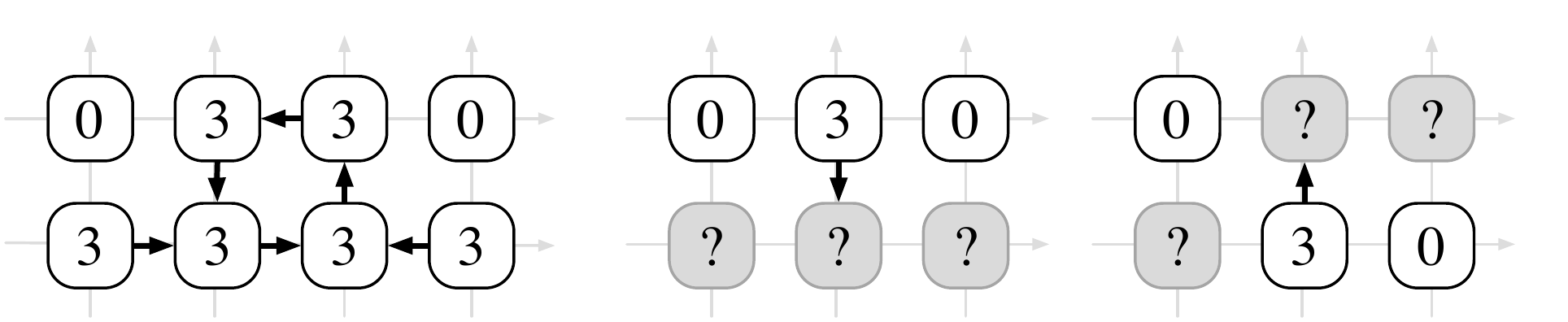}
\caption{Possible relations between vertical edges and nodes with $0$. Only in the last case the edge is labelled with non-zero value.}\label{fig:three-small-figures}
\end{figure}

Since the gaps are bounded in length, we immediately conclude that vertical edges compute their labels in at most 2 additional rounds. Let $r(i)$ be the sum of labels on the $i$-th vertical row of edges.

Consider all non-$0$ vertices and edges between them. Every vertex has out-degree at most $1$, so they form $1$-forest, with connected components being $1$-trees or trees. Observe that two vertices from separate branches of trees cannot be neighbouring, as it is not possible to orient properly edge connecting them. 
Fix one of the components as $D$. It is composed of (possibly empty) set of vertices and edges forming a directed cycle, denoted $C$, and tree-like attachments.
A border of $D$ is composed of pairwise nonadjacent $0$. However, if we consider diagonal adjacency of $0$ vertices, border is either 1 or 2 diagonally connected components: $\partial D = B_1$ or $\partial D = B_1 \cup B_2$, and each component  $B_i$ has the same parity of its vertices. We observe that only vertical edges from $C$ can contribute non-zero to any $r(i)$, as any other edge is bordered by vertices from the same $B_i$. Thus, if $B_1$ and $B_2$ have different parity, $C$ contributes $+1$ every time it crosses horizontal line ``up'' and $-1$ every time it crosses horizontal line ``down'', which is identical for all the rows (and counts the invariant of how many ``wraps around'' the cycle does on the grid).

This proves that $r(i)$ is constant for a single grid $G$, denoted $r(G)$. The proof that for two grids $G_1$ and $G_2$ of the same size, $r(G_1) = r(G_2)$, follows from adapting Lemma~\ref{lem:sum_is_const}.

Moreover, we observe that along any horizontal line, any two $u^+$ and $u^-$ contribute to the sum of labels iff they are at odd $L_1$ distance. Thus doing the full traversal and returning to the starting vertex, the parity of sum is the parity of $n$. Additionally, we observe that edges contributing non-zero to sum cannot be denser than every second edge, thus $|r(i)| \le n/2$.

The claim now follows by a straightforward application of Theorem~\ref{thm:q-sum-lower-bound} to $r(G)$.
\end{proof}

\section{Discussion and open questions}\label{sec:discussion}

\paragraph{Randomised complexity.} \citet{chang16exponential} showed that the randomised complexity of any \lcl{} on instances of size $n$ is at least its deterministic complexity on instances of size $\sqrt{\log n}$. This, combined with our \theoremref{thm:speedup}, implies that there are no \lcl{} problems with randomised complexity between $\omega(\log^* n)$ and $o(\sqrt{\log n})$ on the grid. Whether problems with randomised complexity $O(\sqrt{\log n})$ exist is left as an open question.

\paragraph{High-dimensional grids.} As mentioned in the beginning of the introduction, we can also consider the setting of $d$-dimensional (hypertoroidal oriented) grids with $n^d$ nodes. The complexity results extend to this setting: the classification theorem and the undecidability of classification hold for $d$-dimensional grids, and as noted before, the vertex and edge colouring results generalise. The techniques used in the synthesis algorithm also generalise to $d$-dimensional grids. However, we have not yet implemented the synthesis beyond $d=2$, and we expect that the increased size of the search space may make the synthesis less feasible.

\paragraph{Bounded growth graphs.} The proof of \theoremref{thm:speedup} intrinsically exploits the fact that the size of a neighbourhood $N_r(v)$ grows quadratically in $r$, and thus any algorithm with running time $T(n) = o(n)$ cannot see all $n^2$ nodes of the graph for large $n$. We show that this is not a phenomenon restricted to grids: for any class of graphs with limited neighbourhood growth rate, we get a large complexity gap. See Appendix~\ref{sec:general-speedup} for the precise statement and the proof.

\paragraph{Sublinear problems on general graphs.} Finally, we use techniques inspired by grid graphs to expand our understanding of the complexity landscape of \lcl{} problems on general bounded-degree graphs. Recall that in general we know that the lower end of the complexity landscape is sparse: for deterministic algorithms, there is nothing between the classes $O(1)$, $\Theta(\log^* n)$, and $\Theta(\log n)$. There are also obviously problems of complexity $\Theta(n)$, but the gap between $\Theta(\log n)$ and $\Theta(n)$ is largely unexplored. In Appendix~\ref{app:root} we show how to engineer an \lcl{} problem with a complexity of precisely $\Theta(\sqrt{n})$ in general bounded-degree graphs; subsequently, \citet{hierarchy} have given a more general result showing that problems of complexity $\Theta(n^{1/k})$ exists for any integer $k \ge 2$ even when restricted to bounded-degree trees.

\section*{Acknowledgements}

We would like to thank Orr Fischer for many discussions related to these research questions.
This work was supported in part by the Academy of Finland, Grants 285721 and 289002.

\DeclareUrlCommand{\Doi}{\urlstyle{same}}
\renewcommand{\doi}[1]{\href{http://dx.doi.org/#1}{\footnotesize\sf doi:\Doi{#1}}}

\bibliographystyle{plainnat}
\bibliography{grid-lcl}

\clearpage
\appendix
\section{Appendix}

\subsection{Generating tiles}\label{app:tiles}

Consider a graph $G$ with a maximal independent set $I$. A \emph{tile} of
$(G,I)$ is a pair $(G',I')$, where $G'$ is an induced subgraph
of $G$ and $I' = V(G') \cap I$. Observe that the property of being a tile is
\emph{hereditary}: If $(G',I')$ is a tile, $G''$ is an induced
subgraph of $G'$, and $I'' = V(G'') \cap I'$, then $(G'',I'')$ is a tile of the original
graph $G$. Consequently, one may construct tiles of a graph through a sequence of
induced subgraphs. To present the algorithm for one step in such a sequence we
need to define the concept of closed neighbourhood.

The \emph{closed neighbourhood} of a vertex
$v \in V(G)$ consists of $v$ and the vertices adjacent to $v$ and
is denoted by $N_G[v]$. For a set of vertices $V' \subseteq V(G)$, we further define
$N_G[V'] := \bigcup_{v \in V'}N_G[v]$.
In the sequel, unless otherwise
mentioned, we assume that we are dealing with the graph $G$ so that
$G'$ is an induced subgraph of $G$ and $G''$ is an
induced subgraph of $G'$ (as well as $G$, obviously).

We now want to extend tiles $(G'',I'')$ to tiles $(G',I')$
in all possible ways (the unknown
is $I'$). Let $V_d = (V(G') \setminus V(G'')) \setminus N_G[I'']$.
For each independent set $I_d$ of
$V_d$, $(G',I''\cup I_d)$ is a candidate to be tile and has to be checked.
This can be done as follows. Let $V_u = V(G') \setminus N_G[I''\cup I_d]$. If $V_u = \emptyset$,
then we have a tile since any independent set can be extended to a maximal
independent set and none of the additional vertices could come from $V(G')$.

If $V_u \neq \emptyset$, then we have a tile if and only if there is an
independent set $I_n$ in $(V(G) \setminus V(G')) \setminus N_G[I'' \cup I_d])$ such that
$V_u \subseteq N_G[I_n]$. We now form sets
$S_v = (N_G[v] \setminus V(G')) \setminus N_G[I'' \cup I_d])$ for each
$v \in V_u$ and the computational problem is to find an independent set
that intersects each of $S_v$. This resembles variants of the set cover (hitting set)
problem, and corresponding instances may be solved using a SAT solver or
by implementing a tailored backtrack search, e.g., along the lines of \citet{knuth00dancinglinks}.

For (powers of) toroidal grid graphs we consider ``rectangular'' tiles with
$a \times b$ vertices and sequences such as
$0 \times b \rightarrow 1 \times b \rightarrow 2 \times b \rightarrow \cdots \rightarrow a
\times b$.

In the construction of tiles, one could further make use of symmetries (automorphism
groups) to achieve some speed-up. However, for tiles of the types mentioned above, the orders
of the groups are only 4 (rectangular case) and 8 (square case), and the basic algorithm
is already fast enough to handle the instances considered in this work.

\subsection{Speed-up on graphs of bounded growth}\label{sec:general-speedup}

Let $f \colon \N \to \N$ be a strictly increasing function, and let $\mathcal{G}$ be a class of graphs. Recall that $\mathcal{G}$ is \emph{$f$-growth-bounded} if for every $G \in \mathcal{G}$ and $v \in V(G)$, we have that $| N_r(v) | \le f(r)$. Moreover, we say that $\mathcal{G}$ is \emph{neighbourhood-hereditary} if there is a constant $C$ such that for any graph $G \in \mathcal{G}$, vertex $v \in V(G)$ and constant $r$, for any $k\ge C|N_r(v)|$ there is a graph $G' \in \mathcal{G}$ such that $|V(G')| = k$ and $G[N_r(v)]$ is isomorphic to a induced subgraph of $G'$.

\begin{lemma}\label{lemma:speedup-bounded-growth}
Let $\mathcal{G}$ be a neighbourhood-hereditary $f$-growth-bounded family of graphs with constant maximum degree $\Delta$, where $f(n) = \omega(n)$. If \lcl{} problem $P$ can be solved deterministically on $\mathcal{G}$ in $T(n) = o\bigl(f^{-1}(n)\bigr)$ rounds, then $P$ can be solved deterministically on $\mathcal{G}$ in $O(\log^* n)$ rounds.
\end{lemma}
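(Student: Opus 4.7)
The plan is to adapt the speed-up argument in the proof of Theorem~\ref{thm:speedup} from toroidal grids to the abstract bounded-growth setting. The three substitutions needed are: the ``side length'' $k$ of the pretended small grid becomes an abstract constant $k$; the bound $k^2$ on the size of a tile is replaced by the growth bound $f(k)$; and the ad hoc construction of a small grid realising a given tile is replaced by the neighbourhood-hereditary property.

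First I would pick a constant $k$ large enough that $T(Cf(k)) < k$, where $C$ is the constant from the definition of neighbourhood-hereditary, and set $n' := Cf(k)$. Such a $k$ exists because $T(n) = o(f^{-1}(n))$, $f$ is strictly increasing, and $f(n) = \omega(n)$: monotonicity gives $f^{-1}(n') \ge k$, and consequently $T(n') = o(f^{-1}(n'))$ can be driven below $k$ for sufficiently large constant $k$. Next I would compute a maximal independent set $I$ in $G^{(2k)}$; because $\Delta$ and $k$ are both constants, $G^{(2k)}$ has constant maximum degree, so $I$ can be found in $O(\log^* n)$ rounds using standard $O(\log^* n)$-time MIS algorithms on bounded-degree graphs, simulated on $G$ with a constant slowdown. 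I would then partition $V(G)$ into Voronoi tiles around the anchors in $I$, breaking ties via the unique identifiers. Each tile has $G$-radius at most $2k$ and hence contains at most $f(2k) = O(1)$ nodes, so every node can be assigned a locally unique identifier from $\{1,\ldots,f(2k)\}$ by sorting the original identifiers within its tile.

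The simulation step would have each node run $A$ for $T(n')$ rounds using these local identifiers, pretending the input is an instance in $\mathcal{G}$ of size $n'$. The neighbourhood-hereditary property guarantees that the radius-$T(n')$ neighbourhood of every node, which has at most $f(T(n')) \le f(k)$ vertices, embeds as an induced subgraph of some $G' \in \mathcal{G}$ of size $n'$, equipped with valid unique identifiers. Because $T(n') < k$ while the pairwise distance between anchors exceeds $2k$, the radius-$T(n')$ view of any node stays inside a single tile and sees no repeated local identifiers; hence what $A$ sees is indistinguishable from an honest local view inside some legitimate $G'$, on which $A$ must output a valid labelling. Since $P$ is locally checkable, the resulting output is valid in every radius-$r$ neighbourhood of $G$ and hence globally valid; the total time is $O(\log^* n) + O(1) = O(\log^* n)$.

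The main obstacle is the very first step: justifying the existence of a constant $k$ satisfying $T(Cf(k)) < k$ using only the stated hypotheses on $f$. Unless one assumes some regularity of $f$ (for example $f^{-1}(Cn) = O(f^{-1}(n))$, which holds for all polynomial, quasipolynomial, and exponential growth functions), the inequality $T(Cf(k)) < k$ need not follow from $T(Cf(k)) = o(f^{-1}(Cf(k)))$ alone. I would handle this either by invoking such a standard regularity assumption, or by a slightly more delicate choice of parameters: pick a sufficiently large constant $n'$ with $T(n') \le f^{-1}(n'/C) - 1$ (which is possible because $T(n) = o(f^{-1}(n))$ and $f^{-1}$ is non-decreasing), set $k := T(n') + 1$, and verify both $n' \ge Cf(k)$ and $T(n') < k$ directly from the choice of $n'$ and the monotonicity of $f$. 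All remaining steps then mirror the toroidal-grid argument almost verbatim, with $f(\cdot)$ playing the role that the explicit bound $k^2$ played there.
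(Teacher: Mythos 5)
Your overall strategy (constant-range locally unique identifiers, then simulating $A$ on a pretended constant-size instance, with correctness supplied by the neighbourhood-hereditary property) matches the paper's, and your discussion of how to choose the constant parameter is, if anything, more explicit than the paper's own one-line justification. However, there is a genuine gap in the step that produces the locally unique identifiers. You compute a maximal independent set $I$ in $G^{(2k)}$, form Voronoi tiles, and assign each node the rank of its original identifier within its own tile. This does not make identifiers locally unique: two adjacent nodes lying in different tiles can both receive rank $1$ (each happening to hold the smallest original identifier of its own tile), so a repeated identifier can appear already in a radius-$1$ view. Your justification --- that ``the radius-$T(n')$ view of any node stays inside a single tile'' --- is false: the anchors being pairwise far apart does not mean nodes are far from tile boundaries, and a node adjacent to a tile boundary sees into the neighbouring tile after one hop. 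In the grid proof of Theorem~\ref{thm:speedup} this issue is resolved by the coordinate structure: local identifiers are coordinates relative to the anchor, so two nodes with equal identifiers sit in the same relative position in their respective tiles and are therefore at least as far apart as their anchors. That argument has no analogue in a general bounded-growth graph, and ``sort within the tile'' does not replace it.

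The fix is essentially to drop the Voronoi construction altogether, which is what the paper does: compute a proper vertex colouring of the power graph $G^{(2T(k)+3)}$, whose maximum degree is at most $f(2T(k)+3)$ and hence constant, and use the colours themselves as the identifiers. By construction no colour repeats within distance $2T(k)+3$, which covers the union of the simulated views of any pair of nodes compared by a radius-$r$ checkability constraint, so every local view is one that could occur in a legitimate constant-size instance supplied by the neighbourhood-hereditary property. With that substitution (and the corresponding adjustment of the constant so that the number of colours is at most the pretended instance size divided by $C$), the remainder of your argument goes through.
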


\begin{proof}
Denote the radius parameter of the \lcl{} problem $P$ with $r$, and assume that there exists an algorithm $A$ for $P$ with complexity $T(n) = o\bigl(f^{-1}(n)\bigr)$. We show that there exists an algorithm $A'$ for $P$ that runs in $O(\log^* n)$ rounds on $\mathcal{G}$.

First, we fix a constant $k$ such that $f(2T(k)+3) < k/C$, where $C$ is as in the definition of the neighbourhood-hereditary graph class; such $k$ exists since we have $T(n) = o\bigl(f^{-1}(n)\bigr)$. The algorithm $A'$ now functions as follows on an input graph $G \in \mathcal{G}$:
\begin{enumerate}[noitemsep]
	\item Find a distance-$(2T(k)+3)$ colouring with $f(2T(k)+3)+1 \le k$ colours. This can be done in $O(\log^* n)$ rounds by simulating a $(\Delta + 1)$-colouring algorithm in the power graph $G^{(2T(k)+3)}$, since the maximum degree in $G^{(2T(k)+3)}$ is at most $k$.
	\item Simulate $A$ on $G$ with implicit assumption that the instance size is $k$, using the colours given by (1) as unique identifiers; as the simulation runs in $T(k)$ rounds, nodes will not see any duplicate colours.
\end{enumerate}
Now consider any node $v \in V(G)$; we want to show that the labelling given to $N(v)$ by $A'$ is valid. Since $\mathcal{G}$ is neighbourhood-hereditary class, there is a graph $G' \in \mathcal{G}$ with $|V(G')| = k$ such that $N_{2T(k)+3}(v)$ is isomorphic to an induced subgraph of $G'$. Moreover, since no colour given by (1) occurs twice in $N_{2T(k)+3}(v)$, this colouring can be extended to a valid assignment of unique identifiers on $G'$. Since $A$ produces a valid output on $G'$, the output of $A'$ on $N(v)$ is also valid.
\end{proof}

\subsection{Sublinear complexity problems on general graphs}\label{app:root}

In this paper we have considered $n \times n$ grids. In this setting a global problem has $\Omega(\sqrt N)$ complexity where $N=n^2$ is the size of the input.
We define an \lcl{} problem with complexity $\Theta(\sqrt{n})$ where the input is a graph $G$ on $n$ vertices without any restrictions. The basic idea is that if $G$ is a grid, we force the corners to coordinate, and otherwise we allow the nodes to have any output. Before we define the problem, we introduce the following terms to describe nodes with different local neighbourhoods.
Any node whose $O(1)$ radius neighbourhood is not isomorphic to the neighbourhood of some node in a grid is said to be a {\em broken} node. Any other node is a {\em corner} node if it has degree $2$ and an {\em internal} node otherwise.

\paragraph{The corner coordination problem:}

\begin{itemize}
    \item If there are no corner nodes, then nodes can output anything.
    \item Otherwise, nodes must direct some (or possibly none) of their incident edges according to the following rules:
    \begin{enumerate}
        \item The set of directed edges forms a set of directed pseudotrees: each node must have at most one outgoing edge in each tree.
        \item The pseudotrees have a consistent orientation: a path in one of the pseudotrees can cross each row and column at most once.
        \item Only corner nodes can be roots or leaves of the pseudotrees.
        \item Pseudotrees can only meet at corners or broken nodes.
        \item Each corner must be the root or leaf of at least one pseudotree.
    \end{enumerate}
\end{itemize}

In order to make this a locally checkable labelling problem, the output of a node $v$ should be a (possibly empty) set of labels for its incident edges, which must include an indication of the forbidden rows and columns for any pseudotree containing the labelled edge. This can be achieved in the following manner. Suppose $v$ needs to direct the edge $e=vw$ toward $w$. Then $v$ can include in the label the identifier of one of its neighbours $v'$ that is a forbidden neighbour of a successor of $w$ in the pseudotree. In other words, if $w$ wants to direct the edge $e'=wx$ towards $x$, then $x$ cannot be adjacent to $u$. Furthermore, if $w$ does direct $e'$ toward $x$, it will include a the identifier of $w'$ in the label and this must be consistent with the choice of $v'$: $w'$ should be adjacent to $v'$ or it should be adjacent to $v$. This is certainly locally checkable, but we need the set of labels to be of constant size. We can derive a port numbering from the unique identifiers of the neighbourhood of $v$. Instead of the identifier, $v$ can include the port number of $v'$ in the label of $e$. The degree of $v$ is at most $4$ and so the set of labels is of constant size.

\begin{theorem}
The corner coordination problem has complexity $\Theta(\sqrt{n})$.
\end{theorem}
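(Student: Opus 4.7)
The plan is to establish matching bounds $\Omega(\sqrt{n})$ and $O(\sqrt{n})$ by exploiting that the four corners of an $n$-vertex rectangular grid can lie at graph distance up to $\Theta(\sqrt{n})$ from one another (tight on the balanced $\sqrt{n}\times\sqrt{n}$ grid), combined with the elementary observation that every $n$-vertex rectangular grid has at least one side of length at most $\sqrt{n}$.

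For the lower bound I would work with the $\sqrt{n}\times\sqrt{n}$ grid $G$, whose only degree-$2$ vertices are the four geometric corners and which contains no broken nodes. A brief structural argument first shows that every pseudotree in $G$ is an acyclic directed tree, since rule~(2) forbids revisiting a row or column and therefore precludes cycles. Rules~(3) and~(5), together with rule~(4) (pseudotrees cannot meet at internal nodes when no broken nodes exist), force every pseudotree with at least one edge to have two distinct corners among its endpoints. Since no two corners of $G$ are adjacent when $\sqrt{n}\ge 3$, any such pseudotree contains a directed path of length at least $\sqrt{n}-1$ connecting two corners that play asymmetric roles (root versus leaf). A standard indistinguishability argument then yields the bound: if an algorithm ran in $T(n)=o(\sqrt{n})$ rounds, I would alter identifiers around exactly one of these two corners in a way that preserves the $T(n)$-view of every node outside that corner's $T(n)$-ball, while forcing the orientation of the middle edge of the long pseudotree path to be reversed in the two instances, a contradiction.

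For the upper bound, the algorithm has each node gather its radius-$\lceil\sqrt{n}\rceil$ neighbourhood and then act locally. If this view contains a broken node, or the node can verify that its whole component has no corner, the node outputs the empty set of directed edges, which is trivially locally valid. Otherwise the component is locally certifiable as a rectangular $a\times b$ grid with $ab\le n$, so $\min(a,b)\le\sqrt{n}$; the algorithm installs two vertex-disjoint pseudotrees, each a monotone directed path along one of the two short sides of the rectangle, connecting the two corners at its ends, with the root chosen as the endpoint of smaller identifier. Every node on a short side is within $\min(a,b)-1\le\sqrt{n}-1$ of both corners of that side and can therefore locally compute its edge direction together with the port-numbered forbidden-neighbour labels required by the output encoding; nodes not on a short side simply output the empty direction set.

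The main obstacle I would expect is handling inputs that only locally resemble a grid, in particular maximal grid-like regions bounded by broken nodes. Rule~(4) permits pseudotrees to meet at broken nodes, which gives exactly the flexibility needed to treat each such region independently; the remaining work is to verify that every such region has total size at most $n$, inherits a short side of length at most $\sqrt{n}$, and that the port-numbered forbidden-neighbour labels propagate consistently across edges. Each of these is a constant-radius local check, so the bulk of the argument is a careful but routine case analysis rather than a genuinely new idea.
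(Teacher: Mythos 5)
Your upper bound is in the same spirit as the paper's (the paper counts $\binom{r+2}{2}$ nodes in the $r$-ball of a corner that has seen no corner or broken node, so within $2\sqrt{n}$ rounds every corner finds a partner; your ``one side has length at most $\sqrt n$'' observation is essentially the same fact), and that part is acceptable modulo the usual care with rule~(5) for corners that only see broken nodes. The genuine gap is in your lower bound. Your plan is internally inconsistent: you propose to alter identifiers \emph{only} inside the $T(n)$-ball of one corner, so that ``the $T(n)$-view of every node outside that corner's $T(n)$-ball'' is preserved, and yet you also want ``the orientation of the middle edge of the long pseudotree path to be reversed.'' The middle edge lies well outside that ball, its view is identical in the two instances, and the algorithm is deterministic, so its output cannot change. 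Moreover, even granting that the corner's own output changes, no contradiction follows: the unchanged path segment in the middle still runs into the corner's ball, rule~(3) forces that segment to terminate at the (only nearby) corner, and nothing prevents the algorithm from correctly stitching the new local output to the old path. So an identifier perturbation localised at one corner simply cannot separate the two instances.

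What the paper does instead --- and what your argument is missing --- is a \emph{structural} modification far from both corners: it takes the ball $B_\infty(v_{0,m/2},\epsilon m)$ around the midpoint of the side carrying the long path and rotates it about $v_{0,m/2}$. This changes the views (and hence the outputs) of the nodes near the midpoint, so that by rule~(2) the path through $v_{0,m/2}$ now points \emph{towards} $v_{0,0}$, while the $T(n)$-views of all four corners are untouched, so $v_{0,0}$ must still be a root and $v_{0,m}$ still a leaf. The contradiction then comes from a planarity/crossing argument: $v_{0,0}$'s pseudotree must now leave the top row and terminate at $v_{m,m}$, $v_{0,m}$'s must originate at $v_{m,0}$, and rules~(2)--(4) forbid the resulting crossing. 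If you want to avoid the rotation trick, you would need some other modification whose effect is felt in the \emph{middle} of the path (e.g.\ a cut-and-paste of identifier blocks along the row, as in the classical cycle-orientation lower bound), not at a corner; as written, your argument does not establish $\Omega(\sqrt n)$.
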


\begin{proof}
First we prove a lower bound. Suppose there exists an algorithm $A$ for the corner coordination problem that runs in time $T(n)=o(\sqrt{n})$. Let $G$ be a $2$-dimensional grid on $N=m^2$ vertices. We refer to the nodes of $G$ by coordinates $i,j$ in the obvious way so that $v_{0,0}$ has degree 2. This is a convention to aid our discussion; the nodes are not aware of these coordinates. Consider the set of pseudotrees in the output of $A(G)$. Since every corner node must be the root or leaf of at least one pseudotree, there must be a pseudotree $T$ that consists of a path along one side of the grid. Without loss of generality $T$ is the directed path $(v_{0,0},v_{0,1},\ldots,v_{0,m})$. We obtain a new input graph $G'$ from $G$ by taking the ball $B_\infty (v_{0,\frac{m}{2}},\epsilon m)$ for some sufficiently small constant $\epsilon$ and rotating it about $v_{0,\frac{m}{2}}$. Now consider the output of $A(G')$. Since the $T(n)$-radius neighbourhood of $v_{0,0}$ in $G$ is isomorphic to the $T(n)$-radius neighbourhood of $v_{0,0}$ in $G'$, there is a tree $T_{0,0}$ in the output of $A(G')$ whose root is $v_{0,0}$. Similarly, there is a tree $T_{0,m}$ whose leaf is $v_{0,m}$. There is also a pseudotree $T'$ with a path going through $v_{0,\frac{m}{2}}$ but in $G'$ this path goes the ``wrong" way in the sense that it points {\em towards} $v_{0,0}$. This forces $T_{0,0}$ and $T_{0,m}$ to be different trees, since $T_{0,0}$ must eventually include a vertex $v_{1,j}$ and therefore its leaf can not be $T_{0,m}$. Furthermore, since the first edge of $T_{0,0}$ is $v_{0,0}v_{0,1}$, the leaf of $T_{0,0}$ cannot be $v_{m,0}$. So the leaf of $T_{0,0}$ must be $T_{m,m}$ and by a similar argument, the root of $T_{0,m}$ is $T_{m,0}$. But this is a contradiction as the pseudotrees cannot cross.

Now we show that the problem can be solved in $2\sqrt{n}$ rounds. It is enough to show that in $2\sqrt{n}$ rounds, a corner node $v$ sees a corner node or a broken node. Suppose that $v$ has not seen a corner or broken node after $r$ rounds.

\begin{proposition}
The $r$-radius neighbourhood of a corner node that has not seen a corner or broken node contains $\binom{r+2}{2}$ nodes.

\end{proposition}

The number of nodes at distance exactly $k$ from the corner is at most the number of ordered pairs of non negative integers that sum to $k$, which is $k+1$. It is well known that $\sum_{k=0}^{r} {k+1} = \binom{r+2}{2}$. Now when $r=2\sqrt{n}$, the number of vertices that $v$ has seen is greater than $n$. So in $2\sqrt{n}$ rounds, $v$ must see a corner node or a broken node. This completes the proof.
\end{proof}

\end{document}